\newif\ifCommentsOn
\definecolor{fxwarning}{rgb}{0.8000,0.0000,0.0000} 
\renewcommand{\comment}[1]{ \textcolor{red}{\bf{[[}}\fxwarning{#1}\textcolor{red}{\bf{]]}} }
\renewcommand{\comment}[1]{}
\newcommand{\hide}[1]{}
\renewcommand{\hat}[1]{\widehat{#1}}
\renewcommand{\tilde}[1]{\widetilde{#1}}
\newcommand{\R}{\mathbb{R}}
\newcommand{\N}{\mathbb{N}}
\newcommand{\Z}{\mathbb{Z}}
\newcommand{\Ltwo}[1]{\big\lVert #1 \big\rVert}
\newcommand{\ltwo}[1]{\lVert #1 \rVert}
\newcommand{\divconst}{{C}}  
\newcommand{\conv}{\mathrm{Conv}}
\newcommand{\maxpert}{\theta}
\newcommand{\ptraj}{\tilde{x}}
\newcommand{\ball}{\mathcal{B}}
\newcommand{\pert}{U}    
\newcommand{\ff}{\zeta}  
\newcommand{\phicor}{{\hat{\boldsymbol{\phi}}}}
\newcommand{\rcor}{\hat{\mathbf{r}}}
\newcommand{\zcor}{{\hat{\mathbf{z}}}}
\newcommand{\xcor}{{\hat{\mathbf{x}}}}
\newcommand{\ycor}{{\hat{\mathbf{y}}}}
\newcommand{\measurable}{{\cal{L}}_{n}^1}
\newcommand{\Prt}{U}
\newcommand{\tf}{ { \lfloor t\rfloor } }
\newcommand{\sigmax}{{\sigma_{\max{}}}}
\newcommand{\sigmin}{{\sigma_{\min{}}}}
\newcommand{\mumax}{{\mu_{\max{}}}}
\newcommand{\dtraj}{z}
\newcommand{\dpert}{V}
\newcommand{\ddrift}{\mu}
\newcommand{\drift}{\xi}
\newcommand{\grad}{\nabla}
\newcommand{\yt}{{\tilde{y}}}
\newcommand{\pp}{{U'}}
\newcommand{\lip}{\gamma}
\newcommand{\ord}{\mathrm{Ord}}
\newcommand{\alphamax}{{\alpha^*}}
\newcommand{\xt}{\ptraj}
\newcommand{\alphah}{\hat{\alpha}}
\newcommand{\real}{\mathrm{Re}}
\long\def\oli#1{{\color[rgb]{0,.8,.8}[#1]}}
\newcommand\redsout{\bgroup\markoverwith{\textcolor{red}{\rule[0.5ex]{2pt}{.5pt}}}\ULon}
\renewcommand{\oli}[1]{}
\theoremstyle{plain}
\newtheorem{clm}{Claim}
\theoremstyle{definition}
\newtheoremstyle{examples}
  {3pt}
  {3pt}
  {\addtolength{\@totalleftmargin}{3em}
   \addtolength{\linewidth}{-5em}
   \parshape 1 2.5em \linewidth}
  {}
  {\bfseries}
  {.}
  {.5em}
  {}
\theoremstyle{examples}
\begin{document}

\title{When do Trajectories have Bounded Sensitivity to Cumulative Perturbations?}
\author{ 
	Arsalan Sharifnassab 
	and
	S. Jamaloddin Golestani
	\vspace*{-30pt}
}
\institute{A. Sharifnassab \at
	Department of Electrical Engineering, Sharif University of Technology, Tehran, Iran \\
	\email{a.sharifnassab@gmail.com}           
	\and
	S. J. Golestani \at
	Department of Electrical Engineering, Sharif University of Technology, Tehran, Iran \\
	\email{golestani@sharif.edu}  
}

\date{Received: date / Accepted: date}

\maketitle

\begin{abstract}
We investigate sensitivity to cumulative perturbations for a few  dynamical system classes of practical interest.
A system is said to have bounded sensitivity to cumulative perturbations (bounded sensitivity, for short) if  an additive disturbance leads to a change in the state trajectory that is bounded by a constant multiple of the size of the cumulative disturbance. 
As our main result, we show that there exist dynamical systems in the form of (negative) gradient field  of a convex function that have unbounded sensitivity. 
We show that the result holds even when the convex potential function is piecewise linear.
This resolves a question raised in \cite{AlTG19sensitivity}, wherein it was shown that
the (negative) (sub)gradient field  of a piecewise linear and convex function  has  bounded sensitivity if the number of linear pieces is finite. 
Our results establish that the finiteness assumption is indeed necessary.

Among our other results, we 
provide a necessary and sufficient condition for a linear dynamical system to have bounded sensitivity to cumulative perturbations. 
We also establish that the bounded sensitivity property is preserved, when a dynamical system with bounded sensitivity undergoes certain transformations.
These transformations include convolution, time discretization, and spreading of a system (a transformation that captures approximate solutions of a system).
\keywords{
	Sensitivity to cumulative perturbations,
	gradient field of convex function,
	linear dynamical system,
	additive disturbance.
}
\end{abstract}

\hide{
\begin{IEEEkeywords}
Sensitivity to cumulative perturbations,
perturbation,
additive disturbance,
linear dynamical system,
gradient field of a convex function.
\end{IEEEkeywords}
}

\section{\bf Introduction}
We study a property of dynamical systems, that when satisfied, provides a bound on sensitivity of the state trajectory  with respect to additive disturbances. 
Consider a dynamical system of the form
$$\dot{x}(t)=f\big(x(t)\big), $$
and its perturbed counterpart 
\begin{equation} \label{eq:dt3}
\frac{d}{dt}\xt(t)=f\big(\xt(t)\big)+u(t).
\end{equation}
Here, $x(t)$ and $u(t)$ take values in $\R^n$.
In order to motivate our results, lets temporarily assume that the system is nonexpansive, in the sense that for any  solution $y(\cdot)$ of $\dot{y}(t)=f\big( y(t)\big)$, and any pair of times $t_1$ and $t_2$ with $t_2\ge t_1$,
$$\| y(t_2)-x(t_2)\| \,\leq\, \| y(t_1)-x(t_1)\|,$$
for a given norm $\|\cdot\|$. 
In this case, assuming the same initial conditions, $\xt (0)= x(0)$, a simple integration yields a bound of the form
\begin{equation} \label{eq:dt1}
\big\| \xt (t)- x(t)\big\| \,\leq\, 
\int_{0}^{t} \big\| u(s)\big\|\,ds.
\end{equation}
However, our goal is to derive stronger bounds, of the form
\begin{equation} \label{eq:dt2}
\big\| \xt (t)- x(t)\big\| \,\leq\, 
C\, \sup_{\tau<t} \Big\|\int_{0}^{\tau} u(s)\, ds \Big\|,
\end{equation} 
for some constant $C>0$ independent of $u(\cdot)$. Property \eqref{eq:dt2} is referred to as bounded sensitivity to cumulative perturbations.

A bound of the form \eqref{eq:dt2} is not valid in general. 
\hide{
In Section \ref{sec:example}, we give examples of nonexpansive systems, and examples of gradient fields of convex functions, for which no constant $C$  satisfies \eqref{eq:dt2}.
On the other hand, there are (linear) systems that are not nonexpansive, yet admit a bound of the form \eqref{eq:dt2}, cf.~ Section~\ref{sec:linear}.
}
However, it is shown in \cite{AlTG19sensitivity} that a bound of type \eqref{eq:dt2} is valid for the class of Finitely Piecewise Constant Subgradient (FPCS) systems.
An FPCS system is, by definition, the (negative) gradient field of a piecewise linear and convex function with finitely many pieces.
It is shown in \cite{AlTG19stokes} that FPCS systems actually contain the seemingly larger class of nonexpansive \emph{finite-partition} systems.
Finite-partition systems are dynamical systems that have a constant drift over  each of the finitely many regions that form a partition of $\R^n$.
Such systems are common in control, when dealing with hybrid systems with a finite
set of control actions, that can be applied in certain parts of the state space.
Examples include communication networks \cite{TassE92,Neel10}, processing systems \cite{RossBM15}, manufacturing systems and inventory management \cite{PerkS98,Meyn08}, etc.

A bound of type \eqref{eq:dt2} is particularly useful in dealing with systems driven by stochastic noise.
Under  usual 
probabilistic assumptions, $\sup_{\tau<t}\big\|\int_{0}^{\tau} u(s)\,ds\big\|$ 
roughly grows  as $\sqrt{t}$, whereas  
$\int_{0}^{t} \|u(\tau)\| \,d\tau$ 
grows at the rate of $t$, 
with high probability.
See  \cite{AlTG19ssc} for applications of  this bound to the analysis of the celebrated Max-Weight policy for real-time job scheduling \cite{TassE92}.

\hide{
As such, for a perturbed system with stochastic noise $u(\cdot)$, a bound of type \eqref{eq:dt2}, if valid, provides a decent bound on the effect of noise on the state trajectory.
A notable example is the dynamics of the celebrated Max-Weight algorithm for real-time job scheduling \cite{TassE92}.
Relying on the bounded sensitivity of FPCS systems \cite{AlTG19sensitivity}, we establish in \cite{AlTG19ssc}  a similar bound for the dynamics of the Max-Weight policy.
There, we show that the queue lengths of the stochastic system under Max-Weight policy stay decently close to the \emph{fluid model solutions}.
Such strong sensitivity bounds enabled us to partially establish in \cite{AlTG19ssc} an open conjecture from \cite{ShahW12} on \emph{state space collapse}, 
as well as an open conjecture from \cite{MarkMT17} on  stability in presence of heavy-tailed traffics, as will  be reported elsewhere.
}

\hide{
A popular approach for the analysis of control systems is via the use of fluid models\footnote{For 
a discrete time system $x^+=x+f(x)$ defined over the entire $\R^n$,  the continuous time system $\dot{x}=f(x)$ is referred to  as a corresponding fluid model.}.
In order to avoid complications of stochastic or perturbed discrete time systems, one can approximate the discrete time system with its fluid model, 
analyse the property of interest for the fluid model, and use those results to infer properties such as stability \cite{Dai95,DaiM95,Gama00}
for the perturbed discrete time system.
Fluid models have been a major tool not only for the analysis of system properties but also for  synthesis of control algorithms. 
Again, to avoid the computational intractability of solving dynamic programming \cite{Stid85,MaCM10} for optimal control decisions in stochastic systems,
one often approximates the stochastic system with its fluid model \cite{ChenDM04,BertNP15},
solves for the optimal solutions of the fluid model \cite{Meyn97,Meyn05},
and translates these solutions via discrete review methods to use them for controlling the stochastic system \cite{Magl99,Magl00,FleiS05,Meyn08}.
In this framework, the quality of fluid model approximation,  captured in the sensitivity bounds, is central to the performance of the synthesized control methods.
}

In this paper, we investigate the extent to which a bound of type \eqref{eq:dt2} can or cannot generalize  to other classes of dynamical systems of practical interest.
We consider linear systems and  derive a necessary and sufficient condition for them to have bounded sensitivity. 
In particular, we show that a linear system admits a bound of the form \eqref{eq:dt2} if and only if it is stable and has no closed orbit.
More importantly, we show that the gradient field of a strictly convex function can have unbounded sensitivity.
In the same spirit, we provide an example of the gradient field of a piecewise linear convex function with infinitely many pieces, for which \eqref{eq:dt2} does not hold.
The two latter results are quite counter-intuitive: while the (negative) subgradient field of a piecewise linear convex function with finitely many pieces has bounded sensitivity,  the (negative) subgradient field of some continuously differentiable (or even infinitely piecewise linear) convex functions can have unbounded sensitivity.
These examples shed some light on the limitations of extending the sensitivity bound to generalizations of FPCS systems, and also on the inevitable complications of any proof for bounded sensitivity of these (FPCS) systems; cf. Section \ref{sec:discussion} for a detailed discussion.


We also study some transformations that when applied on a dynamical system with bounded sensitivity, preserve the bounded sensitivity property.
In particular, we show for any continuous time dynamical system with bounded sensitivity that its analogous discrete time system  also has bounded sensitivity.
We establish a similar result when the dynamical system is convolved by a kernel, and when the system is \emph{spread}, 
that is allowing for the trajectories to move along the drifts of nearby points. 
These results provide grounds on which the proofs of our main result (on unbounded sensitivity of the gradient field of a strictly convex function) relies.

A seemingly relevant literature is the \emph{input-to-state stability} \cite{JianW01,MarrAC02,Ange04,Sont08,Sont96,CaiT13,AngeSW00}.
As discussed in Section 1 of \cite{AlTG19sensitivity}, for a system with additive disturbance, $\dot{x}(t)=f\big(x(t)\big)+u(t)$, 
the input-to-state stability and a bound of the form \eqref{eq:dt2}
do not imply one another. 
We refer the reader to \cite{AlTG19sensitivity} for a comprehensive discussion on the relation of the sensitivity notion of type \eqref{eq:dt2} to the seemingly relevant literatures.

The rest of the paper is organized as follows.
We begin with formal definitions and preliminaries in Section~\ref{sec:pre}.
We then present our main results in Sections~\ref{sec:linear}, \ref{sec:main bis of systems}, and~\ref{sec:main operations}.
In Section~\ref{sec:linear} we give necessary and sufficient conditions for bounded sensitivity of linear systems.
In Section~\ref{sec:main bis of systems} we investigates sensitivity of gradient fields of convex functions, and provide examples of differentiable (as well as piecewise linear) convex functions whose subgradient fields have unbounded sensitivity.
In Section~\ref{sec:main operations}, we study transformations on dynamical systems that preserve boundedness of sensitivity,
and set the stage and provide the required machinery for the proofs of the results of Section~\ref{sec:main bis of systems}. 	
We give the proofs of our main results in Sections \ref{sec:proof th linear}, \ref{sec:proof th spread sys}, \ref{app:proof unbounded sensitivity of F}, \ref{app:pwc example}, and~\ref{sec:proof disc} while relegating some of the details to the appendix, for improved readability. 
Finally, we discuss our results as well as several open problems and  directions of future research in Section~\ref{sec:discussion}.



\medskip
\section{\bf Preliminaries} \label{sec:pre}

As in \cite{Stew11}, we  identify
a dynamical system with a set-valued function $F:\R^n\to 2^{\R^n}$
and the associated differential inclusion $\dot{x}(t)\in F(x(t))$. We start with a formal definition, which allows for the presence of perturbations.

\begin{definition}[Perturbed Trajectories]\label{def:integral pert traj} 
Consider a dynamical system $F:\R^n\to 2^{\R^n}$, and let $\Prt:\R\to\R^n$ be a right-continuous function, which we refer to as the \emph{perturbation}.
 Suppose that there exist measurable and integrable functions $\ptraj({\cdot})$ and $\ff({\cdot})$ of time that satisfy \begin{equation}\label{eq:def of integral pert}
\begin{split}
\ptraj(t) &= \int_0^t \ff(\tau)\,d\tau \,+\, \Prt(t),\qquad \forall\ t\ge 0,\\
\ff(t)&\in F\big(\ptraj(t)\big),\qquad \forall\ t\ge 0.
\end{split}
\end{equation}
We then call $\Prt$ the \emph{perturbation},
and such $\ptraj$ and $\ff$ are called a
\emph{perturbed trajectory}
and a \emph{perturbed drift},  respectively. In the special case where 
$U$ is  identically zero, we also refer to $\ptraj$  as an \emph{unperturbed trajectory}.

\end{definition}

We now define two notions of bounded sensitivity, the second of which implies the first.

\begin{definition}[Bounded  Sensitivity] \label{def:bpis}
A dynamical system $F(\cdot)$ is said to have  bounded sensitivity if there exists a constant $\divconst$ such that for any unperturbed trajectory $x(\cdot)$, any perturbation function $\pert(\cdot)$, 
and its corresponding  perturbed trajectory  $\ptraj(\cdot)$ initialized at $\ptraj(0)=x(0)$,
\begin{equation} \label{eq:bis}
\Ltwo{\ptraj(t)-x(t)} \,\le \, \divconst\, \sup_{\tau\le t} \Ltwo{\pert(\tau) }, \qquad \forall t\ge0.
\end{equation}
Further, if for any pair  $\pert_1(\cdot)$ and $\pert_1(\cdot)$ of perturbation functions and their corresponding  perturbed trajectories  $\ptraj_1(\cdot)$ and $\ptraj_2(\cdot)$, initialized at $\ptraj_1(0)=\ptraj_2(0)$,
\begin{equation} \label{eq:bpis}
\Ltwo{\ptraj_1(t)-\ptraj_2(t)} \,\le \, \divconst\, \sup_{\tau\le t} \Ltwo{\pert_1(\tau) -\pert_2(\tau)}, \qquad \forall t\ge0,
\end{equation}
then $F(\cdot)$ is said to have bounded sensitivity in strong sense. 
\end{definition}

A bound of type \eqref{eq:bpis} implies  the bound in \eqref{eq:bis}, by simply letting one of the perturbation functions equal to zero.

Throughout the paper we often 
assume existence of a constant $\lip$, for which 
\begin{equation} \label{eq:constant not to blow up}
\Ltwo{F(x)}\le \gamma\big(1+\lVert x \rVert_2\big),\qquad \forall x\in \R^n
\end{equation}
This assumption  is to prevent the solutions from blowing up in finite time.

\hide{
A dynamical system $F(\cdot)$ is called nonexpansive if for any pair  $x(t)$ and $y(t)$ of  trajectories, and for any time $t_1$ and any $t_2\ge t_1$,
\begin{equation}
\Ltwo{x(t_1)-y(t_1)}\le \Ltwo{x(t_2)-y(t_2)}.
\end{equation} 
}

We say that a dynamical system $F(\cdot)$ is a \emph{subgradient dynamical system} if there exists a convex function $\Phi(\cdot)$, such that for any $x\in\R^n$, $F(x)=-\partial\Phi(x)$, 
where $\partial\Phi(x)$ denotes the subdifferential of $\Phi$ at $x$.
If further, $\Phi(x)$ is of the form  
$$\Phi(x)=\max_{i}\big(-\mu_i^Tx+b_i\big),$$ 
for some $\mu_i\in\R^n$, $b_i\in\R$, and with $i$ ranging over a {\bf finite} set,
we say that $F$ is a \emph{Finitely Piecewise Constant Subgradient} (FPCS, for short) system.
It is shown in \cite{AlTG19sensitivity} that any FPCS system has  bounded sensitivity.

\begin{theorem}[Theorem 1 of \cite{AlTG19sensitivity}] \label{th:p1 sensitivity}
Every FPCS system has bounded sensitivity  in the sense of \eqref{eq:bis}.
\end{theorem}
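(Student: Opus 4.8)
The plan is to prove \eqref{eq:bis} for $F=-\partial\Phi$ with $\Phi(x)=\max_i(-\mu_i^Tx+b_i)$ over a finite index set, by combining three facts: the nonexpansiveness of subgradient flows, the role of $\Phi$ as a Lyapunov function whose decay rate is bounded below away from its minimum, and the combinatorial rigidity that a finite number of pieces forces on trajectories.

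First I would set up the polyhedral structure: the cells $\region_i=\{x:\Phi(x)=-\mu_i^Tx+b_i\}$ are finitely many polyhedra covering $\R^n$, with $F(x)=\{\mu_i\}$ on $\operatorname{int}\region_i$ and $F(x)=\conv\{\mu_j:x\in\region_j\}$ in general, so every drift vector has norm at most $\mumax:=\max_i\Ltwo{\mu_i}$. Monotonicity of $\partial\Phi$ gives $\frac{d}{dt}\Ltwo{x(t)-y(t)}^2\le0$ for any two unperturbed trajectories, i.e.\ the flow is nonexpansive (this already reproves the weak bound \eqref{eq:dt1}). Two reductions then follow. (i) Putting $y(t):=\ptraj(t)-\pert(t)$ so that $\dot y\in F(y+\pert)$ and $y(0)=x(0)-\pert(0)$, and letting $\bar x(\cdot)$ be the unperturbed trajectory with $\bar x(0)=y(0)$, nonexpansiveness gives $\Ltwo{\bar x(t)-x(t)}\le\Ltwo{\pert(0)}\le\sup_{\tau\le t}\Ltwo{\pert(\tau)}$; since also $\Ltwo{\ptraj(t)-y(t)}=\Ltwo{\pert(t)}$, it suffices to bound $\Ltwo{y(t)-\bar x(t)}$, i.e.\ to compare a perturbed trajectory with an unperturbed one \emph{issued from the same point}. (ii) Approximating $\pert$ by right-continuous step functions with finitely many jumps on each finite horizon, $\ptraj$ becomes a concatenation of finitely many unperturbed arcs joined by jumps; in particular $\ptraj$ and $\bar x$ are polygonal and cross the finitely many cell boundaries boundedly often.

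The obstacle is that monotonicity alone only yields $\frac{d}{dt}\Ltwo{y-\bar x}^2\le4\mumax\sup_\tau\Ltwo{\pert(\tau)}$, hence growth of order $\sqrt{t\,\sup_\tau\Ltwo{\pert(\tau)}}$, which is hopeless; this is exactly where finiteness of the pieces has to intervene, the bound being false in its absence. I would use $\Phi$ as a Lyapunov function: along the steepest-descent unperturbed flow $\frac{d}{dt}\Phi(\bar x(t))=-\operatorname{dist}(0,\partial\Phi(\bar x(t)))^2$, and since $\partial\Phi$ assumes only finitely many values (each a polytope), this is $\le-\eta^2$ for a constant $\eta>0$ whenever $\bar x(t)\notin M:=\{x:0\in\partial\Phi(x)\}$; a parallel Stieltjes/integration-by-parts estimate along the step-perturbed polygonal path gives $\Phi(\ptraj(t))\le\Phi(\ptraj(0))-\int_0^t\operatorname{dist}(0,\partial\Phi(\ptraj(s)))^2\,ds+c\,\mumax\sup_\tau\Ltwo{\pert(\tau)}$, the last term absorbing boundary and bounded-variation contributions (finite because $\ptraj$ is polygonal with boundedly many turns). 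With the normalization $\ptraj(0)=x(0)=\pert(0)$ we have $\Ltwo{\ptraj(0)}\le\sup_\tau\Ltwo{\pert(\tau)}$, so after translating the origin into $M$ (when $\Phi$ is bounded below; the case $\min\Phi=-\infty$, in which the $\mu_i$ share a common open half-space, is disposed of by a separate, easier argument) one gets $\Phi(\ptraj(0))-\min\Phi=O(\sup_\tau\Ltwo{\pert(\tau)})$, so both $\Phi(\ptraj(t))$ and $\Phi(\bar x(t))$ reach an $O(\sup_\tau\Ltwo{\pert(\tau)})$-neighborhood of $\min\Phi$ within a controlled amount of descent, after which the two trajectories sit in an $O(\sup_\tau\Ltwo{\pert(\tau)})$-neighborhood of $M$.

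The principal remaining difficulty, which I expect to be the crux, is to control $\Ltwo{y(t)-\bar x(t)}$ \emph{while the two trajectories are still descending}, rather than only once both have entered an $O(\sup_\tau\Ltwo{\pert(\tau)})$-neighborhood of $M$. Here I would argue cell by cell: inside a single $\region_i$ the drift is the constant $\mu_i$, so the separation $y-\bar x$ evolves affinely there, and a transverse excursion of size $s$ beyond $O(\sup_\tau\Ltwo{\pert(\tau)})$ can be sustained only at the price of a strictly-bounded-below amount of $\Phi$-descent — this is the point at which finiteness enters, through a uniform lower bound on the angle that each cell face makes with the drift directions adjacent to it, which prevents the sideways effect of the perturbation from being amplified as the trajectory bends between cells. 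Summing these $\Phi$-decrements over the boundedly-many cell visits and telescoping against the total available descent $\Phi(\ptraj(0))-\min\Phi=O(\sup_\tau\Ltwo{\pert(\tau)})$ then bounds the total transverse displacement; substituting this into the two reductions above yields \eqref{eq:bis} with a constant $\divconst$ depending only on $\Phi$ (via $\mumax$, $\eta$, the face angles, and the number of pieces).
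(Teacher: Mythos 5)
This theorem is imported verbatim from \cite{AlTG19sensitivity}; the present paper contains no proof of it, so there is nothing internal to compare your argument against, and I can only assess the proposal on its own terms. As written it has genuine gaps at exactly the points you flag as the crux. The most serious one is the budget in your final telescoping step. The bound \eqref{eq:bis} must hold for an \emph{arbitrary} common initial condition $x(0)=\ptraj(0)$ (Definition~\ref{def:bpis}), so the total available descent $\Phi(\ptraj(0))-\min\Phi$ is governed by how far the starting point is from the minimizing set $M$ and is in no way $O(\sup_\tau\lVert \pert(\tau)\rVert)$; your identification of $\ptraj(0)$ with $\pert(0)$ reads Definition~\ref{def:integral pert traj} too literally and, even then, only relocates the problem, since the origin need not lie near $M$. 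Consequently, ``each unit of transverse excursion costs a fixed amount of descent, and the total descent is $O(\sup\lVert \pert\rVert)$'' cannot close: with a large initial potential gap your own accounting permits a transverse displacement proportional to that gap rather than to the perturbation.

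The second gap is that the cost claim itself --- that a transverse excursion of size $s$ beyond $O(\sup\lVert\pert\rVert)$ forces a uniformly positive amount of $\Phi$-descent, justified by ``a uniform lower bound on the angle that each cell face makes with the adjacent drifts'' --- is asserted, not proved, and it fails where it matters: along sliding modes on faces adjacent to $M$ the state can travel a macroscopic distance while $\mathrm{dist}(0,\partial\Phi)$ and hence the descent rate are arbitrarily small or zero, so there is no uniform $\eta>0$ and no uniform exchange rate between displacement and descent. Relatedly, your claim that the perturbed trajectory crosses cell boundaries ``boundedly often'' is unjustified: the number of crossings grows with the number of steps in your step-function approximation of $\pert$, and your error term $c\,\mumax\sup_\tau\lVert\pert(\tau)\rVert$ is obtained by summing boundary contributions over precisely these crossings, so it is not uniform in the approximation. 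The discussion in Sections~\ref{sec:main bis of systems} and~\ref{sec:discussion} of this paper (the counterexamples with infinitely many pieces, and the remark on the ``inevitable complications of any proof'' of Theorem~\ref{th:p1 sensitivity}) indicates that finiteness must enter through a global combinatorial mechanism (e.g., an induction on the number of distinct gradient values) rather than through a pointwise angle or Lyapunov-rate bound; a purely local argument of the kind you sketch would apply equally to the infinite-piece example of Theorem~\ref{th:pwc example}, which is false.
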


In the rest of this section we briefly discuss quasi-convexity.
A function $f:\R^n\to\R$ is said to be \emph{quasi-convex} if all of its sub-level-sets are convex sets. 
Equivalently,  for any $x,y\in \R^n$ and any $\lambda\in (0,1)$,
\begin{equation} \label{eq:def of quasi convexity}
f\big(\lambda x+(1-\lambda)y    \big) \le \max\big( f(x),f(y)  \big).
\end{equation}
It further,  for any $x,y\in \R^n$ and any $\lambda\in (0,1)$, \eqref{eq:def of quasi convexity} holds with strict inequality,
then  $f$ is said to be \emph{strictly quasi-convex}.
Under some mild assumptions, one can \emph{convexify} any strictly quasi-convex function \cite{ConnR12}:
\begin{lemma}[Corollary~1 of \cite{ConnR12}] \label{lem:convexification}
For any continuously twice differentiable and strictly quasi-convex function $f:\R^n\to \R$ with compact level-sets, 
there exists an increasing and continuously twice differentiable function $h:\R\to\R$ such that $h\circ f$ is strictly convex.
\end{lemma}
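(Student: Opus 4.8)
The plan is to reduce the claim to a pointwise inequality for the Hessian of $h\circ f$ and then to build $h$ by integrating a suitable ``amplification rate'' $\phi=(\log h')'$.

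\textbf{Step 1 (geometry of $f$).} I would first extract the structural picture forced by the hypotheses. Since $f$ is strictly quasi-convex with compact level sets, it attains its minimum at a unique point $x^{*}$, with value $m:=f(x^{*})$; for every $t>m$ the sublevel set $S_{t}:=\{f\le t\}$ is a compact convex body with nonempty interior, and, the quasi-convexity being \emph{strict}, $S_{t}$ is a \emph{strictly} convex body (its boundary contains no segment). I would also use the second-order characterization of quasi-convexity: at any $x$ with $\nabla f(x)\neq 0$ the restriction of $\nabla^{2}f(x)$ to the tangent hyperplane $T_{x}:=\{v:\ v^{T}\nabla f(x)=0\}$ is positive semidefinite, and $\nabla f\neq 0$ on every level hypersurface $\{f=t\}$, $t>m$ (the absence of stationary points other than $x^{*}$ is itself a point that needs care; see below).

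\textbf{Step 2 (composition formula and choice of $h$).} For $C^{2}$ increasing $h$,
\[
\nabla^{2}(h\circ f)(x)=h'\!\big(f(x)\big)\,\nabla^{2}f(x)+h''\!\big(f(x)\big)\,\nabla f(x)\,\nabla f(x)^{T}.
\]
Writing $h'(t)=\exp\!\big(\int_{m}^{t}\phi(s)\,ds\big)$ with $\phi$ continuous --- which makes $h'>0$ automatic and gives $h''/h'=\phi$ --- positivity $\nabla^{2}(h\circ f)(x)\succeq 0$ at a point with $\nabla f(x)\neq 0$ is equivalent, after dividing by $h'(f(x))>0$, to
\[
v^{T}\nabla^{2}f(x)\,v+\phi\!\big(f(x)\big)\big(v^{T}\nabla f(x)\big)^{2}\ \ge\ 0\qquad\text{for every }v.
\]
On $T_{x}$ this holds by Step 1, independently of $\phi$; for $v\notin T_{x}$ one normalizes $v^{T}\nabla f(x)=1$ and minimizes $v^{T}\nabla^{2}f(x)\,v$ over the affine slice $\{v:\ v^{T}\nabla f(x)=1\}$, whose recession space is $T_{x}$. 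Because the tangential Hessian is positive semidefinite (and by the $C^{2}$ characterization of quasi-convexity, which also supplies the mild compatibility between the mixed and the tangential second derivatives needed here), that minimum equals some finite number $-\phi^{*}(x)$, and then the displayed inequality holds at $x$ as soon as $\phi(f(x))\ge\phi^{*}(x)$. By compactness of $\{f=t\}$ and continuity of the second derivatives of $f$, the function $\Phi^{*}(t):=\sup_{x:\,f(x)=t}\phi^{*}(x)$ is finite and locally bounded; I would pick a continuous $\phi$ dominating it and set $h(t):=\int_{m}^{t}\exp\!\big(\int_{m}^{s}\phi(u)\,du\big)\,ds$. Then $h$ is $C^{2}$, strictly increasing, and $\nabla^{2}(h\circ f)\succeq 0$ everywhere --- at $x^{*}$ directly, since there the second term vanishes and $\nabla^{2}f(x^{*})\succeq 0$ --- so $h\circ f$ is convex.

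\textbf{Step 3 (from convex to strictly convex).} A convex function is strictly convex precisely when it is affine on no segment. On a segment on which $h\circ f$ were affine in a direction $v$ one would have $v^{T}\nabla^{2}(h\circ f)\equiv 0$ along it. But, combining Step 2 with the strict convexity of the bodies $S_{t}$, $\nabla^{2}(h\circ f)$ fails to be positive definite only where the tangential Hessian of $f$ is degenerate, and for a strictly quasi-convex $C^{2}$ function with compact level sets this degeneracy set is nowhere dense (an open set of tangential degeneracy would yield a ruled --- hence not strictly convex --- portion of a level hypersurface). Hence every segment meets the dense open set on which $\nabla^{2}(h\circ f)\succ 0$, which excludes affine pieces; therefore $h\circ f$ is strictly convex, and the lemma follows.

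\textbf{Main obstacle.} The genuinely delicate points are the two caveats above. First, the step from $\succeq 0$ to strict convexity is not a Hessian-positivity statement: for a merely strictly quasi-convex $C^{2}$ function the tangential Hessian can degenerate at points of a level set (for instance $f(x,y)=x^{4}+y^{4}$ at $(1,0)$), so $\nabla^{2}(h\circ f)$ need not be positive definite anywhere on that level set, whatever $h$ one picks; ``strictly convex'' must be read in the functional sense and established from the strict convexity of the sublevel bodies together with a density argument, and this is where most of the work sits. Second, a strictly quasi-convex $C^{2}$ function is a priori allowed to have stationary points other than $x^{*}$ --- witness $x\mapsto x^{3}$, which is strictly quasi-convex with compact (singleton) level sets yet is genuinely not convexifiable --- so one must invoke the finer regularity in the hypotheses of \cite{ConnR12}, or an additional argument, to rule this out; I would flag this as the one place where more than the bare definitions recalled above is required.
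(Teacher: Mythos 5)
First, a point of comparison: the paper does not prove this lemma at all --- it is imported verbatim as Corollary~1 of \cite{ConnR12}, so there is no in-paper argument to measure yours against. Your strategy (set $h'(t)=\exp\int\phi$ and choose the amplification rate $\phi$ to dominate the ``nonconvexity ratio'' $\phi^{*}$) is indeed the standard route in the convexifiability literature going back to Fenchel's level-set problem, so the overall architecture is the right one. You also correctly flag that the hypothesis must be read as compactness of sublevel sets (your $x^{3}$ remark), which is a genuine and worthwhile observation.

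However, two steps as written would fail. In Step~2, the finiteness of $\phi^{*}(x)=-\inf\{v^{T}\nabla^{2}f(x)v:\ v^{T}\nabla f(x)=1\}$ does \emph{not} follow from positive semidefiniteness of the tangential Hessian: if $w$ is tangential with $w^{T}\nabla^{2}f(x)\,w=0$ but $w^{T}\nabla^{2}f(x)\,\nabla f(x)\neq 0$, then along $v_{0}+sw$ the quadratic form is affine and unbounded below, so the infimum is $-\infty$. (Take $H=\bigl(\begin{smallmatrix}0&1\\1&0\end{smallmatrix}\bigr)$ with $\nabla f=e_{2}$ for a toy instance.) Ruling this out, and controlling $\Phi^{*}(t)$ near the minimum value where $\nabla f\to 0$ and the normalizing slice recedes to infinity (one needs $\phi$ \emph{integrable} at $t=m$ for $h'$ to exist and be finite), is precisely the content of the cited result; the parenthetical appeal to a ``mild compatibility'' of mixed second derivatives is the entire difficulty, not a routine remark. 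In Step~3, the density argument is not valid: a set that is nowhere dense can still contain line segments (a hyperplane is nowhere dense in $\R^{n}$), and a dense open set need not meet a given segment, so ``every segment meets the set where $\nabla^{2}(h\circ f)\succ 0$'' does not follow; moreover the claim that an open set of tangential degeneracy forces a ruled (segment-containing) piece of a level hypersurface is asserted, not proved, and is false for second fundamental forms that merely have a nontrivial kernel without an integrable flat direction. Strict quasi-convexity does dispose of the case where $h\circ f$ is \emph{constant} on a segment, but the nonconstant-affine case needs a separate argument that your sketch does not supply. In short: right blueprint, but the two load-bearing steps are exactly the ones left open.
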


\hide{
The next definition captures the existence and uniqueness of perturbed trajectories in a  dynamical system.
\begin{definition}[Solvable Dynamical System] \label{def:completeness}
We say that a dynamical system is \emph{solvable}, if for any perturbation function $\pert(t)\in\measurable$ and for any initial condition $x(0)\in\R^n$, the  differential inclusion in (\ref{eq:pert traj}) has a solution. Furthermore, if such a solution is unique, then the dynamical system is called \emph{uniquely complete}.
\end{definition}
}

Finally, we denote by  $\R_+$ and $\Z_+$   the sets of non-negative real numbers and non-negative integers, respectively.


\section{\bf Sensitivity of Linear Systems} \label{sec:linear}
In this section we present a necessary and sufficient condition for linear dynamical systems to have bounded sensitivity. 
A linear dynamical system is a system of the form $\dot{x}=Ax$, defined in terms of a square matrix $A$.
Before going over our result for linear systems, we define a property for general dynamical systems.
\begin{definition}[Stable and Orbit-Free Systems]
A dynamical system is said to be \emph{stable} if every unperturbed trajectory stays in a bounded region, and it is \emph{orbit-free} if no unperturbed trajectory is a periodic orbit.
We use the shorthand \emph{SOF} for a stable and orbit-free system.
\end{definition}
The following lemma provides well-known facts on the stability of linear systems.
\begin{lemma}\label{lem:SOF linear}
A linear  system $\dot{x}=Ax$ is SOF if and only if its eigenvalues are either zero or have negative real parts, and the multiplicity of the zero eigenvalue equals the demansion of the associated eigenspace.
\end{lemma}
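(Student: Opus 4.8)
The plan is to characterize SOF for $\dot x = Ax$ via the Jordan form of $A$, treating the stability requirement and the orbit-free requirement separately and then combining them. First I would recall that the general solution is $x(t) = e^{At}x(0)$, and that in a basis bringing $A$ to Jordan form, each Jordan block of size $m$ associated with an eigenvalue $\lambda$ contributes terms of the form $t^k e^{\lambda t}$, $0 \le k \le m-1$, to the coordinates of the trajectory. With $\lambda = a + ib$, this is $t^k e^{at}(\cos bt + i\sin bt)$ in real form. So the behavior of every trajectory is governed entirely by which $(\lambda, m)$ pairs occur.

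For the stability direction, I would argue: a trajectory stays bounded for every initial condition if and only if every such term $t^k e^{at}$ stays bounded. If some eigenvalue has $\real(\lambda) > 0$, the term $e^{at}$ blows up, so stability fails. If $\real(\lambda) = 0$ but the corresponding Jordan block has size $m \ge 2$, then the term $t e^{i b t}$ has modulus $t \to \infty$, so stability again fails; this is exactly the statement that the multiplicity of the zero eigenvalue (more generally any purely imaginary eigenvalue) must equal the dimension of its eigenspace, i.e. the block is diagonalizable on that eigenvalue. Conversely, if every eigenvalue has negative real part or is purely imaginary with a trivial Jordan structure, all terms are bounded, so every trajectory is bounded. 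Here I need to be a little careful: the lemma as stated only mentions the zero eigenvalue, not general purely imaginary eigenvalues — but a nonzero purely imaginary eigenvalue gives a periodic orbit (a genuine closed orbit), which is precisely what the orbit-free condition forbids. So the two conditions interact: stability alone would allow nonzero imaginary eigenvalues with trivial Jordan blocks, but orbit-freeness rules those out, leaving only eigenvalues that are zero or have strictly negative real part.

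For the orbit-free direction, I would show: $\dot x = Ax$ has a nontrivial periodic orbit if and only if $A$ has a nonzero purely imaginary eigenvalue. If $i\beta$ with $\beta \ne 0$ is an eigenvalue, take a (real) initial condition in the corresponding two-dimensional real invariant subspace; the trajectory is a genuine ellipse traced with period $2\pi/\beta$, hence a periodic orbit. Conversely, suppose $x(\cdot)$ is a nonconstant periodic orbit with period $T$; then $e^{AT} x(0) = x(0)$, and more usefully, restricting $A$ to the smallest invariant subspace containing the orbit, the flow is periodic, which forces the spectrum of that restriction to lie on the imaginary axis and be nonzero (a zero eigenvalue contributes a constant or unbounded polynomial component, never a nonconstant bounded periodic one). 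This gives a nonzero purely imaginary eigenvalue of $A$. Combining: SOF $\iff$ (no eigenvalue with positive real part) and (no defective purely imaginary eigenvalue, in particular the zero eigenvalue is semisimple) and (no nonzero purely imaginary eigenvalue) $\iff$ every eigenvalue is zero or has negative real part, and the algebraic and geometric multiplicities of the zero eigenvalue coincide.

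The main obstacle, such as it is, is bookkeeping rather than depth: one must handle the purely imaginary but nonzero eigenvalues carefully, since they are consistent with stability but not with orbit-freeness, and make sure the "multiplicity of the zero eigenvalue equals the dimension of the associated eigenspace" clause is correctly derived (it comes from stability forbidding $t^k$ growth at $\lambda = 0$). I would present the argument by reducing to real Jordan form once, stating the growth behavior of each block type, and reading off both conditions; I would also note the standard fact that one may check boundedness and periodicity of all trajectories by checking them on each generalized eigenspace separately, since these are invariant and their sum is $\R^n$.
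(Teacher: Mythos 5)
Your proof is correct. The paper itself offers no proof of this lemma---it is stated as a collection of ``well-known facts''---so there is no argument to compare against; your Jordan-form analysis (bounding the $t^k e^{\lambda t}$ terms for stability, and locating nonconstant periodic orbits exactly at nonzero purely imaginary eigenvalues for orbit-freeness) is the standard route and matches the machinery the paper deploys later in the proof of Theorem~\ref{th:linear}. The only point worth making explicit in a write-up is the convention you already use implicitly: ``periodic orbit'' must mean a \emph{nonconstant} periodic trajectory, since equilibria are trivially periodic and would otherwise falsify the lemma whenever $A$ is singular; with that convention fixed, your separation of the two conditions and their recombination into ``eigenvalues are zero or have negative real part, with the zero eigenvalue semisimple'' is exactly right.
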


The next theorem shows that a linear system has bounded sensitivity if and only if it is SOF.

\begin{theorem}[Sensitivity of Linear Systems]\label{th:linear}
A linear dynamical system has bounded sensitivity in the (strong) sense of \eqref{eq:bpis}, if and only if it is SOF. 
Further, every non-SOF linear system has unbounded sensitivity in the sense of \eqref{eq:bis}.
Moreover, for an SOF dynamical system in  which   $A$ is diagonalizable (of the form $A=P\Lambda P^{-1}$),   \eqref{eq:bpis} is satisfied by the following constant,
\begin{equation} \label{eq:constant C for linear in diagonalizable case}
\divconst \,=\, 1\,+\, \frac{\sigmax}{\sigmin} \sum_{i}\frac{|\lambda_i |}{\big\lvert  \real(\lambda_i)  \big\rvert},
\end{equation}
where $\lambda_i$'s are eigenvalues of $A$, and  $\sigmax$ and $\sigmin$ are the largest and smallest singular values of $P$, respectively. 
In the special case that $A$ is symmetric, \eqref{eq:constant C for linear in diagonalizable case} is simplified to $\divconst=n+1$.
\end{theorem}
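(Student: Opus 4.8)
\emph{Overall strategy.} The plan is to collapse the whole question into one scalar quantity, $\int_0^\infty \Ltwo{A e^{A\tau}}\,d\tau$, show that bounded sensitivity is equivalent to its finiteness, and then show that this integral is finite exactly when the system is SOF. First I would use linearity to dispose of the ``strong'' part: if $\ptraj_1,\ptraj_2$ are perturbed trajectories with $\ptraj_1(0)=\ptraj_2(0)$ driven by $\pert_1,\pert_2$, then $z:=\ptraj_1-\ptraj_2$ satisfies $z(t)=\int_0^t A z(\tau)\,d\tau + V(t)$ with $V:=\pert_1-\pert_2$, so \eqref{eq:bpis} for the system reduces to \eqref{eq:bis} applied to $V$; in particular it suffices to bound $\Ltwo{z(t)}$ in terms of $\sup_{\tau\le t}\Ltwo{V(\tau)}$. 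Next, a direct Fubini computation (which sidesteps differentiating the merely right-continuous $V$) shows that the unique such $z$ is
\begin{equation}
z(t) \;=\; V(t) \;+\; \int_0^t A e^{A(t-s)}\, V(s)\, ds , \qquad t\ge 0 ,
\end{equation}
whence $\Ltwo{z(t)}\le \bigl(1+\int_0^\infty \Ltwo{A e^{A\tau}}\,d\tau\bigr)\sup_{\tau\le t}\Ltwo{V(\tau)}$, so the system has bounded sensitivity with $\divconst = 1+\int_0^\infty \Ltwo{A e^{A\tau}}\,d\tau$ whenever this integral converges.

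\emph{Sufficiency of SOF and the explicit constant.} I would then argue, via the Jordan form of $A$, that SOF forces $\int_0^\infty \Ltwo{A e^{A\tau}}\,d\tau<\infty$: on a Jordan block with eigenvalue $\lambda$, $Ae^{A\tau}$ equals $e^{\lambda\tau}$ times a fixed polynomial in $\tau$, which is integrable on $[0,\infty)$ when $\real(\lambda)<0$; by Lemma~\ref{lem:SOF linear} the only other eigenvalue allowed is $\lambda=0$, which under SOF is semisimple, so $Ae^{A\tau}$ vanishes identically on that part. For the quantitative claim, in the diagonalizable case $A=P\Lambda P^{-1}$ I would bound $\Ltwo{A e^{A\tau}}\le \tfrac{\sigmax}{\sigmin}\,\Ltwo{\Lambda e^{\Lambda\tau}} = \tfrac{\sigmax}{\sigmin}\max_i |\lambda_i|\,e^{\real(\lambda_i)\tau}\le \tfrac{\sigmax}{\sigmin}\sum_i |\lambda_i|\,e^{\real(\lambda_i)\tau}$ and integrate term by term, the $\lambda_i=0$ terms contributing nothing, which yields exactly the constant \eqref{eq:constant C for linear in diagonalizable case}. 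When $A$ is symmetric, $P$ can be taken orthogonal so $\sigmax=\sigmin$, each nonzero eigenvalue contributes $|\lambda_i|/|\real(\lambda_i)|=1$, and there are at most $n$ of them, giving $\divconst=n+1$.

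\emph{Necessity of SOF.} For the converse I would, for a non-SOF matrix $A$, exhibit a perturbation $\pert$ with $\sup_{\tau}\Ltwo{\pert(\tau)}$ bounded whose corresponding perturbed trajectory $\ptraj$ (initialized at the origin, like the zero unperturbed trajectory) has $\Ltwo{\ptraj(t)}\to\infty$, splitting into two cases by Lemma~\ref{lem:SOF linear}. If $A$ is not stable, fix an eigenvalue $\lambda$ with $\real(\lambda)\ge 0$ that is either unstable or defective, pick $v$ in the associated (real) generalized eigenspace $E$, and let $\pert$ ramp linearly from $0$ to $v$ over $[0,T]$ and stay at $v$ afterwards; then for $t\ge T$ one gets $\ptraj(t)=e^{A(t-T)}w$ with $w=\bigl(\int_0^T e^{A\sigma}\,d\sigma\bigr)v$, and choosing $T$ so that $e^{\lambda T}\neq 1$ makes $\int_0^T e^{A\sigma}\,d\sigma$ act invertibly on $E$, so that a generic $v$ makes $\Ltwo{e^{A(t-T)}w}$ grow without bound (exponentially if $\real(\lambda)>0$, polynomially if $\real(\lambda)=0$ and $\lambda$ is defective). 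If instead $A$ is stable but admits a periodic orbit, it has a semisimple eigenvalue $\pm i\omega$, $\omega\neq 0$, with an invariant plane on which $A$ acts as $\omega$ times a rotation generator; taking $\pert(t)=\int_0^t r(s)\,ds$ with $r(s)$ a unit vector in that plane rotating at angular speed $\omega$, the integral stays bounded (by $2/\omega$) while resonance forces $\ptraj(t)$ to carry a component of norm proportional to $t$. In either case the sensitivity ratio diverges, establishing unboundedness even in the sense of \eqref{eq:bis}.

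\emph{Main obstacle.} The sufficiency direction and the constant bookkeeping are routine; the delicacy is all in the necessity construction. Besides checking that the proposed $\pert$ are admissible in the sense of Definition~\ref{def:integral pert traj} (right-continuity, integrability of the associated drift), the genuinely subtle point is the degenerate spectral case of an eigenvalue in $2\pi i\Z\setminus\{0\}$: there the naive ramp-and-hold perturbation over $[0,1]$ suffers a resonance cancellation ($\int_0^1 e^{A\sigma}\,d\sigma$ becomes singular on that eigenspace) and produces a bounded trajectory, which is precisely why one must ramp over an interval of length $T$ chosen with $e^{\lambda T}\neq 1$. Getting the case split clean, confirming invertibility of $\int_0^T e^{A\sigma}\,d\sigma$ on each relevant generalized eigenspace, and verifying in the semisimple-imaginary case that the resonant input truly accumulates linearly while its running integral stays bounded, is where the real work of the proof lies.
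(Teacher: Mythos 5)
Your proposal is correct and follows essentially the same route as the paper: reduce the strong bound to the weak one by linearity, use the closed-form variation-of-constants solution to bound the deviation by $\bigl(1+\int_0^\infty\ltwo{Ae^{A\tau}}\,d\tau\bigr)\sup_\tau\ltwo{V(\tau)}$, verify integrability via the Jordan form under SOF (with the semisimple zero block contributing nothing), and establish necessity by an explicit destabilizing perturbation in the unstable case and a resonant rotating input in the periodic-orbit case. The only differences are presentational: you bound $\ltwo{Ae^{A\tau}}$ directly rather than conjugating to $\ltwo{Ge^{G\tau}}_F$ first (yielding the same constant \eqref{eq:constant C for linear in diagonalizable case}), and your ramp-and-hold construction with the condition $e^{\lambda T}\neq 1$ makes explicit the unstable-case argument that the paper states in one sentence.
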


The proof is given in Section \ref{sec:proof th linear}.
The proof relies on a closed form expression for the solutions of  linear differential equations and  the Jordan normal form of the underlying matrix $A$.

Note that stability of a linear system is not sufficient for bounded sensitivity. According to Theorem~\ref{th:linear}, a linear system with bounded sensitivity is necessarily both stable and orbit-free.
Fig.~\ref{fig:rotary} shows how the existence of a periodic orbit can cause unbounded sensitivity in a  stable linear system.
\begin{figure} 
	\begin{center}
		{
			\includegraphics[width = .35\textwidth]{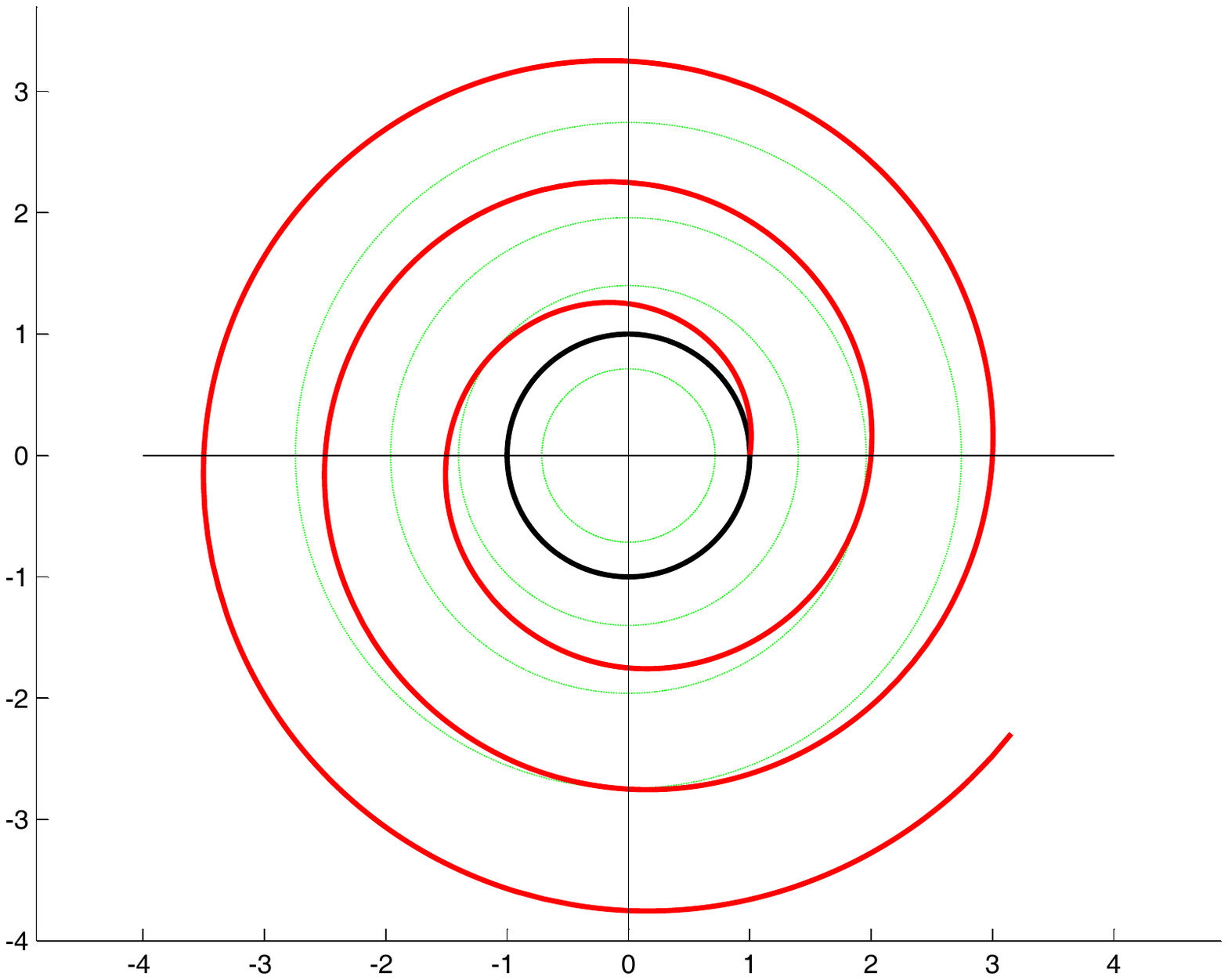}}
	\end{center}
	\caption{Trajectories of the linear dynamical system  $\dot{x}=\left(-x_2,x_1\right)$, where $\left(x_1,x_2\right)$ is the representation of $x(t)$ in the Cartesian coordinates. This dynamical system is linear and stable, and its trajectories are circular orbits centered at the origin. 
	The system has unbounded sensitivity since a bounded perturbation can cause unbounded distance between perturbed and unperturbed trajectories.  Black circle: unperturbed trajectory $x(t)=\left(\cos(t),\sin(t)\right)$. Red spiral: perturbed trajectory $\ptraj(t)=\left((t+1)\cos(t),(t+1)\sin(t)\right)$ corresponding to the bounded cumulative perturbation $U(t)=\left(\sin(t),1-\cos(t)\right)$, all in Cartesian coordinates.} 
	\label{fig:rotary}
\end{figure}

\medskip


\section{\bf  Sensitivity of Subgradient Dynamical Systems} \label{sec:main bis of systems}

In this section, we show that there exist dynamical systems driven by gradient of a convex  function that have unbounded sensitivity. The results of this section shed light on the limitations and challenges of extending Theorem~\ref{th:p1 sensitivity} to larger classes of systems. 
More concretely, the results suggest that  Theorem~\ref{th:p1 sensitivity}  will no longer hold true if we remove/weaken any of its assumptions.
The section is organized into two subsections.
In the first subsection, we present example of a  dynamical system driven by (negative) gradient of a strictly convex function. We show that this system has unbounded sensitivity. 
We then show via  another example in Subsection~\ref{subsec:pwc with infinite pieces} that unbounded sensitivity is possible even when the convex potential function is piecewise linear  with  infinitely many pieces.

\hide{
We start off by a simple example of a  nonexpansive system with unbounded sensitivity in Example~\ref{ex:rotary}. 
We  then present examples of gradients fields of convex functions that have unbounded sensitivity in Examples~\ref{ex:subdiff dont work} and~\ref{ex:pwc with infinite pieces}. 
The proofs of the lemmas of this subsection are relegated to the appendix.
}

For the ease of presentation, throughout this section we will be working with cylindrical coordinates. 
For a trajectory $x:\R\to\R^3$, and for a time $t\in\R_+$, we represent the location of $x(t)$ in  cylindrical coordinates  by $\big(r,\phi, z\big)$, which equals $\big(r\cos \phi,\, r\sin \phi, z \big)$ in the Cartesian coordinates.

\hide{
In our first example, we consider a  nonexpansive dynamical system and show that it has unbounded sensitivity. 
The example is simple by itself and is an immediate consequence of Theorem~\ref{th:linear}.
Having said that, we include it here to build intuition and set the stage  for the more elaborate examples that follow.

\begin{example}(A nonexpansive system with unbounded sensitivity) \label{ex:rotary}
Consider the two-dimensional linear dynamical system $\dot{x}=F(x)=r\phicor$ in the polar coordinates.
The  system is nonexpansive, and its trajectories are circular orbits centered at the  origin. 
Fig.~\ref{fig:rotary} shows an illustration of the trajectories of this system.
Consider an unperturbed trajectory $x(t)=(1,t)$ in the polar coordinates.
For any $t\in\R_+$, let $\ptraj(t)=(t+1,\,t)$ in the polar coordinates, and $u(t)=1\rcor$ be the differential perturbation in the local polar coordinates at $\ptraj(t)$.
Then, for any $t\in\R_+$, $\tfrac{d}{dt} \ptraj(t) =1\rcor+(t+1)\phicor= F\big(\ptraj(t)\big) +u(t)$. 
Therefore, $\ptraj(t)$ is a perturbed trajectory corresponding to perturbation $\pert(t) = \int_{0}^t u(\tau)\,d\tau$.
Moreover, $\Ltwo{U(t)} = \Ltwo{ \int_{0}^t u(\tau)\,d\tau}\le \Ltwo{ \int_{0}^t \cos(\tau)\,d\tau}+\Ltwo{ \int_{0}^t \sin(\tau)\,d\tau}\le 2$, which is bounded.
However, $\Ltwo{\ptraj(t)-x(t)} = t$, which grows unbounded for large $t$.
Therefore, no constant $C$ can satisfy \eqref{eq:bis}, and  the system has unbounded sensitivity.
\end{example}
}

\hide{
In particular, in Section \ref{sec:3d simple}, we start with a simple case, as a prologue to the main result/examples in the subsequent sections. 
There, we present a quasi-convex function, and show that its subgradient field\footnote{Note that subgradient is not defined for quasi-convex function, and we are abusing the terminology.} has unbounded sensitivity. Later, in Section \ref{sec:3d examples}, we elaborate on the same idea to construct convex functions whose subgradient fields have unbounded sensitivity. These constructions require some machineries, which we will develop in Section \ref{sec:machinery}. The results of Section \ref{sec:machinery} are also of independent interest.

\medskip

\subsubsection{\bf nonexpansive and Contractive Systems} 
In this section, we present examples of simple nonexpansive and contractive systems with unbounded sensitivity.

It follows from the definition of sensitivity that the distance of any two unperturbed trajectories can never increase. Therefore, every system with bounded sensitivity is nonexpansive. 
\medskip
In our next example, we take one step  further and study dynamical systems with contraction.
We say that a dynamical system is \emph{contractive} if $x(t)\ne y(t)$ implies $\frac{d}{dt}\Ltwo{x(t)-y(t)}<0$, for all unperturbed trajectories $x(\cdot)$ and $y(\cdot)$.
It turns that contraction is not a sufficient condition for boundedness of sensitivity:

\begin{example} \label{ex:contractive}
By an slight modification of the rotary system, consider the dynamical system $\dot{x}=-r^2 \rcor + r\phicor$. Fix some $\epsilon>0$.
The trajectory $x(t)=\big(t+1/\epsilon\big)^{-1}\rcor+t\phicor$ is an unperturbed trajectory, and  $\ptraj(t)=\epsilon\rcor+t\phicor$ is a perturbed trajectory  corresponding to the differential perturbation $\pert(t)=\epsilon^2\rcor+t\phicor$.  These trajectories are illustrated in Fig. \ref{fig:contractive}. Therefore, $\Ltwo{\ptraj(t)-x(t)} = {\epsilon}/\big(1+1/(\epsilon t)\big)\to\epsilon$ as $t$ goes to infinity, while  perturbation is upper bounded: $\Ltwo{U(t)}=\Ltwo{ \int_{0}^t \pert(\tau)\,d\tau}\le 2\epsilon^2$. Then,  $\Ltwo{\ptraj(t)-x(t)}/\sup_t \Ltwo{U(t)\,d\tau} \simeq 2/\epsilon$, which is unbounded as $\epsilon$ tends to zero. As a result, the system has unbounded sensitivity.
\end{example}

\medskip

We can yet ca dynamical system to be  \emph{strongly contractive} if there exists a constant  $\alpha$, such that for any two unperturbed trajectories $x(t)$ and $y(t)$, we have $\frac{d}{dt}\Ltwo{x(t)-y(t)}\le - \alpha \Ltwo{x(t)-y(t)}$, whenever $x(t)\ne y(t)$. 

In our previous reports, we made the following conjecture:
\begin{conjecture} \label{conj:strong contractive}
Every strongly contractive dynamical system has bounded sensitivity.
\end{conjecture}
We will disprove this conjecture in Section \ref{sec:3d examples} (cf. Remark \ref{rem:strongly contractive not sufficient}).


\subsubsection{\bf Warming UP for Subgradient Systems} \label{sec:3d simple}

}

\medskip


\subsection{\bf Gradient field of a strictly convex function with unbounded sensitivity}\label{subsec:subdiff dont work}
Consider the half cylinder  
\begin{equation}\label{eq:def Omega}
\Omega = \big\{(r,\phi,z) \,\big|\, r\le 1/4, \, z \le -1 \big\},
\end{equation}
in cylindrical coordinates. 
Let $f(r,\phi,z)$ be a solution of the following equation over $\Omega$:
\begin{equation} \label{eq:book f (smooth)}
f\,+\, z \,-\, \frac1f\, \ln\Big(\cosh \big(fr\sin(f-\phi)\big)\Big)\, -\,  \frac{r^2}{1+f}=0.
\end{equation}
The following lemma shows that $f$ is well-defined and is strictly quasi-convex.
\begin{lemma}\label{lem:properties of f}
\begin{itemize}
\item[a)]
For any $(r,\phi,z)\in\Omega$, there is a unique $f\ge 1$ that satisfies \eqref{eq:book f (smooth)}.
\item[b)]
For any $\alpha\ge 1$, the level-set $f(r,\phi,z)=a$ is a surface of the form 
\begin{equation}\label{eq:surf eq}
z(r,\phi) = -a + \frac1a \ln\Big(\cosh \big(ar\sin(a-\phi)\big)\Big)  +\frac{r^2}{1+a}.
\end{equation}
\item[c)]
$f$ is a smooth and strictly quasi-convex function, and its  level-sets are compact. 
\end{itemize}
\end{lemma}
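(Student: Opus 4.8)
The plan is to analyze the defining equation \eqref{eq:book f (smooth)} by viewing it as $G(f;r,\phi,z)=0$ where
\[
G(s;r,\phi,z) \,=\, s + z - \frac1s \ln\!\big(\cosh(sr\sin(s-\phi))\big) - \frac{r^2}{1+s}.
\]
For part (a), fix $(r,\phi,z)\in\Omega$ and study $s\mapsto G(s;r,\phi,z)$ on $[1,\infty)$. First I would bound the middle term: since $\ln\cosh(u)\le|u|$, we have $0\le \frac1s\ln\cosh(sr\sin(s-\phi)) \le r|\sin(s-\phi)| \le r \le 1/4$, and similarly $0\le \frac{r^2}{1+s}\le 1/16$. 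Hence $G(s)\ge s+z-1/4-1/16$ and $G(s)\le s+z$. Because $z\le -1$, at $s=1$ we get $G(1)\le 1+z\le 0$, while $G(s)\to+\infty$ as $s\to\infty$; continuity of $G$ (the composed functions are smooth) gives at least one root $f\ge1$. For uniqueness I would show $G$ is strictly increasing in $s$ on $[1,\infty)$: the leading $+s$ grows with derivative $1$, while the two correction terms have derivatives bounded in magnitude by a constant strictly less than $1$ (using $\frac{d}{ds}[\frac1s\ln\cosh(\cdot)]$ and that $\tanh$, $\ln\cosh$ grow at most linearly, together with $r\le1/4$, $|\sin|\le1$; the worst term comes from differentiating $sr\sin(s-\phi)$ inside, contributing at most $r\,|\!\cos(s-\phi)|\cdot s\cdot\tfrac{\tanh(\cdot)}{s}$-type factors that stay $\le 1/4$ in absolute value). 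So $G'(s)\ge 1-\text{(something}<1) > 0$, giving a unique $f$.

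For part (b), this is immediate algebra: the level set $\{f(r,\phi,z)=a\}$ with $a\ge1$ is exactly the set of $(r,\phi,z)$ with $G(a;r,\phi,z)=0$, and solving $G(a;\cdot)=0$ for $z$ gives precisely \eqref{eq:surf eq}. I would just remark that one must check the resulting $(r,\phi,z)$ indeed lies in $\Omega$ (i.e.\ $z\le-1$), which follows from the same bounds as above: $z(r,\phi)\le -a + r + r^2/2 \le -a + 1/4 + 1/32 \le -1$ since $a\ge1$.

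For part (c), smoothness of $f$ follows from the implicit function theorem applied to $G$: we showed $\partial_s G>0$ everywhere on the relevant domain, and $G$ is jointly smooth in $(s,r,\phi,z)$, so $f$ is smooth. For strict quasi-convexity I would invoke part (b): each level set is the graph $z=z_a(r,\phi)$ of a function, and since $G$ is increasing in $s$, the sublevel set $\{f\le a\}$ is $\{z\le z_a(r,\phi)\}\cap\Omega$; strict quasi-convexity of $f$ is then equivalent to strict concavity of the function $(r,\phi)\mapsto z_a(r,\phi)$ expressed in Cartesian coordinates, i.e.\ of the right-hand side of \eqref{eq:surf eq} written as a function of $(x_1,x_2)=(r\cos\phi,r\sin\phi)$. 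The term $\frac{r^2}{1+a}=\frac{x_1^2+x_2^2}{1+a}$ is strictly convex, so $-\frac{r^2}{1+a}$ is strictly concave; the dominant term $-a$ is affine; the remaining term $\frac1a\ln\cosh(ar\sin(a-\phi))$ has the form $\frac1a\ln\cosh(\ell(x_1,x_2))$ where $\ell(x_1,x_2)=a(x_2\cos a - x_1\sin a)$ is linear, and $u\mapsto\ln\cosh(u)$ is convex, so this term is convex, hence its contribution to $z_a$ is such that $-(\text{rest})$ stays strictly concave overall — I would verify the Hessian is negative definite by adding the strictly negative Hessian of $-\frac{r^2}{1+a}$ (which is $-\frac{2}{1+a}I$) to the negative-semidefinite Hessian of $-\frac1a\ln\cosh\circ\ell$. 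Compactness of level sets: on $\{f=a\}$ we have $r\le1/4$ and $z=z_a(r,\phi)$ is bounded (between $-a-\text{const}$ and $-a+\text{const}$), so the set is closed and bounded.

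The main obstacle I anticipate is the monotonicity estimate $\partial_s G(s;r,\phi,z)>0$ in part (a): the term $\frac1s\ln\cosh\!\big(sr\sin(s-\phi)\big)$ depends on $s$ in three places (the prefactor $1/s$, the amplitude $s$, and the phase $s-\phi$), so its $s$-derivative is somewhat intricate, and one must be careful that the accumulated bound stays strictly below $1$ uniformly over $(r,\phi)\in\{r\le1/4\}\times\R$ and $s\ge1$. This is exactly where the constraint $r\le1/4$ (rather than, say, $r\le1$) is being used, and getting the constants to close is the crux of the argument.
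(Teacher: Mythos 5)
Your proposal follows essentially the same route as the paper: monotonicity of the defining function in $f$ plus the intermediate value theorem for (a) (the paper's explicit bound is $\partial h/\partial f\ge 1-r-r^2>0$ using $r\le 1/4$, matching your sketch), the implicit function theorem for smoothness, and strict convexity of each level surface---as the sum of the convex function $\frac1a\ln\cosh(\text{linear in }(x_1,x_2))$ and the strictly convex quadratic $\frac{r^2}{1+a}$---for quasi-convexity and compactness in (c). Two small slips that do not affect the approach: since $f$ is decreasing in $z$, the sublevel set $\{f\le a\}$ equals $\{z\ge z_a(x_1,x_2)\}$, so what you need (and what your Hessian decomposition actually delivers) is strict \emph{convexity} of $z_a$, not concavity; and the numerical check $-a+1/4+1/32\le -1$ in (b) fails for $a$ near $1$, but it is unnecessary because the level set is by definition the portion of the surface lying in $\Omega$.
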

The proof of the lemma is given in Appendix~\ref{app:proof wellformedness of f}.

For the intuition behind the definition of $f$, note that  for sufficiently large values of $f$ (when $z$ goes to $-\infty$), we have  $\ln\Big(\cosh \big(fr\,\sin(f-\phi)\big)\Big)\,/f \approx r\big|\sin(f-\phi)\big|$. 
Then,  \eqref{eq:surf eq} implies that for sufficiently large values of $a$, the level set $f=a$ is very close to the surface $z(r,\phi) = -a + r\big\lvert\sin(a-\phi)\big\rvert$. This surface has the shape of an opened book, and the books rotate as $a$ varies. 
An illustration of different level-sets of $f$ is shown in Fig.~\ref{fig:3d example}~(a). 
In light of the rotating books analogy, we can show that the gradient field of $f$ admits spring-shaped unperturbed trajectories and diverging spiral-shaped perturbed trajectories of the forms depicted in Fig.~\ref{fig:3d example}~(b). 
Having discussed the insight, we proceed to a rigorous proof and construction of the desired convex function.

\begin{figure} 
\begin{center}
\subfigure[] {\includegraphics[width = .45\textwidth]{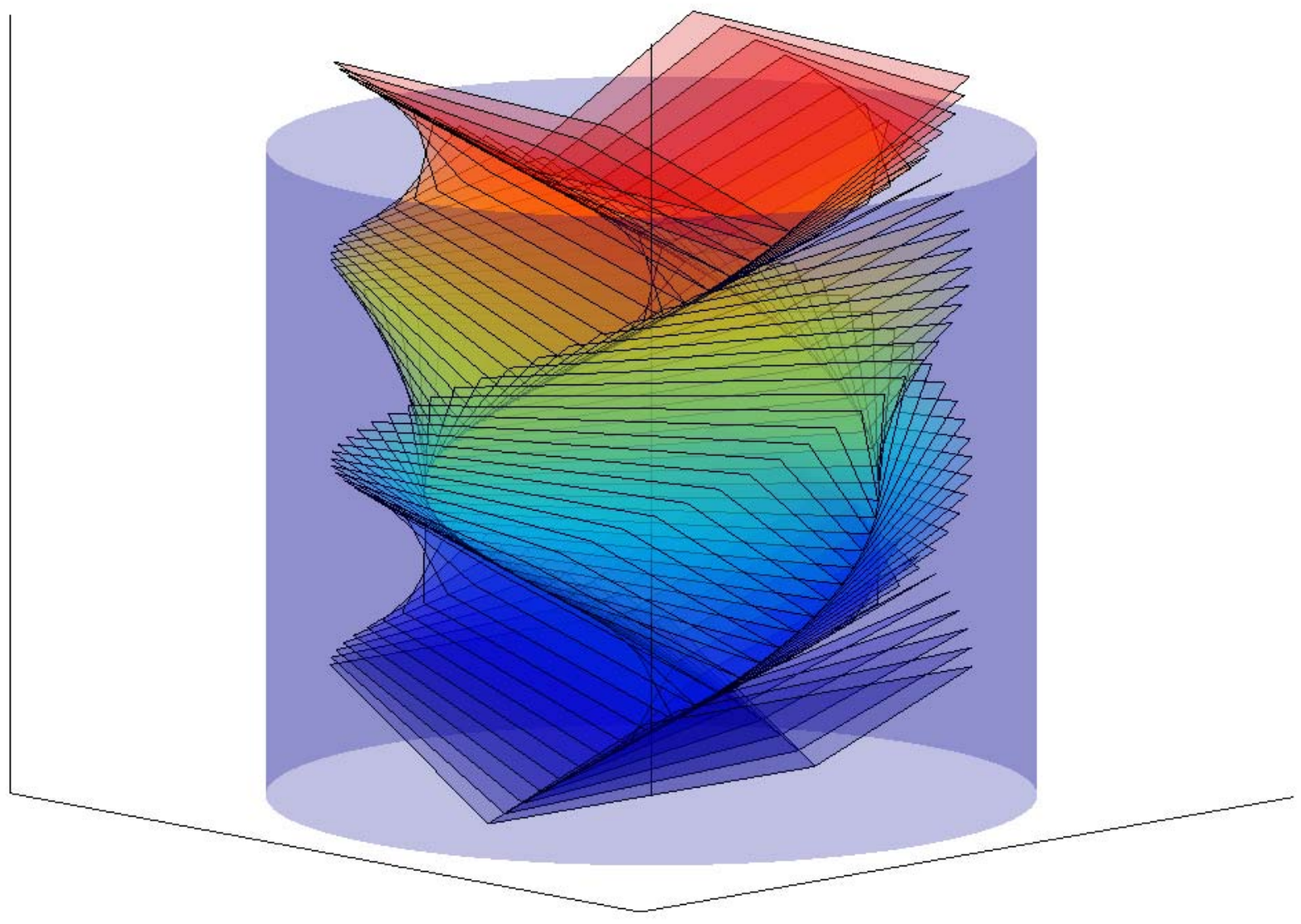}}\quad
\subfigure[] {\includegraphics[width = .45\textwidth]{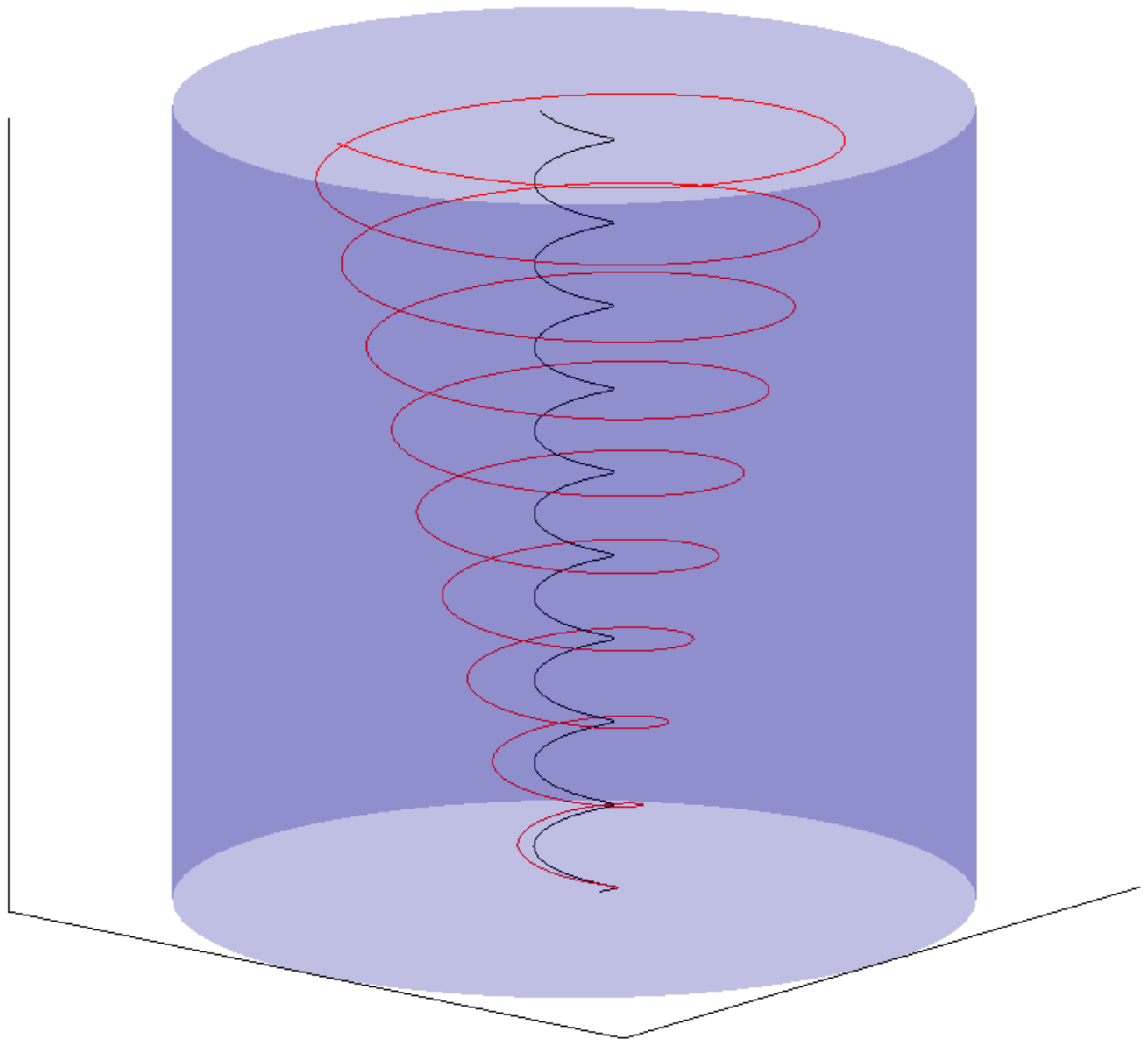}}
\vspace{-.5cm}
\end{center}
\caption{The dynamical system of Section~\ref{subsec:subdiff dont work} that has unbounded sensitivity. (a) shows the level sets of the quasi-convex  function $f$ defined in \eqref{eq:book f (smooth)} and (b) depicts a pair of perturbed (red) and unperturbed (black) trajectories (traversing the curves upwards). 
}
\label{fig:3d example}
\end{figure}

It follows from Lemma~\ref{lem:properties of f}~(c) and Lemma~\ref{lem:convexification} that there exists an increasing and twice continuously  differentiable function $h:\R\to \R$ such that $ h\circ f$ is a strictly convex function. 
Let $\Phi\triangleq h\circ f$ and $F$ be the gradient field of $\Phi$.
Then,

\begin{theorem} \label{th:3d F has unbounded sensitivity}
 $F$ has unbounded sensitivity, even when its domain is restricted to the region 
 \begin{equation}\label{eq:def Omega zeta}
\Omega_\zeta \triangleq \big\{(r,\phi,z) \,\big|\, r\le 1/6, \, z < \zeta \big\},
\end{equation}
for all $\zeta\le -1$.
\end{theorem}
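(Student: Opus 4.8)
The plan is to exhibit, for each fixed $\zeta \le -1$, an explicit unperturbed trajectory of $F$ together with a perturbed trajectory whose cumulative perturbation stays bounded (uniformly in a parameter) but whose separation from the unperturbed trajectory grows without bound.

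First I would work at the level of the quasi-convex function $f$ rather than $\Phi = h\circ f$: since $h$ is increasing and $C^2$ with $h' > 0$, the gradient field $F = -\nabla\Phi = -(h'\circ f)\,\nabla f$ points in the same direction as $-\nabla f$ and differs only by a positive scalar speed factor. A reparametrization of time absorbs this factor, so it suffices to analyze the (rescaled) flow $\dot x = -\nabla f(x)/\|\nabla f(x)\|^2$ or any convenient positive rescaling; the key geometric feature is that unperturbed trajectories move orthogonally to the level sets $f = a$. Using Lemma \ref{lem:properties of f}(b), the level set $f = a$ is the graph \eqref{eq:surf eq}, and in the regime $z \to -\infty$ (equivalently $a \to \infty$), $\frac1a\ln\cosh(ar\sin(a-\phi)) \approx r|\sin(a-\phi)|$, so the level set $f=a$ is asymptotically the "open book" surface $z \approx -a + r|\sin(a-\phi)|$, whose hinge lies along the axis $r=0$ and whose page plane rotates with $a$. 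I would then take as unperturbed trajectory a curve lying in the hinge locus $\{\,\phi = a,\ r = 0\,\}$ (more precisely the curve where the book is "flat"), which spirals up the cylinder staying near $r = 0$; and as perturbed trajectory a curve at small but fixed radius $r = \epsilon \le 1/6$ that is forced by a gentle radial push to climb the rotating page of the book, so that its angular coordinate $\phi$ tracks $a$, i.e. it follows the rotation — this is the spring-versus-spiral picture of Fig.~\ref{fig:3d example}(b). Because the page rotates by a full turn as $a$ increases by $2\pi$, the perturbed point must be continually displaced tangentially by an $O(\epsilon)$ amount per unit of $a$-progress, yet the \emph{cumulative} tangential displacement telescopes (the required velocity correction is a rotation of a fixed-length vector, whose integral is bounded, exactly as in Example \ref{ex:rotary} and the rotary system of Fig.~\ref{fig:rotary}); meanwhile the separation $\|\ptraj(t) - x(t)\|$ stays $\approx \epsilon$ over arbitrarily long times, and letting $\epsilon$ be small shows $\|\ptraj - x\|/\sup_\tau\|\pert(\tau)\|$ is unbounded, or alternatively a single trajectory that lets the "page climbing" accumulate shows the ratio diverges in $t$.

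Concretely, the key steps in order: (i) record that $F$ and $-\nabla f$ are positively proportional, so after time-rescaling we may study trajectories orthogonal to the level sets of $f$; (ii) from \eqref{eq:surf eq} compute $\nabla f$ (or at least its direction) and verify that along the ray $r$ small the level surfaces behave like the rotating open book, with quantitative error bounds on the $\cosh$ approximation valid for $z < \zeta$, $r \le 1/6$; (iii) define the unperturbed trajectory $x(t)$ explicitly in cylindrical coordinates as a slowly rising curve with $r \to 0$ (or $r$ a fixed small value on the page) and $\phi$, $z$, $a := f(x(t))$ all increasing, and check $\dot x \parallel -\nabla f$; (iv) define the perturbed trajectory $\ptraj(t)$ at fixed radius $\epsilon \le 1/6$ with the same $z$- and $a$-schedule but pinned to the rotating page so that $\phi_{\ptraj}(t) = a(t) + c$ for a constant $c$, compute the required differential perturbation $u(t) = \dot\ptraj(t) - F(\ptraj(t))$, and verify $u(t)$ is, up to bounded factors, a unit vector rotating with $a(t)$, so that $\pert(t) = \int_0^t u(s)\,ds$ is bounded uniformly in $\epsilon$; (v) observe $\ptraj(0) = x(0)$ can be arranged and $\|\ptraj(t) - x(t)\| \ge c_0\epsilon$ for all large $t$ (or arrange genuine divergence in $t$), contradicting any bound \eqref{eq:bis}; (vi) note the entire construction takes place inside $\Omega_\zeta$ for the given $\zeta$, since we only used $r \le 1/6$ and $z < \zeta$.

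The main obstacle I expect is step (iv) combined with the error analysis of step (ii): the function $f$ is only defined implicitly by \eqref{eq:book f (smooth)}, so getting a clean enough expression for $\nabla f$ and hence for $F$ near $r = 0$ — where $\sin(f-\phi)$ and the $\cosh$ term interact and the cylindrical coordinate frame $(\rcor, \phicor, \zcor)$ itself becomes singular — requires care, and one must show the $\cosh$-vs-$|\cdot|$ discrepancy (and the $r^2/(1+f)$ and $1/f$ correction terms, which were presumably inserted precisely to make $f$ strictly quasi-convex) does not destroy the telescoping of $\int u$. This is exactly the kind of estimate where the machinery of Section \ref{sec:main operations} — in particular realizing $F$ as a spreading/convolution-type perturbation of the idealized "open book" field and invoking that bounded sensitivity is preserved (equivalently, unbounded sensitivity is inherited) under such transformations — is likely the cleanest route, so I would structure the rigorous argument around first proving unbounded sensitivity for the idealized rotating-book field and then transferring it to $F$; the details of that transfer are what the proof in Section \ref{app:proof unbounded sensitivity of F} must supply.
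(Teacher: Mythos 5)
Your overall strategy is the paper's: the unperturbed trajectory rides the hinge (the $z$-axis), the perturbed one climbs the rotating book page under a rotating radial push whose integral telescopes, and the transfer from the idealized open-book field to $F$ goes through the spread-system machinery of Section~\ref{sec:spread}. (Concretely, the paper's Lemma~\ref{claim:spread books} shows that the idealized drift $\tfrac{-2r}{1-z}\rcor + r\phicor+\zcor$, up to a smooth positive time-rescaling, lies in $\tilde{F}_\epsilon$ at points near the half-plane where the book is flat, and Theorem~\ref{th:spread sys} then pulls unbounded sensitivity of $\tilde{F}_\epsilon$ back to $F$.) So your steps (i), (ii), (iii), (vi) and your closing paragraph about structuring the argument around the spread transfer are on target.

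The genuine gap is the quantitative accounting in steps (iv)--(v). As stated, your perturbed trajectory sits at fixed radius $\epsilon$, the separation from the axis trajectory ``stays $\approx\epsilon$'', and the cumulative perturbation is ``bounded'' because the correction is a rotating vector of fixed length. But a rotating vector of length $\epsilon$ integrated against angular speed of order one has cumulative integral of order $\epsilon$ as well, so the ratio of separation to cumulative perturbation is $O(1)$ and no contradiction with \eqref{eq:bis} follows; if instead the correction has length $O(1)$, its integral is $O(1)$ while the separation is only $\epsilon$, which is worse. What actually makes the example work is that the inward radial component of the (rescaled) drift is $\tfrac{2r}{1-z}$, which decays as $z\to-\infty$; hence a radial push of constant magnitude $\epsilon$ (with cumulative norm at most $2\epsilon$ by the telescoping) is \emph{not} balanced by the restoring drift and drives the radius from $0$ all the way to the absolute constant $1/6$ in time about $1/(2\epsilon)$, during which $z$ changes by only $O(1/\epsilon)$, so the whole excursion stays inside $\Omega_\zeta$ provided it starts deep enough (this is why the paper takes $z_0=z_\epsilon+\zeta-2/\epsilon$). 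The contradiction is then a separation of at least $1/6$ against a cumulative perturbation of at most $2\epsilon$, with $\epsilon\to0$; a fixed-radius construction with separation $\approx\epsilon$ cannot deliver this. Your parenthetical ``or arrange genuine divergence in $t$'' points in the right direction, but it is precisely the step that requires the explicit radial ODE $\dot r=\epsilon-2r/(1-z_0-t)$ and the lower bound $\dot r\ge\epsilon/3$ on $[0,1/(2\epsilon)]$, which your proposal does not supply.
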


\oli{We need the restricted domain, later, in the proof of Theorem \ref{th:pwc example}.}

The proof of the theorem is given in Section~\ref{app:proof unbounded sensitivity of F}, and goes by showing that $F$ admits trajectories of the form depicted in Fig.~\ref{fig:3d example}~(b). 
In the course of the proof we need a  machinery that we develop later on in Section~\ref{sec:spread}.




\subsection{\bf Subgradient field of a piecewise linear convex function that has unbounded sensitivity} \label{subsec:pwc with infinite pieces}
Here we capitalize on the example of Subsection~\ref{subsec:subdiff dont work} to construct a dynamical system driven by the (negative) gradient of a piecewise linear and convex function that has unbounded sensitivity. 

The high-level idea is as follows. Consider the convex potential function $\Phi$ of Subsection~\ref{subsec:subdiff dont work}, whose gradient field $F$ has unbounded sensitivity.
We construct a piecewise linear approximation $\Psi$ of $\Phi$ with infinite number of pieces, such that the approximation error tends to zero as $z$  goes to $-\infty$.  
To do this, we consider a fine grid within the half-cylinder $\Omega$, defined in \eqref{eq:def Omega}, with increasing resolution as $z$ goes to $-\infty$, and a corresponding triangulation of $\Omega$ with simplexes. 
We then let $\Psi(p)=\Phi(p)$ on the grid points $p$, and let $\Psi$ be the linear interpolation inside each simplex. 
The resulting $\Psi$ is a convex function. 
Let $H$ be the (sub)gradient field of $\Psi$.
Since the resolution of grid points increases as $z$ goes to $-\infty$, for sufficiently small values of $z$, $H$ would give a good approximation of $F$, and we can use Theorem~\ref{th:3d F has unbounded sensitivity} to deduce unbounded sensitivity also for $H$.
We now state the main result in the following theorem, and leave the detailed construction and the proof to Section~\ref{app:pwc example}.

\begin{theorem} \label{th:pwc example}
There exists a piecewise linear convex function whose (sub)gradient field has unbounded sensitivity.
\end{theorem}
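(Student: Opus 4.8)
The plan is to transfer the unbounded-sensitivity property from the smooth convex potential $\Phi$ of Subsection~\ref{subsec:subdiff dont work} to a piecewise linear convex approximation $\Psi$ of $\Phi$, using Theorem~\ref{th:3d F has unbounded sensitivity} together with the robustness (perturbation-preservation) machinery of Section~\ref{sec:main operations}. The key observation that makes this work is that Theorem~\ref{th:3d F has unbounded sensitivity} already gives unbounded sensitivity of $F$ on the restricted domains $\Omega_\zeta = \{(r,\phi,z)\,|\, r\le 1/6,\ z<\zeta\}$ for \emph{all} $\zeta\le -1$; so we have freedom to push the region of interest arbitrarily far down the $z$-axis, where our grid can be made arbitrarily fine.

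First I would set up the geometry: fix a sequence of slabs $S_k = \{(r,\phi,z)\,|\, r\le 1/4,\ z\in[-2^{k+1},-2^{k}]\}$ (or similar) tiling the half-cylinder $\Omega$ of \eqref{eq:def Omega}, and within $S_k$ lay down a triangulation into simplexes of diameter $\delta_k\to 0$ as $k\to\infty$ (e.g. $\delta_k = 2^{-k}$), taking care that the triangulations of adjacent slabs agree on their common face so that the global simplicial complex is conforming. Define $\Psi$ to agree with $\Phi$ on all grid vertices and to be affine on each simplex. Since $\Phi$ is convex and the interpolation is the pointwise supremum-type / linear-interpolation construction over a triangulation of a convex function, $\Psi$ is convex and piecewise linear with infinitely many pieces; $H\triangleq -\partial\Psi$ is then an FPCS-like system but with infinitely many pieces (so Theorem~\ref{th:p1 sensitivity} does \emph{not} apply). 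The second step is the quantitative approximation estimate: because $\Phi$ is $C^2$ (indeed $h\circ f$ with $f$ smooth by Lemma~\ref{lem:properties of f}(c)), on each simplex of diameter $\delta_k$ the linear interpolant satisfies $\|\Phi-\Psi\|_\infty = O(\delta_k^2 \sup\|D^2\Phi\|)$ and, more importantly, the gradient error $\|\nabla\Phi-\nabla\Psi\| = O(\delta_k \sup\|D^2\Phi\|)$ at points of differentiability, with the subgradients of $\Psi$ at the lower-dimensional faces lying in the convex hull of the neighbouring gradients. Hence on $S_k$ the set-valued field $H$ is within Hausdorff distance $\varepsilon_k\to 0$ of $F$.

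The third step is to invoke the machinery of Section~\ref{sec:main operations}: a trajectory of $H$ is, up to an error controlled by $\varepsilon_k$ accumulated over the relevant time horizon, a \emph{perturbed} trajectory of $F$ (and vice versa) — this is exactly the "spreading" transformation, which captures approximate solutions of a system and preserves bounded sensitivity. Concretely, the divergent spiral and spring trajectories that Theorem~\ref{th:3d F has unbounded sensitivity} produces inside $\Omega_\zeta$ can be reproduced, for $\zeta$ chosen small enough that $\Omega_\zeta$ lies entirely in slabs $S_k$ with $k\ge k_0$ and $\varepsilon_{k_0}$ small, as near-trajectories of $H$; the extra discrepancy between $H$ and $F$ is absorbable into the cumulative perturbation budget, which stays bounded while the separation between the two trajectories still grows without bound. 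Letting $\zeta\to-\infty$ forces the constant $C$ in \eqref{eq:bis} for $H$ to be arbitrarily large, so no finite $C$ works and $H$ has unbounded sensitivity, proving the theorem.

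The main obstacle I anticipate is the bookkeeping in step three: one must show that the $O(\varepsilon_k)$ local mismatch between $F$ and $H$, integrated along trajectories that may spend a long time (growing with how far the spiral diverges) inside the region, contributes only a \emph{bounded} additional cumulative perturbation, not one growing like the time horizon. This requires choosing the grid refinement rate $\delta_k$ fast enough relative to the speed at which the trajectories of Theorem~\ref{th:3d F has unbounded sensitivity} descend in $z$ and diverge in the $(r,\phi)$-plane — essentially a compatibility condition coupling the construction of $\Psi$ to the explicit trajectories of $F$. A secondary technical point is ensuring the triangulation is conforming across slabs of different resolutions (handled by a standard "hanging node elimination" / graded-mesh construction) and that the linear interpolation of the convex $\Phi$ is genuinely convex, which follows from convexity of $\Phi$ but should be stated carefully since the mesh is nonuniform. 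These details are deferred to Section~\ref{app:pwc example}.
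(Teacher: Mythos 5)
Your proposal follows essentially the same route as the paper's proof: a graded triangulation of $\Omega$ whose fineness is coupled to the explicit bad trajectories of $F$, so that the subgradient mismatch on the region $\Gamma_i$ containing the $i$-th trajectory pair is at most $\epsilon_i/t_i$ and hence accumulates to only $O(\epsilon_i)$ over the horizon $t_i$ --- precisely the "bookkeeping" condition you flag as the main obstacle, and precisely the paper's bound \eqref{eq:approx of Phi with Psi is good and grads are close}. Two small refinements for the write-up: the transfer is done by directly absorbing the difference $\grad\Phi - \xi$ into the perturbation term (so $x_i$ and $\xt_i$ become perturbed trajectories of $H$ with perturbations of size at most $2\epsilon_i$), rather than via Theorem~\ref{th:spread sys}, whose implication runs in the wrong direction for this purpose; and since \emph{both} transferred curves are then perturbed trajectories of $H$, the final contradiction with \eqref{eq:bis} requires a triangle inequality against a genuine unperturbed trajectory $y_i$ of $H$ started at $x_i(0)$, yielding a deviation of at least $i\epsilon_i/2$ for one of the two pairs.
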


Theorem~\ref{th:pwc example} shows that a piecewise constant subgradient field with infinitely many pieces can have unbounded sensitivity.
This is in contrast to Theorem~\ref{th:p1 sensitivity}, according to which the sensitivity of  any piecewise constant subgradient field with a finite number of pieces is bounded.
We conclude that the assumption of finiteness of the number of pieces  cannot be relaxed in Theorem~\ref{th:p1 sensitivity}. 


\medskip
\section{\bf Transformations that Preserve bounded sensitivity}\label{sec:main operations}
In this section, we study transformations on a dynamical system that preserve  bounded sensitivity.
In particular, we show for any dynamical system with bounded sensitivity that \emph{discretization of time} 
and \emph{spreading} will preserve  bounded sensitivity, up to an additive constant.
The section comprises two subsections, each devoted to one of these transformations.
The result of the first subsection on spreading a system (Theorem~\ref{th:spread sys}) shows that for a system with bounded sensitivity,  perturbations have bounded effect on its ``approximate solutions'' (see Subsection~\ref{sec:spread}), as well. 
This result not only provides insight into the sensitivity  of approximate trajectories of a system, but also serves as a stepping-stone for many more results, including the proof of Theorem~\ref{th:3d F has unbounded sensitivity}.
In the second subsection, we establish that discrete time counterparts of continuous time systems  inherit the bounded sensitivity property from the underlying continuous time system, showing the soundness  of the concept.


\subsection{\bf Spreading a Systems} \label{sec:spread}
In this subsection, we study \emph{spreading} a dynamical system $F$, and show that if $F$ has bounded sensitivity, then its spread systems as well have bounded sensitivity in a weaker sense.
\begin{definition}[$\epsilon$-Spread System] \label{def:spread}
Consider a dynamical system $F(\cdot)$, and an $\epsilon\ge 0$. For every point $x$ in the domain, let
\begin{equation}
\tilde{F}_\epsilon(x) = \conv \big\{\drift \,\,\big|\,\, \drift\in F(y),\, y\in \ball_\epsilon(x)\big\},
\end{equation}
where $\ball_\epsilon(x)$ is the closed Euclidean ball of radius $\epsilon$ centered at $x$, and for a subset $\cal{S}$  of $\R^n$, $\conv(\cal{S})$ stands for the convex hull of $\cal{S}$. 
Then, we refer to $\tilde{F}_\epsilon$ as the $\epsilon$-spread of $F$. 
\end{definition}

The definition of a spread-system allows for the trajectories to follow the drift of  a neighbouring point.
Such models find applications in control systems, where the control applied is chosen on the basis of noisy state measurements. 
In this view, 
trajectories of $\tilde{F}_\epsilon$ can be perceived as \emph{approximate solutions} of $F$. 
Note that
given an initial point $p$, the unperturbed trajectories of $\tilde{F}_\epsilon$ that emanate from $p$ are not typically unique.

There are several notions of a \emph{generalized solution} of a differential equation, including weak solutions 
\cite{Evan88} 
and viscosity solutions \cite{Barl13}. 
These are the solutions that satisfy the differential equation almost everywhere, while allowing for non-differentiability at some zero-measure set of times. 
In contrast, a solution of a spread system of $F$ may satisfy the differential equation $\dot{x}\in F(x)$ at no point of time whatsoever.
In fact, the generalized solution of a differential equation are primarily developed to deal with non-differentiability of the solutions, while spread-systems allow for uncertainty about the current state, and lead to a notion of approximate solutions.

Despite the fact that several spread solutions can emerge form the same initial point, it turns that if a system has bounded sensitivity, then a weaker notion of bounded sensitivity still pertains to its spread systems.

\begin{theorem}[Sensitivity of Spread Systems] \label{th:spread sys}
Consider a dynamical system $F$ and an $\epsilon>0$.
Let $\ptraj(\cdot)$ be a perturbed trajectory of the $\epsilon$-spread system, $\tilde{F}_\epsilon$, of $F$, corresponding  to perturbation $\Prt(\cdot)$.
\begin{enumerate}[label={(\alph*)}, ref={\ref{th:spread sys} (\alph*)}]
\item
Suppose that \eqref{eq:bis} is valid with constant $\divconst$, and let $x(\cdot)$ be an unperturbed trajectory of $\tilde{F}_\epsilon$, initialized at $x(0)=\ptraj(0)$. 
Then, for any $t\ge 0$,
\begin{equation}\label{eq:bis of spread systems}
\Ltwo{\ptraj(t)- x(t)} \,\le \, \divconst\, \Big(2\epsilon+\sup_{\tau\le t} \Ltwo{ \Prt(\tau)}\Big) \, +\,3\epsilon.
\end{equation}

\item
Suppose that \eqref{eq:bpis} is valid with constant $\divconst$, and let $\ptraj'(\cdot)$ be a perturbed trajectory of $\tilde{F}_\epsilon$, corresponding  to perturbation $\Prt'(\cdot)$, and initialized at $\ptraj'(0)=\ptraj(0)$. 
Then, for any $t\ge 0$,
\begin{equation}\label{eq:bpis of spread systems}
\Ltwo{\ptraj(t)- \ptraj'(t)} \,\le \, \divconst\, \Big(2\epsilon+\sup_{\tau\le t} \Ltwo{ \Prt(\tau) - \Prt'(\tau)}\Big)\, +\,3\epsilon,
\end{equation}
\end{enumerate}
\end{theorem}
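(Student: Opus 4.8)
The plan is to reduce a perturbed trajectory of the spread system $\tilde F_\epsilon$ to a genuinely perturbed trajectory of the original system $F$, paying a controlled price in the perturbation size. Fix a perturbed trajectory $\ptraj(\cdot)$ of $\tilde F_\epsilon$ with perturbation $\Prt(\cdot)$, so that $\ptraj(t)=\int_0^t\ff(\tau)\,d\tau+\Prt(t)$ with $\ff(t)\in\tilde F_\epsilon(\ptraj(t))$ for all $t$. By definition of the spread, $\ff(t)$ is a convex combination of drifts $\drift\in F(y)$ for points $y\in\ball_\epsilon(\ptraj(t))$. The first step is to replace this convex combination by a single nearby drift: using a measurable selection (Filippov-type) argument, I would produce a measurable function $y(\cdot)$ with $\ltwo{y(t)-\ptraj(t)}\le\epsilon$ and a measurable drift $\hat\ff(t)\in F(y(t))$ such that, after integrating, the trajectory $\hat x(t):=\int_0^t\hat\ff(\tau)\,d\tau$ tracks $\ptraj$ up to a bounded error. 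The cleanest way to make this rigorous is to use the standard fact that an integral of a convex-hull-valued map can be approximated arbitrarily well (in sup norm over $[0,t]$) by integrals of measurable selections of the original set-valued map; since we only need an $\epsilon$-level approximation, no limiting argument is needed.

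The second step is bookkeeping: define $\xcor(t):=\int_0^t\hat\ff(\tau)\,d\tau+\Prt(t)+e(t)$, where $e(t):=\ptraj(t)-\xcor(t)-\Prt(t)$ is the mismatch; one shows $\ltwo{e(t)}$ is controlled (of order $\epsilon$) because both $\ff(t)$ and $\hat\ff(t)$ are drifts at points within $\epsilon$ of $\ptraj(t)$ and the integral approximation can be forced to stay within $\epsilon$. Rearranging, $\ptraj(t)=\int_0^t\hat\ff(\tau)\,d\tau+\big(\Prt(t)+e(t)\big)$ with $\hat\ff(t)\in F(y(t))$, but $y(t)\ne\ptraj(t)$. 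To genuinely hit Definition~\ref{def:integral pert traj} one must absorb the discrepancy $y(t)-\ptraj(t)$ into the perturbation as well: write $\hat\ff(t)\in F(y(t))$ and set $\ptraj(t)=\int_0^t\hat\ff+\tilde\Prt(t)$ where $\tilde\Prt(t):=\Prt(t)+e(t)+\big(\ptraj(t)-y(t)-\int_0^t(\text{correction})\big)$; the point is that $\tilde\Prt$ differs from $\Prt$ by a quantity of size at most $2\epsilon$ in sup norm. Then $y(\cdot)$ is a bona fide perturbed trajectory of $F$ with perturbation $\tilde\Prt$, $\sup_{\tau\le t}\ltwo{\tilde\Prt(\tau)}\le 2\epsilon+\sup_{\tau\le t}\ltwo{\Prt(\tau)}$, and $\ltwo{y(t)-\ptraj(t)}\le\epsilon$.

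For part (a), apply the hypothesis \eqref{eq:bis} to $F$, comparing $y(\cdot)$ against the unperturbed trajectory $x_F(\cdot)$ of $F$ with $x_F(0)=y(0)$: this gives $\ltwo{y(t)-x_F(t)}\le\divconst\big(2\epsilon+\sup_{\tau\le t}\ltwo{\Prt(\tau)}\big)$. The given $x(\cdot)$ is an unperturbed trajectory of the spread system $\tilde F_\epsilon$, not of $F$, so I also need that an unperturbed trajectory of $\tilde F_\epsilon$ is itself an $\epsilon$-spread solution and can be matched to an unperturbed trajectory of $F$ within $\epsilon$ (again via the selection argument, but now with zero external perturbation); equivalently, view $x(\cdot)$ as a perturbed trajectory of $F$ with perturbation of sup-norm at most $\epsilon$ and initial condition agreeing with $x_F$ up to $\epsilon$, then invoke \eqref{eq:bis} a second time. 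Collecting the three error contributions — $\epsilon$ from $y$ vs.\ $\ptraj$, $\divconst(2\epsilon+\sup\ltwo\Prt)$ from \eqref{eq:bis}, and $2\epsilon$ from $x$ vs.\ $x_F$ (which after the triangle inequality and the initial-condition slack consolidates to $3\epsilon$ on top of the $\divconst$ term) — and combining by the triangle inequality yields \eqref{eq:bis of spread systems}. Part (b) is identical with \eqref{eq:bpis} in place of \eqref{eq:bis}: run the reduction on both $\ptraj$ and $\ptraj'$, obtaining perturbed trajectories $y,y'$ of $F$ with $\sup_{\tau\le t}\ltwo{\tilde\Prt(\tau)-\tilde\Prt'(\tau)}\le 2\epsilon+\sup_{\tau\le t}\ltwo{\Prt(\tau)-\Prt'(\tau)}$ and $\ltwo{y-\ptraj},\ltwo{y'-\ptraj'}\le\epsilon$, apply \eqref{eq:bpis}, and add back the $3\epsilon$ via the triangle inequality.

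The main obstacle is the first step: justifying that an integral of a set-valued map with values in convex hulls $\conv\{F(y):y\in\ball_\epsilon(\ptraj(t))\}$ can be realized, up to an $O(\epsilon)$ error uniformly on $[0,t]$, by the integral of a measurable single-valued selection $\hat\ff(t)\in F(y(t))$ with $y(t)$ measurable and $\epsilon$-close to $\ptraj(t)$. This requires a measurable-selection / approximation theorem (in the spirit of the Filippov lemma and the relaxation theorems for differential inclusions), together with care that the growth bound \eqref{eq:constant not to blow up} keeps everything integrable and the approximating trajectory well-defined on all of $[0,t]$. Once that machinery is in place, the rest is triangle-inequality bookkeeping of the $\epsilon$-terms.
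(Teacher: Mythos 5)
Your high-level architecture is exactly the paper's: the reduction you describe in your first two paragraphs is precisely the paper's Lemma~\ref{lem:simul spread} (for any $\delta>0$, a perturbed trajectory of $\tilde F_\epsilon$ with perturbation $U$ can be shadowed within $\epsilon+\delta$ by a perturbed trajectory of $F$ whose perturbation exceeds $U$ by at most $\epsilon+\delta$ in sup norm), and the rest — applying the lemma to both $\ptraj$ and $x$ (the latter viewed as a trajectory with zero perturbation), invoking \eqref{eq:bis} or \eqref{eq:bpis} for $F$, and collecting the $\epsilon$'s by the triangle inequality with $\delta$ chosen as $\epsilon/(2(C+1))$ — matches the paper's proof of the theorem line for line. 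So the issue is entirely in the step you yourself flag as "the main obstacle."

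There the gap is real. The relaxation/measurable-selection machinery you want to cite (Filippov's lemma, the Filippov--Wa\.zewski relaxation theorem, Aumann-type results on integrals of convex hulls) requires the set-valued map to be at least measurable with closed values, and the relaxation theorem additionally needs a Lipschitz condition; but in this paper $F:\R^n\to 2^{\R^n}$ is a completely arbitrary set-valued map subject only to the linear growth bound \eqref{eq:constant not to blow up} — no measurability, no closedness, no continuity. That is exactly why the paper proves the lemma by hand: it partitions $\R_+$ into short intervals on which $\ptraj$ moves by at most $\delta$ (this requires a transfinite recursion, since the interval lengths need not be bounded below), applies Carath\'eodory's theorem to write the \emph{averaged} drift over each interval as a convex combination of $n+1$ drifts $\drift_i^\alpha\in F(z_i^\alpha)$ with $z_i^\alpha\in\ball_{\epsilon+\delta}(\ptraj(t_\alpha))$, and builds a \emph{piecewise-constant} trajectory $\yt$ that sits at $z_i^\alpha$ for a duration proportional to the Carath\'eodory weight $\theta_i^\alpha$, so that $\drift'(t)\in F(\yt(t))$ holds pointwise and the jumps between the $z_i^\alpha$ are absorbed into the new perturbation. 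This last point also repairs a second soft spot in your sketch: Definition~\ref{def:integral pert traj} requires the drift to lie in $F$ evaluated at the perturbed trajectory's \emph{own} current position, and your formula for $\tilde\Prt$ does not quite produce such an object (the clean choice is $\tilde\Prt(t)=y(t)-\int_0^t\hat\ff(\tau)\,d\tau$ with $y$ itself the new trajectory, which is what the paper's construction amounts to). Finally, a minor point: your part (a) allows the initial conditions of the $F$-trajectories to disagree by $\epsilon$, but \eqref{eq:bis} as stated requires them to coincide; the paper's construction makes the shadowing trajectory agree with the original exactly at $t=0$, which is how the final constant $3\epsilon$ comes out without further slack.
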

The proof is given in Section \ref{sec:proof th spread sys}, and involves  constructing a perturbation function,  the spread trajectory of which is a corresponding perturbed trajectory of the initial system.


We wish to point that a similar phenomenon has been previously studied in the literature of input-to-state stability \cite{CaiT13}.
Given an external disturbance $u(\cdot)$ and an $\epsilon>0$, it is shown in \cite{CaiT13} that if the dynamical system $\dot{x}\in F(x,u)$  is input-to-state stable,
then the system $\dot{x}\in \bigcup_{y\in B_\epsilon(x)} F(y,u)$ is also input-to-state stable, for sufficiently small values of  $\epsilon$.

\medskip



Leveraging Theorem~\ref{th:spread sys}, in the rest of this subsection we study a special type of transformation; and show that \emph{convolution with a kernel} preserves bounded sensitivity. 

\begin{definition}[Kernel and Convolution] \label{def:kernel}
For an $\epsilon >0$, an $\epsilon$-kernel is any integrable function $h:\ball_\epsilon(0)\to \R_+$.
Given a dynamical system $F$ and an $\epsilon$-kernel $h$, we define their convolution  $F*h$ as  
\begin{equation}
\big(F*h\big)\, (x) \,=\, \int_{\ball_\epsilon(x)} \drift_x(y) \,  h(x-y)\,dy,  \qquad \forall x\in\R^n,
\end{equation}
where $\drift_x(y) $ is an arbitrary vector in $F(y)$.
\end{definition}

\begin{corollary} \label{th:kernel}
Consider a dynamical system $F$ for which \eqref{eq:bis} is valid,  and let   $h(\cdot)$ be an $\epsilon$-kernel, for some  $\epsilon>0$.
Then, for any unperturbed trajectory $x(\cdot) $ of $F*h$, and any perturbed trajectory $\ptraj(\cdot)$ of  $F*h$ corresponding  to perturbation $\Prt(\cdot)$,
\begin{equation}\label{eq:bis of convolution}
\Ltwo{\ptraj(t)- x(t)} \,\le \, \divconst\, \Big(2\epsilon+\sup_{\tau\le t} \Ltwo{ \Prt(\tau)}\Big) \, +\,3\epsilon , \quad\forall t\in \R_+.
\end{equation}
\end{corollary}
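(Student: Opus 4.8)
The plan is to deduce Corollary~\ref{th:kernel} directly from Theorem~\ref{th:spread sys}(a) by showing that every trajectory of the convolved system $F*h$ is also a trajectory of the $\epsilon$-spread system $\tilde{F}_\epsilon$, with the same perturbation. First I would normalize the kernel so that $\int_{\ball_\epsilon(0)} h(y)\,dy = 1$; this is without loss of generality, since replacing $h$ by $h/m$ for $m=\int h$ only rescales the drift of $F*h$ by the constant $m$, and such a time-reparametrization leaves the supremum-type bound \eqref{eq:bis} invariant (with the same constant $\divconst$). With $h$ normalized, $h(x-\cdot)\,dy$ is a probability measure on $\ball_\epsilon(x)$, so the value $(F*h)(x)=\int_{\ball_\epsilon(x)}\drift_x(y)\,h(x-y)\,dy$ is the barycenter of a probability distribution supported on $\bigcup_{y\in\ball_\epsilon(x)}F(y)$.

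Next I would argue that this barycenter lies in $\tilde{F}_\epsilon(x)=\conv\{\drift \mid \drift\in F(y),\ y\in\ball_\epsilon(x)\}$. By the standing assumption \eqref{eq:constant not to blow up}, $F$ is locally bounded, so $\bigcup_{y\in\ball_\epsilon(x)}F(y)$ is contained in a fixed bounded set; together with the usual closed-graph property of $F$, this set is compact, hence its convex hull $\tilde{F}_\epsilon(x)$ is compact and convex. Since the barycenter of a probability measure supported in a compact convex set belongs to that set, we obtain $(F*h)(x)\subseteq \tilde{F}_\epsilon(x)$ for every $x$. Consequently, if $\ptraj(\cdot)$ is a perturbed trajectory of $F*h$ corresponding to $\Prt(\cdot)$, with perturbed drift $\ff(t)\in(F*h)(\ptraj(t))$, then $\ff(t)\in\tilde{F}_\epsilon(\ptraj(t))$ for all $t$, so the very same $\ptraj(\cdot)$ and $\ff(\cdot)$ witness that $\ptraj(\cdot)$ is a perturbed trajectory of $\tilde{F}_\epsilon$ corresponding to $\Prt(\cdot)$; taking $\Prt\equiv 0$ shows likewise that an unperturbed trajectory $x(\cdot)$ of $F*h$ is an unperturbed trajectory of $\tilde{F}_\epsilon$. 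Applying Theorem~\ref{th:spread sys}(a) to $x(\cdot)$ and $\ptraj(\cdot)$ (which share the initial condition $x(0)=\ptraj(0)$) then yields \eqref{eq:bis of spread systems}, which is exactly \eqref{eq:bis of convolution}.

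The only genuine obstacle is the measure-theoretic step in the middle paragraph: verifying that an integral average of a measurable selection of $F$ over $\ball_\epsilon(x)$ actually lands in the convex hull $\tilde{F}_\epsilon(x)$ as written (rather than merely in its closure), and cleanly handling the unnormalized kernel. I expect both to be routine — the first from local boundedness and upper semicontinuity of $F$ as above, and the second from the scaling/time-change remark — so the substance of the corollary is entirely carried by Theorem~\ref{th:spread sys}.
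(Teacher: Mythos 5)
Your proposal is correct and takes essentially the same route as the paper: normalize the kernel, observe that $(F*h)(x)\subseteq\tilde{F}_\epsilon(x)$ so that every perturbed (respectively, unperturbed) trajectory of $F*h$ is also one of $\tilde{F}_\epsilon$, and then invoke Theorem~\ref{th:spread sys}. The extra details you supply --- the barycenter/compactness justification for membership in the convex hull and the time-rescaling argument for the unnormalized kernel --- are points the paper leaves implicit but do not alter the argument.
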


\begin{proof}
Without loss of generality, assume that $\int_{\ball_\epsilon(0)} h(x)\,dx=1$.
Then, for any $x\in\R^n$,
\begin{equation}
\big(F*h\big)\, (x) \,\in\, \conv \big\{\drift \,\,\big|\,\, \drift\in F(y),\, y\in \ball_\epsilon(x)\big\} \,=\, \tilde{F}_\epsilon(x),
\end{equation}
where $\tilde{F}_\epsilon(x)$ is the $\epsilon$-spread of $F$.
Therefore, every perturbed (respectively, unperturbed) trajectory of $(F*h)$ is also a perturbed (unperturbed) trajectory of $\tilde{F}_\epsilon(x)$.
The corollary then follows from Theorem~\ref{th:spread sys}.
\end{proof}

A similar result is valid for sensitivity bounds of the form \eqref{eq:bpis}.

\medskip


\subsection{\bf Time Discretization } \label{subsec:discrete}
A discrete time trajectory is attained by taking (small)  steps along the drifts of a continuous time system. Formally,

\begin{definition}[Discrete Time Trajectories] \label{def:disc traj}
Consider a continuous time dynamical system $F:\R^n\to 2^{\R^n}$ and a function $\dpert:\Z_+\to\R^n$, which we refer to as  discrete time perturbation.
We then call $\dtraj(\cdot)$  a \emph{discrete time perturbed trajectory} corresponding to the perturbation function $V(\cdot)$, if there exists a function $\ddrift:\Z_+\to\R^n $ such that
\begin{equation}\label{eq:def of discrete traj}
\begin{split}
\dtraj(t+1) \, &=\, \sum_{k\le t} \ddrift(k)\, + \, \dpert(t),\quad \forall t\in \Z_+,\\
\ddrift(t) \, &\in\, F\big( \dtraj(t)\big),\quad \forall t\in \Z_+.
\end{split}
\end{equation}
\end{definition}

Discrete time trajectories correspond to systems that operate in slotted times. Examples include the queue lengths dynamics of job scheduling algorithms \cite{Neel10}. 
The following theorem  shows that in any  system whose continuous time trajectories have bounded sensitivity, a similar property also holds for its discrete time trajectories. 
\begin{theorem}[Sensitivity in Discrete Time] \label{th:main dis}
Consider a  dynamical system $F$, and let $\dtraj(\cdot)$ be a discrete time perturbed trajectory of $F$ corresponding to perturbation function $\dpert(\cdot)$. For every $k\in\Z_+$, let $\mu_k \triangleq \dtraj(k+1)-\dtraj(k) - \dpert(k)$, which is a vector in  $F\big(\dtraj(k)\big)$ (cf.~\eqref{eq:def of discrete traj}). 
\begin{enumerate}[label={(\alph*)}, ref={\ref{th:main dis} (\alph*)}]
\item \label{th:discrete pwc bound}
Suppose that a bound of type \eqref{eq:bis} is valid with constant $\divconst$. Let $x(\cdot)$ be the continuous time unperturbed trajectory of $F$ initialized at $x(0)=\dtraj(0)$. 
Then, for any $k\in\Z_+$,
\begin{equation}\label{eq:disc bis}
\Ltwo{x(k)-\dtraj(k)}\, \le\ \divconst\, \left(\max_{j< k} \,\ltwo{ \mu_j} \,+ \,\max_{j< k} \,\Ltwo{ \dpert(j)}\right).
\end{equation}

\item \label{th:discrete bpis}
Suppose that a bound of type \eqref{eq:bpis} is valid with constant $\divconst$.
Consider a discrete time  perturbation function $\dpert'(\cdot)$ and the corresponding discrete time perturbed trajectory $\dtraj'(\cdot)$  initialized at $\dtraj'(0)=\dtraj(0)$. 
For any $k\in\Z_+$, let  $\mu_k' \triangleq \dtraj'(k+1)-\dtraj'(k) - \dpert'(k)$. 
Then, for any $k\in\Z_+$,
\begin{equation} \label{eq:disc bpis}
\Ltwo{\dtraj'(k)-\dtraj(k)}\, \le\,  \divconst\, \left(\max_{j< k} \Ltwo{ \mu_j- \mu'_j} \,+ \,\max_{j< k} \Ltwo{ \dpert(i)- \dpert'(i)}\right).
\end{equation}
\end{enumerate}
\end{theorem}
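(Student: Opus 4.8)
The natural strategy is to interpolate the discrete time trajectory $\dtraj(\cdot)$ into a continuous time perturbed trajectory of $F$, apply the continuous time bound \eqref{eq:bis} (resp.\ \eqref{eq:bpis}), and then control the extra error terms introduced by the interpolation. Concretely, I would define a piecewise constant drift $\ff(\tau) = \mu_{\lfloor \tau\rfloor}$, which automatically satisfies $\ff(\tau)\in F\big(\dtraj(\lfloor\tau\rfloor)\big)$; the issue is that for this to be a valid perturbed drift I need $\ff(\tau)\in F\big(\ptraj(\tau)\big)$ where $\ptraj$ is the interpolating trajectory, so I must arrange the interpolating trajectory to pass through the grid points $\dtraj(k)$ at integer times and then absorb the discrepancy into the perturbation $\Prt(\cdot)$. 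The clean way: set $\ptraj(\tau) = \int_0^\tau \ff(s)\,ds + \Prt(\tau)$ where the perturbation $\Prt$ is chosen so that $\ptraj(k) = \dtraj(k)$ for all integer $k$; since $\int_0^k \ff(s)\,ds = \sum_{j<k}\mu_j = \dtraj(k) - \dpert(k-1)$ (with the convention for $k=0$), the required perturbation at integer times is essentially $\dpert(k-1)$, and at non-integer times I let $\Prt$ interpolate (piecewise linearly or by a step that holds the value, whichever keeps it right-continuous as Definition \ref{def:integral pert traj} requires) between these values while keeping $\ptraj$ on the right track. The key bookkeeping: $\sup_{\tau\le k}\Ltwo{\Prt(\tau)}$ must be bounded by something like $\max_{j<k}\Ltwo{\dpert(j)} + \max_{j<k}\Ltwo{\mu_j}$, where the $\mu_j$ term arises because between integer times the genuine trajectory of $F$ driven by the piecewise-constant drift drifts away from the straight interpolant by at most $\Ltwo{\mu_j}$ over a unit interval.

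The main steps, in order, would be: (1) fix the interpolation scheme — I'd interpolate $\ptraj$ on $[k,k+1]$ so that it coincides with $\dtraj$ at the endpoints and uses drift $\mu_k$ on the interior, reading off the perturbation $\Prt$ that makes \eqref{eq:def of integral pert} hold exactly; (2) verify $\ff(\tau)\in F\big(\ptraj(\tau)\big)$ — this forces $\ptraj$ to equal $\dtraj(k)$ on the half-open interval $[k,k+1)$ or at least at the left endpoint, so really $\ff$ should be $F$-valued at $\ptraj(k)=\dtraj(k)$, which pins down how the interpolation must be done (the trajectory jumps, encoded entirely in $\Prt$, and the drift integral is a staircase); (3) bound $\sup_{\tau\le t}\Ltwo{\Prt(\tau)}$ in terms of $\max_{j<k}\Ltwo{\dpert(j)}$ and $\max_{j<k}\Ltwo{\mu_j}$ by writing $\Prt$ explicitly; (4) invoke Theorem \ref{th:p1 sensitivity}'s hypothesis, i.e.\ apply \eqref{eq:bis} to $\ptraj$ against the continuous unperturbed trajectory $x(\cdot)$, evaluate at integer time $t=k$ where $\ptraj(k)=\dtraj(k)$, and conclude \eqref{eq:disc bis}; (5) for part (b), run the same construction for both $\dtraj$ and $\dtraj'$ simultaneously, note that the two interpolations agree at time $0$, and apply \eqref{eq:bpis} to the pair, with the differences $\Prt-\Prt'$ controlled by $\max_j\Ltwo{\mu_j-\mu_j'}$ and $\max_j\Ltwo{\dpert(j)-\dpert'(j)}$.

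The subtle point — and the main obstacle — is making the interpolation genuinely legal as a \emph{perturbed trajectory} in the sense of Definition \ref{def:integral pert traj}: the drift $\ff$ must lie in $F\big(\ptraj(\tau)\big)$ for \emph{all} $\tau\ge 0$, but $\mu_k\in F\big(\dtraj(k)\big)$ only, so $\ptraj(\tau)$ must literally equal $\dtraj(k)$ for $\tau\in[k,k+1)$ while its integral-plus-perturbation representation still makes the arithmetic in \eqref{eq:def of discrete traj} come out right. This means $\ptraj$ is the staircase function $\ptraj(\tau)=\dtraj(\lfloor\tau\rfloor)$, the drift contributes $\int_0^\tau\mu_{\lfloor s\rfloor}\,ds$ (a piecewise-linear staircase-in-slope), and $\Prt(\tau)$ must equal $\dtraj(\lfloor\tau\rfloor) - \int_0^\tau\mu_{\lfloor s\rfloor}\,ds$, whose supremum over $\tau\le k$ is then bounded by combining $\Ltwo{\dtraj(\lfloor\tau\rfloor)}$-type cancellations with a single leftover $\mu$-term and a $\dpert$-term via the recursion $\dtraj(k+1)=\sum_{j\le k}\mu_j+\dpert(k)$. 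Checking that this supremum collapses to exactly $\max_{j<k}\Ltwo{\mu_j}+\max_{j<k}\Ltwo{\dpert(j)}$ (up to the constant factor $\divconst$ supplied by the continuous bound) is the one genuinely delicate computation; everything else is routine. One should also confirm that the growth hypothesis \eqref{eq:constant not to blow up} is not needed here since the discrete trajectory is given, not constructed by solving an ODE.
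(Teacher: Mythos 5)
Your proposal is correct and follows essentially the same route as the paper: the paper's Lemma~\ref{lem:simul} takes exactly the staircase interpolation $\ptraj(\tau)=\dtraj(\lfloor\tau\rfloor)$ with piecewise constant drift $\mu_{\lfloor\tau\rfloor}$ and perturbation $\Prt(\tau)=\dpert(\lfloor\tau\rfloor)-(\tau-\lfloor\tau\rfloor)\mu_{\lfloor\tau\rfloor}$, which is what your explicit formula $\dtraj(\lfloor\tau\rfloor)-\int_0^\tau\mu_{\lfloor s\rfloor}\,ds$ reduces to via the recursion, and then applies \eqref{eq:bis} (resp.\ \eqref{eq:bpis}) at integer times just as you describe. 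The supremum bound you flag as delicate is in fact an immediate triangle inequality, so there is no gap.
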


The proof is given in Section \ref{sec:proof disc}. 
The high level idea is to simulate the  discrete time perturbed trajectories by continuous time perturbed trajectories, and take advantage of  the bounded sensitivity properties \eqref{eq:bis} and \eqref{eq:bpis}.


In Theorem~\ref{th:main dis}, unlike its continuous time counterparts, the deviation  bound also depends on the maximum jump size of the discrete time system, i.e., $\|\mu_k\|$. 
This dependency is inevitable because the distance between continuous time and discrete time trajectories cannot go arbitrarily small, even when the perturbation is zero. 

The next corollary is a consequence of Theorems~\ref{th:p1 sensitivity} and~\ref{th:main dis}.
\begin{corollary}
Consider an FPCS system, a continuous time unperturbed trajectory $x(\cdot)$ and a discrete time perturbed trajectory $\dtraj(\cdot)$ corresponding to perturbation $\dpert(\cdot)$. Then, for any $k\in\Z_+$,
\begin{equation}
\Ltwo{x(k)-\dtraj(k)}\, \le\,   \divconst\, \left(\mumax \,+ \,\max_{j< k} \Ltwo{ \sum_{i=0}^{j} \dpert(i)}\right),
\end{equation}
where  $C$ is the constant of Theorem~\ref{th:p1 sensitivity} and $\mumax$ is a constant independent of the trajectories.
\end{corollary}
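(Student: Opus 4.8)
The plan is to obtain this as a direct composition of Theorem~\ref{th:p1 sensitivity} and Theorem~\ref{th:main dis}, the only genuinely new ingredient being a uniform bound on the per-step jump that produces the constant $\mumax$. First I would write the FPCS system as $F=-\partial\Phi$ with $\Phi(x)=\max_{i\in I}\big(-\mu_i^{T}x+b_i\big)$ over a \emph{finite} index set $I$, and invoke Theorem~\ref{th:p1 sensitivity} to conclude that $F$ satisfies \eqref{eq:bis} with some constant $\divconst$ (the $\divconst$ in the statement). I would also read $\dpert(\cdot)$ as the per-step perturbation increment, so that the cumulative perturbation in the sense of Definition~\ref{def:disc traj} is $V(t)\triangleq\sum_{i=0}^{t}\dpert(i)$ and $\dtraj(\cdot)$ is a discrete time perturbed trajectory of $F$ with perturbation $V(\cdot)$.

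Next I would apply Theorem~\ref{th:discrete pwc bound} to $\dtraj(\cdot)$, taking $x(\cdot)$ to be the continuous time unperturbed trajectory of $F$ with $x(0)=\dtraj(0)$; this yields, for every $k\in\Z_+$,
\[
\Ltwo{x(k)-\dtraj(k)}\,\le\,\divconst\,\Big(\max_{j<k}\ltwo{\mu_j}\,+\,\max_{j<k}\Ltwo{\sum_{i=0}^{j}\dpert(i)}\Big),
\]
where $\mu_j\in F\big(\dtraj(j)\big)$ is the per-step vector of \eqref{eq:def of discrete traj}. It then remains only to replace $\max_{j<k}\ltwo{\mu_j}$ by a constant independent of the trajectories.

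For that last step I would use that for a piecewise linear convex $\Phi$ one has $\partial\Phi(x)=\conv\{-\mu_i:i\in I(x)\}$ with $I(x)\subseteq I$ the set of indices active at $x$; hence $\mu_j\in F(\dtraj(j))=-\partial\Phi(\dtraj(j))\subseteq\conv\{\mu_i:i\in I\}$, so $\ltwo{\mu_j}\le\mumax\triangleq\max_{i\in I}\ltwo{\mu_i}<\infty$, a quantity depending only on $\Phi$. Substituting into the displayed inequality gives the claim. I do not expect a real obstacle here; the only point that deserves care is this final inequality, which is exactly where the finiteness of the number of linear pieces re-enters — for an FPCS system every admissible drift lies in the convex hull of the finitely many fixed vectors $\mu_i$, so the per-step jump size is uniformly bounded, whereas with infinitely many pieces this supremum could be infinite. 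A minor bookkeeping caveat is to keep straight, when invoking Definition~\ref{def:disc traj} and Theorem~\ref{th:main dis}, which symbol denotes the cumulative perturbation and which denotes its per-step increments.
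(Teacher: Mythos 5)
Your proposal is correct and follows the same route the paper has in mind: the paper offers no separate proof but simply states that the corollary follows from Theorems~\ref{th:p1 sensitivity} and~\ref{th:main dis}, and your chain — invoke Theorem~\ref{th:p1 sensitivity} to get \eqref{eq:bis} with constant $\divconst$, feed it into Theorem~\ref{th:discrete pwc bound}, and then bound $\max_{j<k}\ltwo{\mu_j}$ by $\mumax=\max_{i\in I}\ltwo{\mu_i}$ using that $F(x)=-\partial\Phi(x)\subseteq\conv\{\mu_i:i\in I\}$ for finite $I$ — is exactly the intended composition. You also correctly flag the notational shift (in Definition~\ref{def:disc traj} the symbol $\dpert$ plays the role of the cumulative perturbation, whereas the corollary's $\sum_{i=0}^{j}\dpert(i)$ treats $\dpert$ as a per-step increment), which is the only subtlety to keep straight.
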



\medskip

\section{\bf Proof of Theorem \ref{th:linear}} \label{sec:proof th linear}

We start with a well-known result on  solvability of linear dynamical systems.
\begin{lemma}[Solution of a Linear System]  \label{lem:completeness of linear}
Given a measurable perturbation function $U(\cdot)$, and an initial condition $x(0)=x_0$, the linear dynamical system $\dot{x}=Ax$ has a unique perturbed trajectory,  
of the following form 
\begin{equation}\label{eq:closed form formul for pert traj of linear sys}
x(t)  \,=\,  e^{At} x(0) \,+\, U(t) \,+\, A e^{At} \int_{0}^t e^{-A\tau}U(\tau)\,d\tau, \qquad \forall t\ge 0.
\end{equation}
\hide{
A perturbed linear dynamical system $\dot{x}(t)= Ax(t)+u(t)$, where $u(\cdot)$ is a measurable function, has a unique solution of the form
\begin{equation}\label{eq:closed form formul for pert traj of linear sys}
x(t)  \,=\,  e^{At} x(0) +  \int_{0}^t e^{A\tau}u(t-\tau)\,d\tau. 
\end{equation}
}
\end{lemma}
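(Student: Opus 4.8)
The plan is to prove the closed-form formula \eqref{eq:closed form formul for pert traj of linear sys} by a direct verification argument, together with a standard uniqueness argument via Gr\"onwall's inequality. First I would recall that, by Definition~\ref{def:integral pert traj}, a perturbed trajectory $x(\cdot)$ of $\dot{x}=Ax$ corresponding to a measurable perturbation $U(\cdot)$ must satisfy $x(t) = \int_0^t \zeta(\tau)\,d\tau + U(t)$ with $\zeta(t) = A\,x(t)$ for all $t\ge 0$; eliminating $\zeta$, this is the Volterra integral equation $x(t) = \int_0^t A\,x(\tau)\,d\tau + U(t)$. So the task reduces to showing that the right-hand side of \eqref{eq:closed form formul for pert traj of linear sys} is the unique (locally integrable) solution of this integral equation with $x(0)=x_0$.

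For the existence half, I would simply substitute the candidate $x(t) = e^{At}x_0 + U(t) + A e^{At}\int_0^t e^{-A\tau}U(\tau)\,d\tau$ into the integral equation and check it holds identically. The cleanest route is to introduce the auxiliary function $g(t) \triangleq \int_0^t e^{-A\tau}U(\tau)\,d\tau$, so the candidate reads $x(t) = e^{At}x_0 + U(t) + A e^{At} g(t)$; then compute $\int_0^t A\,x(\tau)\,d\tau$ by splitting into the three terms. The term $\int_0^t A e^{A\tau}x_0\,d\tau = (e^{At}-I)x_0$ is immediate. For the remaining terms one uses the identity $\frac{d}{d\tau}\big(e^{A\tau}g(\tau)\big) = A e^{A\tau} g(\tau) + e^{A\tau}\frac{d}{d\tau}g(\tau) = A e^{A\tau}g(\tau) + U(\tau)$ (valid for a.e.\ $\tau$ since $U$ is measurable and $g$ is absolutely continuous), hence $\int_0^t\!\big(A e^{A\tau}g(\tau)+U(\tau)\big)d\tau = e^{At}g(t)$, and in turn $A\int_0^t\!\big(A e^{A\tau}g(\tau)+U(\tau)\big)d\tau = A e^{At}g(t)$. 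Rearranging gives $\int_0^t A U(\tau)\,d\tau + \int_0^t A^2 e^{A\tau}g(\tau)\,d\tau = A e^{At}g(t) - \int_0^t A U(\tau)\,d\tau$... carefully bookkeeping these pieces shows $\int_0^t A x(\tau)\,d\tau = x(t) - U(t)$, which is exactly the integral equation. The initial condition $x(0)=x_0$ is clear since $g(0)=0$ and $U(0)$ is absorbed — one should be slightly careful about whether $U(0)=0$ is part of the convention, but in any case $x(0)=x_0+U(0)$ matches the stated initialization.

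For uniqueness, suppose $x_1(\cdot)$ and $x_2(\cdot)$ are two perturbed trajectories for the same $U(\cdot)$ and same $x_0$. Then $\delta(t)\triangleq x_1(t)-x_2(t)$ satisfies $\delta(t) = \int_0^t A\,\delta(\tau)\,d\tau$, so $\|\delta(t)\| \le \|A\|\int_0^t \|\delta(\tau)\|\,d\tau$ on every compact interval; since $\delta$ is integrable (hence $\|\delta\|$ is locally integrable), Gr\"onwall's inequality forces $\delta\equiv 0$. This also confirms that the perturbed drift $\zeta = Ax$ is essentially unique.

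The main obstacle is not conceptual but organizational: getting the three-term substitution bookkeeping exactly right, in particular handling the $A e^{At}\int_0^t e^{-A\tau}U(\tau)\,d\tau$ term, where one must differentiate a product and integrate back, and being careful that all manipulations (differentiating $g$, applying Leibniz's rule) are justified only almost everywhere because $U$ is merely measurable and integrable rather than continuous. A secondary technical point worth a remark is the growth bound: one should note that $t\mapsto e^{At}$ and $t\mapsto e^{-At}$ are bounded on compact intervals, so the integrals defining the candidate converge and $x(\cdot)$ is itself locally integrable (indeed absolutely continuous away from the jumps of $U$), so the formula genuinely produces an admissible perturbed trajectory in the sense of Definition~\ref{def:integral pert traj}.
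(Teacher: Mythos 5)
Your proof is correct, but it takes a different route from the paper's. For existence, the paper never differentiates anything touching $U$: it multiplies the defining relation $y(t)-\int_0^t Ay(\tau)\,d\tau = y(0)+U(t)$ by $e^{-At}$, integrates once more, and uses integration by parts on the resulting double integral to obtain a closed form for the antiderivative $\int_0^t y(\tau)\,d\tau$; it then checks the candidate satisfies that antiderivative identity and deduces equality a.e., upgrading to equality everywhere by substituting back into the original relation. You instead verify the formula directly by differentiating the absolutely continuous product $e^{A\tau}g(\tau)$ a.e.\ and integrating back (equivalently one could use Fubini to swap the order of integration, which is essentially what the paper's verification step does), and you get uniqueness from Gr\"onwall applied to $\delta(t)=\int_0^t A\delta(\tau)\,d\tau$ — note $\delta$ is automatically continuous here, so the standard Gr\"onwall lemma applies cleanly. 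Both arguments are sound; yours is more direct for existence, while the paper's avoids any a.e.\ differentiation at the cost of working with iterated integrals. Two small bookkeeping points in your write-up: the displayed ``rearranging'' line is off (the correct consequence is $\int_0^t A^2 e^{A\tau}g(\tau)\,d\tau = Ae^{At}g(t)-\int_0^t AU(\tau)\,d\tau$, after which the three terms do collapse to $\int_0^t Ax(\tau)\,d\tau = x(t)-U(t)-x_0$, not $x(t)-U(t)$), and the integral equation should carry the $x_0$ term, i.e.\ $x(t)=x_0+\int_0^t Ax(\tau)\,d\tau+U(t)$ — an ambiguity inherited from Definition~\ref{def:integral pert traj} that the paper itself glosses over in the same way. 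Neither affects the validity of the argument.
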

For completeness, we give the proof in Appendix~\ref{app:proof sol linear}. 
\oli{I believe a similar result must be there, in the literature. 
If $U$ is differentiable, an equivalent result is well-known (see e.g., the Gronwall's inequality, or the wikipedia page on linear ODE). 
If $U$ is not differentiable, \cite{AmesP97} gives similar formulas, but for the one dimensional case (where $x(t)$ and $U(t)$ are scalars).}

Consider a pair $U_1(\cdot)$ and $U_2(\cdot)$ of perturbation functions and a pair of corresponding perturbed trajectories $\xt_1(\cdot)$  and $\xt_2(\cdot)$, with the same initial condition $\xt_1(0) = \xt_2(0)$. 
Then, Lemma \ref{lem:completeness of linear} implies that for any $t\ge0$,
\begin{equation} \label{eq:ltwo ptraj-x bounded by integral eA pert}
\begin{split}
\Ltwo{\ptraj_1(t)-\ptraj_2(t)} \, &=\, \Ltwo{U_1(t)-U_2(t) \,+\, A  \int_{0}^t e^{A\tau} \Big(U_1(t-\tau)-U_2(t-\tau)\Big) \,d\tau}\\
&\le\, \Ltwo{U_1(t)-U_2(t)} \,+\,  \Ltwo{  \int_{0}^t A e^{A\tau} \big(U_1(t-\tau)-U_2(t-\tau)\big) \,d\tau}.
\end{split}
\end{equation}

Consider the Jordan normal form of $A$,
\begin{equation} \label{eq:jordan joon}
A \,=\, P D P^{-1},  \qquad  D = \Lambda + B,
\end{equation}
where $\Lambda$ is the diagonal matrix of eigenvalues of $A$, and  $B$ is a matrix 
with some superdiagonal entries equal to one, and all other entries equal to zero.  
It follows that
\begin{equation}\label{eq:formula of e to the A}
e^{At} \,=\, Pe^{Dt}P^{-1}.
\end{equation}
Fix a $t>0$, and let
\begin{equation}
\maxpert \triangleq \sup_{\tau\le t} \Ltwo{U_1(t-\tau)-U_2(t-\tau)}.
\end{equation}
For any $\tau\in [0,t]$, let 
\begin{equation}
V(\tau) \triangleq P^{-1} \Big(  U_1(t-\tau)-U_2(t-\tau)   \Big).
\end{equation}
Then, for any $\tau\in[0,t]$,
\begin{equation}\label{eq:V tau bound by sup u}
\begin{split}
\Ltwo{V(\tau)} & \,\le\, \frac{\maxpert}{\sigmin},
\end{split}
\end{equation}
where $\sigmin$ is the smallest singular value of $P$. 
Since $P$ is invertible, $\sigmin>0$. 
It follows from (\ref{eq:ltwo ptraj-x bounded by integral eA pert}) and (\ref{eq:formula of e to the A}) that 
\begin{equation}\label{eq:linear d ptraj sigmax int eD 1} 
\begin{split}
\Ltwo{\ptraj_1(t)-\ptraj_2(t)} &\le\, \Ltwo{U_1(t)-U_2(t)} \,+\,  \Ltwo{  \int_{0}^t A e^{A\tau} \Big(U_1(t-\tau)-U_2(t-\tau)\Big) \,d\tau}\\
&=\, \Ltwo{U_1(t)-U_2(t)} \,+\,  \Ltwo{ \int_{0}^t  P D P^{-1} \,  P e^{D \tau} P^{-1} \, \Big(U_1(t-\tau)-U_2(t-\tau)\Big) \,d\tau}\\
&=\, \Ltwo{U_1(t)-U_2(t)} \,+\,  \Ltwo{P \int_{0}^t  D e^{D \tau} V(\tau) \,d\tau}.
\end{split}
\end{equation}

From the SOF assumption and Lemma~\ref{lem:SOF linear}, in the Jordan decomposition, every block associated with the zero eigenvalue has unite size. Equivalently, $D$ has the following form
\begin{equation}
D = \left[ \begin{array}{c|c}   0 & 0 \\ \hline 0 &G    \end{array} \right],
\end{equation}
where $G$ comprises the blocks of $D$ corresponding to non-zero eigenvalues. 
Then, for any $\tau\ge 0$,  
\begin{equation}
e^{D\tau} =  \left[ \begin{array}{c|c}   I & 0 \\ \hline 0 &e^{G\tau}    \end{array} \right].
\end{equation}
For any $\tau\ge0$, consider the decomposition
\begin{equation}
V(\tau) = \left[\begin{array}{c}  V_0(\tau) \\ V_1(\tau) \end{array}   \right],
\end{equation}
where $V_0$ is a vector of length equal to the multiplicity of the zero eigenvalue in $A$.
Therefore, for any $\tau\ge 0$,
\begin{equation} \label{eq:norm D tilde equals norm D}
\Ltwo{D e^{D \tau} V(\tau)} \,=\,  \big\| \Bigg[ \begin{array}{c} 0\\ \hline G e^{G \tau} V_1(\tau) \end{array} \Bigg]  \big\|  \,=\,  \Ltwo{G e^{G \tau} V_1(\tau)}.
\end{equation}

For a matrix $M$, we denote its  Frobenius norm by  $\|M\|_F$. We also let  $\sigmax$ be the largest singular value of $P$. 
It follows from \eqref{eq:linear d ptraj sigmax int eD 1}  that
\begin{equation}\label{eq:linear d ptraj sigmax int eD 2} 
\begin{split}
\Ltwo{\ptraj_1(t)-\ptraj_2(t)} \, &\le\, \Ltwo{U_1(t)-U_2(t)} \,+\,  \Ltwo{P \int_{0}^t  D e^{D \tau} V(\tau) \,d\tau}\\
&\le \, \Ltwo{U_1(t)-U_2(t)} \,+\,  \sigmax\int_{0}^t  \Ltwo{D e^{D \tau} V(\tau)} \,d\tau\\
&=\, \Ltwo{U_1(t)-U_2(t)} \,+\,  \sigmax   \int_{0}^t  \Ltwo{G e^{G \tau} V_1(\tau)} \,d\tau\\
&\le\, \theta \,+\,  \sigmax \int_{0}^t \Ltwo{G e^{G \tau}}_F\, \|{ V_1(\tau) }\|\,d\tau \\ 
&\le\, \left( 1\,+\,  \frac{ \sigmax }{ \sigmin }\,  \int_{0}^\infty \Ltwo{G e^{G \tau}}_F\,d\tau \right) \,\theta,
\end{split}
\end{equation}
where the equality is from \eqref{eq:norm D tilde equals norm D} and the last inequality is due to \eqref{eq:V tau bound by sup u}. 
Since all eigenvalues of $G$ have negative real parts, the integral in the right hand side of \eqref{eq:linear d ptraj sigmax int eD 2}  is finite. 
Then, bounded sensitivity of the SOF linear system follows from  \eqref{eq:linear d ptraj sigmax int eD 2}.

\hide{
By the SOF assumption, for the eigenvalue $\lambda=0$ (if any), the rank of eigenspace of $\lambda$ is equal to the multiplicity of $\lambda$. Hence, in the Jordan decomposition, every block associated with $\lambda$ has size one, i.e., we can represent $D$ as
\begin{equation}
D = \left[ \begin{array}{c|c}   0 & 0 \\ \hline 0 &\tilde{D}    \end{array} \right],
\end{equation}
where $\tilde{D}$ comprises the blocks of $D$ corresponding to non-zero eigenvalues. Hence, for any $\tau$,  we can write
\begin{equation}
e^{D\tau} =  \left[ \begin{array}{c|c}   I & 0 \\ \hline 0 &e^{\tilde{D}\tau}    \end{array} \right],\qquad v(\tau) = \left[\begin{array}{c}  v_0(\tau) \\ \tilde{v}(\tau) \end{array}   \right].
\end{equation}
Therefore,
\begin{equation}
\begin{split}
\Ltwo{ \int_{0}^t e^{D \tau} v(\tau)\,d\tau }&\, \le\, \Ltwo{ \int_{0}^t  v_0(\tau)\,d\tau } + \Ltwo{ \int_{0}^t e^{\tilde{D} \tau}  \tilde{v}(\tau)\,d\tau } \\
&\, \le\, \Ltwo{V(\tau)} + \Ltwo{ \int_{0}^t e^{\tilde{D} \tau}  \tilde{v}(\tau)\,d\tau }.
\end{split}
\end{equation}
Thus, in the light of (\ref{eq:V tau bound by sup u}) we can forget about the zero eigenvalues and present a bound for $ \Ltwo{ \int_{0}^t e^{\tilde{D} \tau}  \tilde{v}(\tau)\,d\tau }$. So, without loss of generality, in the sequel we assume that $A$ does not have zero eigenvalues. In this case, by SOF assumption, every eigenvalue of $A$ has  negative real part.

In (\ref{eq:jordan joon}), $B$ is nilpotent with $B^r=0$, where $r$ is the largest multiplicity of eigenvalues of $A$. Moreover, $B$ commutes with $\Lambda$, and as a result, $e^{\Lambda+B}=e^{B}e^{\Lambda}$. It follows that
\begin{equation}\label{eq:formula of e to the A 2}
e^{Dt} \,=\, e^{(\Lambda+B)t} \,=\, e^{Bt}e^{\Lambda t}\,=\, \sum_{k=0}^{r-1} \frac{1}{k!} B^{k} \big(t^k e^{\Lambda t}\big).
\end{equation}
 Since $B$ is element-wise less than or equal to a permutation matrix, $\sigmax(B)\le 1$. Hence, for any $k\ge 0$, $\sigmax\big(B^k\big)\le 1$. It follows from (\ref{eq:linear d ptraj sigmax int eD}) and (\ref{eq:formula of e to the A 2}) that,
\begin{equation}\label{eq:int e to the A pert le sum k ints}
\begin{split}
\Ltwo{\ptraj_1(t)-\ptraj_2(t)} & \le \sigmax \,\Ltwo{ \int_{0}^t e^{D \tau} v(\tau)\,d\tau }\\
&=   \sigmax \Ltwo{\sum_{k=0}^{r-1}\frac{1}{k!}   B^{k} \int_{0}^t \tau^k e^{\Lambda \tau} v(\tau)\,d\tau }\\
& \le   \sigmax \sum_{k=0}^{r-1}\frac{1}{k!}\,\sigmax(B^{k}) \,\Ltwo{  \int_{0}^t \tau^k e^{\Lambda \tau} v(\tau)\,d\tau }\\
&\le \sigmax  \sum_{k=0}^{r-1}\frac{1}{k!} \Ltwo{ \int_{0}^t \tau^k e^{\Lambda \tau} v(\tau)\,d\tau },
\end{split}
\end{equation}
\hide{
\begin{equation}\label{eq:int e to the A pert le sum k ints}
\begin{split}
\Ltwo{\ptraj_1(t)-\ptraj_2(t)} & \le \sigmax \,\Ltwo{ \int_{0}^t e^{D \tau} v(\tau)\,d\tau }\\
&\Ltwo{\int_0^t e^{A\tau}\Big(u_1(t-\tau)-u_2(t-\tau)\Big) \,d\tau} \\
&\qquad= \Ltwo{\sum_{k=0}^{r-1} \frac{1}{k!} P B^{k} \int_{0}^t \tau^k e^{\Lambda \tau} P^{-1}\Big(u_1(t-\tau)-u_2(t-\tau)\Big)\,d\tau }\\
&\qquad =   \Ltwo{\sum_{k=0}^{r-1}\frac{1}{k!}   P B^{k} \int_{0}^t \tau^k e^{\Lambda \tau} v(\tau)\,d\tau }\\
&\qquad \le \sum_{i\in\textrm{blocks}}\, \Ltwo{ \sum_{k=0}^{r_i-1}\frac{1}{k!}   P_i B_i^{k} \int_{0}^t \tau^k e^{\Lambda_i \tau} v_i(\tau)\,d\tau }\\
&\qquad \le \sum_{i\in\textrm{blocks}} \, \sum_{k=0}^{r_i-1}\frac{1}{k!} \Ltwo{  P_i B_i^{k} \int_{0}^t \tau^k e^{\Lambda_i \tau} v_i(\tau)\,d\tau }\\
&\qquad \le \sigmax  \sum_{i\in\textrm{blocks}} \, \sum_{k=0}^{r_i-1}\frac{1}{k!} \Ltwo{ \int_{0}^t \tau^k e^{\Lambda_i \tau} v_i(\tau)\,d\tau },
\end{split}
\end{equation}
}
\hide{
\begin{equation}\label{eq:int e to the A pert le sum k ints}
\begin{split}
&\Ltwo{\int_0^t e^{A\tau}\Big(u_1(t-\tau)-u_2(t-\tau)\Big) \,d\tau} \\
&\qquad= \Ltwo{\sum_{k=0}^{r-1} \frac{1}{k!} P B^{k} \int_{0}^t \tau^k e^{\Lambda \tau} P^{-1}\Big(u_1(t-\tau)-u_2(t-\tau)\Big)\,d\tau }\\
&\qquad =   \Ltwo{\sum_{k=0}^{r-1}\frac{1}{k!}   P B^{k} \int_{0}^t \tau^k e^{\Lambda \tau} v(\tau)\,d\tau }\\
&\qquad \le   \sum_{k=0}^{r-1}\frac{1}{k!} \Ltwo{  P B^{k} \int_{0}^t \tau^k e^{\Lambda \tau} v(\tau)\,d\tau }\\
&\qquad\le \sigmax  \sum_{k=0}^{r-1}\frac{1}{k!} \Ltwo{ \int_{0}^t \tau^k e^{\Lambda \tau} v(\tau)\,d\tau },
\end{split}
\end{equation}
}
For every $k\in\Z_+$ and every $\tau\ge 0$, let
\begin{equation}
f_k(\tau)\triangleq \tau^k e^{\Lambda \tau},
\end{equation}
and
\begin{equation}\label{eq:divconst k le infty}
\begin{split}
\divconst_k &\triangleq \int_0^\infty \Ltwo{f'_{k}(\tau)}\,d\tau\\
& = \int_0^\infty \Ltwo{\frac{d}{dt}\big(\tau^ke^{\Lambda\tau}\big)}\,d\tau\\
& \le \sum_{i=1}^n \int_0^\infty \Ltwo{\frac{d}{dt}\big(\tau^ke^{\lambda_i\tau}\big)}\,d\tau\\
& = \sum_{i=1}^n \int_0^\infty \Ltwo{\big(k\tau^{k-1}+\lambda_i \tau^k\big)e^{\lambda_i\tau}}\,d\tau\\
& \le \sum_{i=1}^n \int_0^\infty \Big(k\tau^{k-1}+\Ltwo{\lambda_i} \tau^k\Big)e^{\mathrm{real}(\lambda_i)\tau}\,d\tau\\
& < \infty,
\end{split}
\end{equation}
\hide{
\begin{equation}\label{eq:divconst k le infty}
\begin{split}
\divconst_k &\triangleq \max_i \int_0^\infty \Ltwo{f'_{k,i}(\tau)}\,d\tau\\
& = \int_0^\infty \Ltwo{\frac{d}{dt}\big(\tau^ke^{\lambda_i\tau}\big)}\,d\tau\\
& =\int_0^\infty \Ltwo{\big(k\tau^{k-1}+\lambda_i \tau^k\big)e^{\lambda_i\tau}}\,d\tau\\
& \le \int_0^\infty \Big(k\tau^{k-1}+\Ltwo{\lambda_i} \tau^k\Big)e^{\mathrm{real}(\lambda_i)\tau}\,d\tau\\
& < \infty,
\end{split}
\end{equation}
}
where the last inequality is due to the assumption $\mathrm{real}(\lambda_i)<0$.
Now, since $f_k(\cdot)$ is continuously differentiable and $v(\cdot)$ is Lebesgue integrable, it follows from the integration by parts (see e.g., Theorem 12.5 in \cite{Gord94}) that,
\begin{equation}\label{eq:int by parts}
\begin{split}
\int_{0}^t f_k(\tau) v(\tau)\,d\tau & = f_k(t) V(t) - f_k(0) V(0) - \int_{0}^t f'_k(\tau) V(\tau)\,d\tau\\
& = - f_k(0) V(0) - \int_{0}^t f'_k(\tau) V(\tau)\,d\tau,
\end{split}
\end{equation}
where the last equality is because $V(t)=\int_t^t P^{-1}u(t-\tau)\,d\tau= 0$. 
Plugging (\ref{eq:int by parts}) into (\ref{eq:int e to the A pert le sum k ints}), we have
\begin{equation} \label{eq:the big bound in the proof of th linear}
\begin{split}
\Ltwo{\ptraj_1(t)-\ptraj_2(t)} 
& \le  \sigmax \sum_{k=0}^{r-1}\frac{1}{k!}\, \Ltwo{\int_{0}^t f_k(\tau) v(\tau)\,d\tau}\\
& \le  \sigmax \sum_{k=0}^{r-1}\frac{1}{k!}\Ltwo{f_k(0) V(0)}  +  \sigmax \sum_{k=0}^{r-1}\frac{1}{k!} \Ltwo{\int_{0}^t f'_k(\tau) V(\tau)\,d\tau}\\
& = \sigmax\Ltwo{V(0)} + \sigmax \sum_{k=0}^{r-1}\frac{1}{k!} \Ltwo{\int_{0}^t f'_k(\tau) V(\tau)\,d\tau}\\
& \le \sigmax\Ltwo{V(0)} + \sigmax \sum_{k=0}^{r-1}\frac{1}{k!} \int_{0}^t \Ltwo{f'_k(\tau)}\,\Ltwo{V(\tau)} \,d\tau\\
& \le \frac{2\maxpert \sigmax}{\sigmin} + \frac{2\maxpert \sigmax}{\sigmin} \sum_{k=0}^{r-1}\frac{1}{k!} \int_{0}^t \Ltwo{f'_k(\tau)} \,d\tau\\
& \le \frac{2 \sigmax}{\sigmin} \left(1+\sum_{k=0}^{r-1}\frac{\divconst_k}{k!} \right)\maxpert.
\end{split}
\end{equation}
where the third equality is because $f_k(0)=0$, for $k\ge 1$, and the last two inequalities are due to (\ref{eq:V tau bound by sup u}) and (\ref{eq:divconst k le infty}), respectively. 
This establishes the first part of the theorem.
}

For the second part, if the linear system is not SOF, then it is either unstable or has a periodic orbit. If the system is unstable, then a small perturbation at time zero can cause a perturbed trajectory $\ptraj_1(\cdot)$ with initial condition $\ptraj_1(0)=0$ to have $\lim_{t\to\infty} \Ltwo{\ptraj_1(t)}=\infty$. 
On the other hand, $x(t)=0$, for all $t\ge0$, is an unperturbed trajectory. Then, the distance between $\ptraj_1(\cdot)$ and the unperturbed trajectory $x(\cdot)$ grows unbounded.
In the second case, if the system is not orbit-free, consider a periodic orbit $x(t)=e^{At}x_0$, with $x(t_0)=x(0)=x_0$, for some $t_0>0$. 
Then, $u(t)=e^{At}x_0$ has a bounded integral. However, $\ptraj(t)=(t+1)e^{At}x_0$ satisfies the differential equation $\frac{d}{dt}\ptraj(t)=A\ptraj(t)+u(t)$, and hence is a perturbed trajectory whose deviation from $x(t)$ is unbounded as $t$ goes to infinity. 
We conclude that, in either case, if the linear system is not SOF, a bounded perturbation can cause unbounded changes in the trajectories, and there would be no constant $\divconst$ to satisfy (\ref{eq:bpis}).

For the third part, if $A$ is diagonalizable, then $G$ is a diagonal matrix with the non-zero eigenvalues of $A$ on its main diagonal.
Therefore,
\begin{equation} \label{eq:d tilde integral diagonalizable case}
\begin{split}
\int_{0}^\infty \Ltwo{G e^{G\tau}}\,d\tau \, &\le \,  \int_{0}^\infty \sum_{i=1}^n  |\lambda_i |  e^{\real (\lambda_i)\,\tau} \,d\tau 
\, = \, \sum_{i=1}^n\frac{|{\lambda_i}|}{\big\lvert  \real(\lambda_i)  \big\rvert}.
\end{split}
\end{equation}
Plugging \eqref{eq:d tilde integral diagonalizable case} into \eqref{eq:linear d ptraj sigmax int eD 2}  implies \eqref{eq:constant C for linear in diagonalizable case}.

Finally, if $A$ is symmetric, then all eigenvalues are real and $P$ is orthonormal. 
Therefore, $|{\lambda_i}|/{\lvert  \real(\lambda_i) \rvert}=1$ and $\sigmax = \sigmin=1$. 
Then, \eqref{eq:constant C for linear in diagonalizable case} can be further simplified into $\Ltwo{\ptraj_1(t)-\ptraj_2(t)} \le  \big(n+1\big)\maxpert$.

\medskip


\section{\bf Proof of Theorem \ref{th:spread sys}}  \label{sec:proof th spread sys} 
We start with two lemmas, and then give the proof of Theorem \ref{th:spread sys}.
\begin{lemma} \label{lem:simul spread}
Consider a dynamical system $F$ and its $\epsilon$-spread system $\tilde{F}_\epsilon$, for some $\epsilon>0$. 
\begin{itemize}
\item[a)]
Let $\ptraj(\cdot)$ be a perturbed trajectory of $\tilde{F}_\epsilon$, corresponding to perturbation $\pert(\cdot)$.
Then, for any $\delta>0$, there exists a perturbation $\pp(\cdot)$, and a corresponding perturbed trajectory $\yt(\cdot)$ of $F$, 
such that for any $t\ge 0$,
\begin{equation}\label{eq:yt near ptraj}
\Ltwo{\yt(t) - \ptraj(t) }\, \le \, \epsilon+\delta,
\end{equation}
\begin{equation} \label{eq:pp close to pert} 
\sup_{\tau \le t} \Ltwo{ \pp(\tau)} \,\le \, \sup_{\tau \le t}\Ltwo{ \pert(\tau)} \,+\, \epsilon \,+\, \delta.
\end{equation}

\item[b)]
For any pair $\pert_1(\cdot)$  and $\pert_2(\cdot)$ of perturbations and corresponding pair $\ptraj_1(\cdot)$ and $\ptraj_2(\cdot)$ of perturbed trajectories of $\tilde{F}_\epsilon$, and for any $\delta>0$, there exists a pair $\pert'_1(\cdot)$  and $\pert'_2(\cdot)$ of perturbations and corresponding pair $\yt_1(\cdot)$ and $\yt_2(\cdot)$ of perturbed trajectories of  $F$ such that
\begin{equation}\label{eq:yt near ptraj 2}
\Ltwo{\yt_i(t) - \ptraj_i(t) }\, \le \, \epsilon+\delta, \,\qquad i=1,2,
\end{equation}
\begin{equation} \label{eq:pp close to pert 2} 
\sup_{\tau \le t}\Ltwo{ \pert'_1 -  \pert'_2(\tau)} \,\le \, \sup_{\tau \le t}\Ltwo{ \pert_1(\tau) - \pert_2(\tau)} \,+\, 2\epsilon \,+\, \delta.
\end{equation}
\end{itemize}
\end{lemma}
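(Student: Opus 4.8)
The plan is to prove both parts by a single \emph{time-multiplexing} (chattering) construction, in the spirit of relaxation theorems for differential inclusions, but bookkeeping the perturbations so that the extra $\epsilon$ and $\delta$ in \eqref{eq:pp close to pert} and \eqref{eq:pp close to pert 2} appear explicitly. First I would record a pointwise decomposition of the perturbed drift. Write $\ptraj(t)=\int_0^t \ff(\tau)\,d\tau+\pert(t)$ with $\ff(t)\in\tilde F_\epsilon(\ptraj(t))$. Since $\tilde F_\epsilon(\ptraj(t))=\conv\{\drift:\drift\in F(y),\ y\in\ball_\epsilon(\ptraj(t))\}$ is a convex hull in $\R^n$, Carath\'eodory's theorem gives, for each $t$, a representation $\ff(t)=\sum_{j=1}^{n+1}\lambda_j(t)\,\drift_j(t)$ with $\lambda_j(t)\ge 0$, $\sum_j\lambda_j(t)=1$, $\drift_j(t)\in F\big(y_j(t)\big)$, and $y_j(t)\in\ball_\epsilon(\ptraj(t))$; a measurable selection theorem makes $\lambda_j(\cdot)$, $y_j(\cdot)$, $\drift_j(\cdot)$ measurable, and a routine regularization (at the cost of an extra $\delta$) makes the resulting objects nice enough that the perturbation produced below is right-continuous.

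Next, the construction. Fix a mesh of $[0,\infty)$, fine on each bounded block and refined as time grows. On each mesh cell I would cycle the index through $1,\dots,n+1$, assigning to index $j$ a sub-cell whose length equals the $\lambda_j$-mass of that cell; let $j(\tau)\in\{1,\dots,n+1\}$ be the resulting (piecewise constant) schedule. Define the candidate $F$-trajectory, its drift, and its perturbation by
\[
\yt(t):=y_{j(t)}(t),\qquad \zeta'(t):=\drift_{j(t)}(t),\qquad \pp(t):=\yt(t)-\int_0^t\zeta'(\tau)\,d\tau .
\]
Then $\yt(t)=\int_0^t\zeta'(\tau)\,d\tau+\pp(t)$ and $\zeta'(t)\in F\big(y_{j(t)}(t)\big)=F\big(\yt(t)\big)$, so $\yt$ is a perturbed trajectory of $F$ with perturbation $\pp$. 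Moreover $\Ltwo{\yt(t)-\ptraj(t)}=\Ltwo{y_{j(t)}(t)-\ptraj(t)}\le\epsilon$ (plus the $\delta$ from regularization), which is \eqref{eq:yt near ptraj}. For \eqref{eq:pp close to pert}, subtracting the two integral identities gives $\pp(t)-\pert(t)=\big(\yt(t)-\ptraj(t)\big)-\int_0^t\big(\zeta'(\tau)-\ff(\tau)\big)\,d\tau$, so it only remains to bound the last integral.

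That integral is controlled by the standard chattering estimate. On each mesh cell the integral of $\zeta'=\drift_{j(\cdot)}$ equals the integral of $\ff=\sum_j\lambda_j\drift_j$ up to the $L^1$-oscillation of the $\drift_j$'s over that cell, and the sum of such oscillations over a partition tends to $0$ as the mesh is refined; together with the crude bound inside a single cell (the drift is bounded on bounded time intervals by \eqref{eq:constant not to blow up}), this makes $\sup_{t}\Ltwo{\int_0^t(\zeta'-\ff)}\le\delta$ once the mesh is fine enough (shrinking over successive blocks to keep the bound uniform in $t$). Combining, $\Ltwo{\pp(t)}\le\Ltwo{\pert(t)}+\epsilon+2\delta\le\sup_{\tau\le t}\Ltwo{\pert(\tau)}+\epsilon+2\delta$, which yields \eqref{eq:pp close to pert} after relabelling $\delta$.

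For part (b) I would run the same construction for $\ptraj_1$ and $\ptraj_2$ separately, obtaining $\yt_1,\yt_2$ and $\pert_1',\pert_2'$ with $\Ltwo{\yt_i(t)-\ptraj_i(t)}\le\epsilon+\delta$ and $\pp_i(t)-\pert_i(t)=(\yt_i(t)-\ptraj_i(t))-\int_0^t(\zeta_i'-\ff_i)$; the triangle inequality then gives $\Ltwo{\pert_1'(t)-\pert_2'(t)}\le\Ltwo{\pert_1(t)-\pert_2(t)}+2\epsilon+2\delta$, where the $2\epsilon$ appears because the two excursions $\yt_1-\ptraj_1$ and $\yt_2-\ptraj_2$ (each of size up to $\epsilon$) may point in opposite directions. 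I expect the main obstacle to be the measure-theoretic bookkeeping in the two boxed steps: producing a (right-continuous, after regularization) Carath\'eodory selection $(\lambda_j,y_j,\drift_j)$ for a merely measurable $\ff$, and choosing the mesh so that the chattering error $\sup_t\Ltwo{\int_0^t(\zeta'-\ff)}$ stays uniformly below $\delta$; everything else reduces to the triangle inequality applied to the integral identities above.
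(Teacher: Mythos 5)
Your core idea is the same as the paper's: decompose the convexified drift via Carath\'eodory's theorem into at most $n+1$ genuine drifts of $F$ taken at points within distance $\epsilon$ of $\ptraj$, and time-multiplex among them so that the integrals nearly agree. The crucial difference is \emph{where} you apply Carath\'eodory, and this is where your proof has a genuine gap. You decompose $\ff(t)$ pointwise in $t$, which forces you to invoke a measurable selection theorem to obtain measurable $\lambda_j(\cdot)$, $y_j(\cdot)$, $\drift_j(\cdot)$. But the paper assumes nothing about the set-valued map $F$ beyond the growth bound \eqref{eq:constant not to blow up} --- in particular no measurability of its graph --- so Aumann/Kuratowski--Ryll-Nardzewski type selection theorems are not available, and the boxed selection step cannot be carried out in this generality. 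The paper's proof is structured precisely to avoid this: it first partitions time into small intervals $[t_\alpha,t_{\alpha+1})$ on which $\ptraj$ moves by at most $\delta$, observes that the \emph{time-averaged} drift over each interval lies in $\tilde F_{\epsilon+\delta}\big(\ptraj(t_\alpha)\big)$, and applies Carath\'eodory once per interval to that single vector. This costs an extra $\delta$ in the ball radius (hence the $\epsilon+\delta$ rather than $\epsilon$ in \eqref{eq:yt near ptraj}) but requires no selection machinery, since the resulting $\yt$ and $\drift'$ are piecewise constant on countably many intervals and hence trivially measurable.

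Two further steps are under-justified. First, your chattering estimate rests on ``the sum of the oscillations of the $\drift_j$'s over the cells tends to $0$ as the mesh is refined,'' which is true for continuous or regulated functions but false for merely measurable ones; the correct relaxation/chattering lemma for measurable data needs a different argument (Lyapunov convexity or $L^1$-approximation by continuous functions), and in the paper this issue simply does not arise because each $\drift_i^\alpha$ is a \emph{constant} vector on its sub-interval, so the multiplexed integral matches the averaged integral exactly (equation \eqref{eq:int xi equals convex comb of z}). Second, right-continuity of $\pp$ (required by Definition~\ref{def:integral pert traj}) is waved away as ``routine regularization''; in the paper it is handled explicitly by the piecewise-constant structure of $\yt$. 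One point in your favor: by fixing an a priori mesh you avoid the accumulation-of-stopping-times issue that forces the paper into its transfinite recursion. But as written, the selection step and the oscillation estimate are real gaps, not bookkeeping.
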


The proof is elaborate and is given in Appendix~\ref{app:proof spread}.
The convex hull, $\conv(\cdot)$ in the definition of $\tilde{F}_\epsilon$ brings a tremendous  amount of complication to the proof of Lemma~\ref{lem:simul spread}. 
Here, to see the main idea, we present and prove a simpler counterpart of Lemma~\ref{lem:simul spread}.

\begin{lemma}
Consider a dynamical system $F$ and an $\epsilon>0$.
For any $x\in\R^n$, let 
\begin{equation}
\hat{F}(x) = \big\{\drift \,\big|\, \drift\in F(y),\, y\in \ball_\epsilon(x)\big\}.
\end{equation}
Let $\ptraj(\cdot)$ be a perturbed trajectory of $\hat{F}$, corresponding to some perturbation $\pert(\cdot)$.
Then,  there exist a perturbation $\pp(\cdot)$, and a corresponding perturbed trajectory $\yt(\cdot)$ of $F$ 
that satisfy \eqref{eq:yt near ptraj} and \eqref{eq:pp close to pert} for $\delta=0$.
\end{lemma}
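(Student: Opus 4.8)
The plan is to push the drift of the $\hat F$-trajectory $\ptraj$ onto a genuine $F$-trajectory that sits, at each instant, at a neighbouring point witnessing membership in $\hat F$. First I would fix a perturbed drift $\ff(\cdot)$ associated with $\ptraj$, so that $\ptraj(t)=\int_0^t\ff(\tau)\,d\tau+\pert(t)$ and $\ff(t)\in\hat F\big(\ptraj(t)\big)$ for all $t\ge 0$ (Definition~\ref{def:integral pert traj}). By the definition of $\hat F$, for every $t$ there is a point $y(t)\in\ball_\epsilon\big(\ptraj(t)\big)$ with $\ff(t)\in F\big(y(t)\big)$; I would choose such a $y(\cdot)$ measurably, invoking a standard measurable selection theorem for the nonempty-valued measurable set-valued map $t\mapsto\{y\in\ball_\epsilon(\ptraj(t)):\ff(t)\in F(y)\}$.

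Next I would simply set $\yt(t):=y(t)$ and define the perturbation by $\pp(t):=\yt(t)-\int_0^t\ff(\tau)\,d\tau$. Then $\yt(t)=\int_0^t\ff(\tau)\,d\tau+\pp(t)$ holds by construction, while $\ff(t)\in F\big(\yt(t)\big)$ because $\yt(t)=y(t)$; hence $\yt$ is a perturbed trajectory of $F$ with perturbation $\pp$ and perturbed drift $\ff$. Measurability and local integrability of $\yt$ follow from those of $\ff$ together with the bound $\Ltwo{\yt(t)-\ptraj(t)}\le\epsilon$.

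It then remains to read off the two estimates. The bound \eqref{eq:yt near ptraj} with $\delta=0$ is immediate, since $\Ltwo{\yt(t)-\ptraj(t)}=\Ltwo{y(t)-\ptraj(t)}\le\epsilon$ by the choice of $y(t)$. For \eqref{eq:pp close to pert}, substituting $\int_0^t\ff(\tau)\,d\tau=\ptraj(t)-\pert(t)$ into the definition of $\pp$ gives $\pp(t)=\pert(t)+\big(y(t)-\ptraj(t)\big)$, so $\Ltwo{\pp(t)}\le\Ltwo{\pert(t)}+\epsilon$ for every $t$; taking the supremum over $\tau\le t$ yields \eqref{eq:pp close to pert} with $\delta=0$.

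The only genuinely delicate point in this simplified statement is obtaining a measurable (and, if one wishes to respect the right-continuity stipulated in Definition~\ref{def:integral pert traj}, right-continuous) selection $y(\cdot)$; everything else is bookkeeping. I expect the real difficulty of the full Lemma~\ref{lem:simul spread} to be of a completely different nature: when $\hat F$ is replaced by its convex hull $\tilde F_\epsilon$, a single witness point $y(t)$ no longer exists, and one must realize each value $\ff(t)=\sum_j\lambda_j(t)\,\ff_j(t)$, with $\ff_j(t)\in F(y_j(t))$, by a time-sharing/averaging construction that only tracks $\ptraj$ up to the additional error $\delta$ — which is precisely why the conclusion there is weaker.
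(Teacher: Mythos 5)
Your proposal is correct and is essentially the same argument as the paper's: pick, for each $t$, a witness $\yt(t)\in\ball_\epsilon\big(\ptraj(t)\big)$ with $\ff(t)\in F\big(\yt(t)\big)$, keep the same drift, and absorb the displacement $\yt(t)-\ptraj(t)$ into the new perturbation $\pp(t)=\pert(t)+\yt(t)-\ptraj(t)$. Your remark about needing a measurable (right-continuous) selection is a point the paper leaves implicit, so if anything your write-up is slightly more careful.
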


\begin{proof}
By definition, 
\begin{equation}
\ptraj(t) = \ptraj(0) \,+\, \int_0^t {\xi}( \tau) \,d\tau \,+\, U(t),\quad \forall t\ge0,
\end{equation}
where ${\xi}( \tau)\in \hat{F}\big(\ptraj(\tau)\big)$. Therefore, for any $\tau$, there exists a $\yt(\tau)$ in the $\epsilon$-neighbourhood of $\ptraj(\tau)$, such that ${\xi}(\tau)\in {F}\big(\yt(\tau)\big)$. Then, for any $t$,
\begin{equation}
\yt(t) \,=\, \ptraj(t) + \Big(\yt(t) - \ptraj(t)\Big) \, =\, \ptraj(0) \,+\, \int_0^t {\xi}( \tau) \,d\tau \,+\, \Big(U(t) +\yt(t) -\ptraj(t)  \Big).
\end{equation}
Hence, $\yt(\cdot)$  is a perturbed trajectory of $F$ associated with perturbation $\pp(t) =\pert(t) + \yt(t) - \ptraj(t)$. 
Since for any $t\ge0$, $\Ltwo{\yt(t)- \ptraj(t)}\le \epsilon$, then $\Ltwo{\pp(t)}\le \Ltwo{\pert(t)} + \epsilon$, and the lemma follows.

\end{proof}

\begin{proof}[\bf Proof of Theorem~\ref{th:spread sys}]

For Part (a), let 
\begin{equation}\label{eq:def of delta in proof of th3}
\delta=\frac\epsilon{2\big(\divconst+1\big)}.
\end{equation} 
From Lemma~\ref{lem:simul spread}, there exists a pair $\pp_1(\cdot)$ and $\pp_2(\cdot)$ of perturbation functions and a corresponding pair $\yt_1(\cdot)$ and $\yt_2(\cdot)$ of perturbed trajectories of $F$ such that for any $t\ge 0$,
\begin{equation} \label{eq:proof of th spread choice of aux trajs 1}
\begin{split}
\Ltwo{\yt_1(t) - \ptraj(t) }\, &\le \, \epsilon +\delta, \\
\Ltwo{\yt_2(t) - x(t) }\, &\le \, \epsilon +\delta,\\
\Ltwo{\sup_{\tau \le t} \pp_1(\tau)} \,&\le \, \Ltwo{\sup_{\tau \le t} \pert(\tau)} \,+\, \epsilon + \delta\\
\Ltwo{\sup_{\tau \le t} \pp_2(\tau)} \,&\le \, \epsilon + \delta.
\end{split}
\end{equation}
Let $y(\cdot)$ be an unperturbed trajectory of $F$ initialized at $\tilde{y}(0)=x(0)$. 
Then, 
\begin{equation}
\begin{split}
\Ltwo{\ptraj(t)- x(t)} &\,\le \, \Ltwo{\ptraj(t)- \yt_2(t)} + \Ltwo{\yt_2(t) - y(t)} + \Ltwo{y(t)- \yt_1(t)} + \Ltwo{\yt_1(t)-x(t)}\\
&\,\le \, (\epsilon+\delta) +  \Ltwo{\yt_2(t) - y(t)} + \Ltwo{y(t)- \yt_1(t)} + (\epsilon+\delta)\\
&\,\le \,  \divconst\, \sup_{\tau\le t} \Ltwo{ \pp_2(\tau)}
\,+\, \divconst\, \sup_{\tau\le t} \Ltwo{ \pp_1(\tau)} \,+\, 2\epsilon +2\delta \\
&\,\le \,  \divconst\, \Big(\sup_{\tau\le t} \Ltwo{ \pert(\tau)} + \epsilon+ \delta\Big)
\,+\, \divconst(\epsilon+\delta) \,+\, 2\epsilon +2\delta \\
& \,= \,\, \divconst\, \Big(2\epsilon+\sup_{\tau\le t} \Ltwo{ \Prt(\tau)}\Big) \,+\, 3\epsilon,
\end{split}
\end{equation}
where the equations are due to the triangle inequality, \eqref{eq:proof of th spread choice of aux trajs 1}, \eqref{eq:bis}, again \eqref{eq:proof of th spread choice of aux trajs 1}, and  \eqref{eq:def of delta in proof of th3}, respectively.
This completes the proof of Part (a).

\medskip

The proof of Part (b) is similar to the proof of Part (a). 
In view of Lemma~\ref{lem:simul spread}~(b), consider a pair $\pp_1(\cdot)$ and $\pp_2(\cdot)$ of perturbations and a corresponding pair $\yt_1(\cdot)$ and $\yt_2(\cdot)$ of perturbed trajectories of $F$ such that  for any $t\ge0$,
\begin{equation} \label{eq:proof of th spread choice of aux trajs 2}
\begin{split}
\Ltwo{\yt_i(t) - \ptraj_i(t) }\, &\le \, \epsilon +\delta, \qquad i=1,2,\\
\sup_{\tau \le t} \Ltwo{\pp_1(\tau) - \pp_2(\tau)} \,&\le \, \sup_{\tau \le t} \Ltwo{\pert_1(\tau)- \pert_2(\tau)} \,+\, 2\epsilon + \delta.
\end{split}
\end{equation}
Then, 
\begin{equation}
\begin{split}
\Ltwo{\ptraj(t)- x(t)} &\,\le \, \Ltwo{\ptraj(t)- \yt_2(t)} + \Ltwo{\yt_2(t) - \yt_1(t)} + \Ltwo{\yt_1(t)-\ptraj_1(t)}\\
&\,\le \,   \Ltwo{\yt_2(t) - \yt_1(t)}  + 2(\epsilon+\delta)\\
&\,\le \,  \divconst\, \sup_{\tau\le t} \Ltwo{ \pp_2(\tau) - \pp_1(\tau)} \,+\, 2\epsilon +2\delta \\
&\,\le \,  \divconst\, \Big(\sup_{\tau\le t} \Ltwo{ \pert_2(\tau) - \pert_1(\tau)} + 2\epsilon+\delta \Big)  \,+\, 2\epsilon +2\delta \\
& \,\le \,\, \divconst\, \Big(2\epsilon+\sup_{\tau\le t} \Ltwo{ \Prt_2(\tau)-\pert_1(\tau)}\Big) \,+\, 3\epsilon,
\end{split}
\end{equation}
where the relations are due to the triangle inequality, \eqref{eq:proof of th spread choice of aux trajs 2}, \eqref{eq:bpis}, again \eqref{eq:proof of th spread choice of aux trajs 2}, and  \eqref{eq:def of delta in proof of th3}, respectively.
This completes the proof of Theorem~\ref{th:spread sys}.
\end{proof}


\medskip

\section{\bf Proof of Theorem~\ref{th:3d F has unbounded sensitivity}} \label{app:proof unbounded sensitivity of F}

Here, we prove Theorem~\ref{th:3d F has unbounded sensitivity}
by first showing that the spread systems of $F$ (see Definition~\ref{def:spread}) admit spiral trajectories of the form depicted in Fig.~\ref{fig:3d example}~(b). 
We then  conclude that  spread systems of $F$ has unbounded sensitivity, in the sense that no constant $C$ can satisfy \eqref{eq:bis of spread systems}.
Finally, we use Theorem~\ref{th:spread sys} to show that $F$ has unbounded sensitivity.

To facilitate the presentation, we will be working with cylindrical coordinates and \emph{local cylindrical coordinates}.  
We represent a point $x(t)$ on a trajectory by $x(t) =\big(r,\phi, z\big)$ in cylindrical coordinates.
To represent the derivative of the trajectory (or speed vector) at point $x(t)$ we use {\bf local} cylindrical coordinates at $x(t)$,  that is $\dot{x}(t) = \alpha\rcor+\beta\phicor+\gamma\zcor$. Here $\alpha$ is the radial speed of the trajectory, $\beta$ is the trajectory speed along the direction of rotation around the $z$ axis at $x(t)$, and $\gamma$ is the trajectory speed along the $z$ axis. 
See Fig.~\ref{fig:cyl coor} for an illustration. 

\begin{figure} 
	\begin{center}
		{
			\includegraphics[width = .35\textwidth]{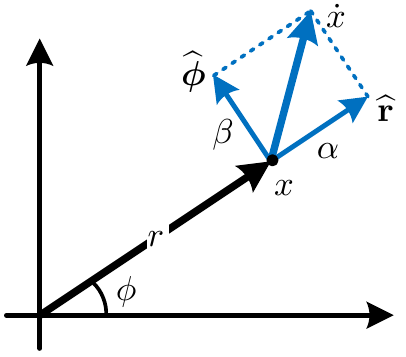}}
	\end{center}
	\caption{A two dimensional view of cylindrical and local cylindrical coordinates on the $z=0$ plane. The figure depicts a point $x$ and a vector $\dot{x}$  that emanates from $x$. 
	We write $x =\big(r,\phi, 0\big)$  to represent $x$ in cylindrical coordinates. We write $\dot{x} = \alpha\rcor+\beta\phicor+0\zcor$ to represent $\dot{x}$ in the local cylindrical coordinates at $x$.}
	\label{fig:cyl coor}
\end{figure}


Fix $\epsilon>0$  let $\tilde{F}_\epsilon$ be the $\epsilon$-spread system of $F$.

\begin{lemma}\label{claim:spread books}
For any $\epsilon>0$, there exists a constant $z_\epsilon<0$ and a smooth function $\alpha:\Omega\to\R_+$,  such that for any $r<1/4$, any $z<z_\epsilon$, and point $p= (r , z , z)$ in cylindrical coordinates,
\begin{equation}
\frac{-2r}{1 - z}\rcor \,+\,  r\phicor \,+\, \zcor\,\in\, \frac{1}{\alpha(p)}\tilde{F}_\epsilon(p).
\end{equation}
\end{lemma}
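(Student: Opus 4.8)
The goal is to show that near the "deep" part of the half-cylinder (for $z$ very negative) the spread field $\tilde F_\epsilon$, after rescaling, contains the explicit vector $v(p)\triangleq \frac{-2r}{1-z}\rcor + r\phicor+\zcor$ at the points $p=(r,z,z)$ that lie on the unperturbed spring-shaped curve. The natural route is: first compute the \emph{exact} gradient field $F=-\grad\Phi=-h'(f)\grad f$ in cylindrical/local-cylindrical coordinates; second, observe that since $h$ is increasing and smooth, $F(p)$ and $-\grad f(p)$ point in the same direction, so it suffices to understand $\grad f$ up to a positive scalar; third, use Lemma~\ref{lem:properties of f}(b) — the explicit equation \eqref{eq:surf eq} for the level surface through $p$ — to get $\grad f(p)$ as a normal to that surface, and estimate it for $z\to-\infty$ (equivalently $f\to\infty$); fourth, show that the direction of this normal, at the sampled points $p=(r,z,z)$, is within $O(\epsilon)$ (after rescaling) of the direction of $v(p)$, and then absorb the remaining discrepancy into the freedom granted by the convex hull and $\epsilon$-ball in the definition of $\tilde F_\epsilon$.

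\textbf{Key steps in order.}
\begin{enumerate}
\item Differentiate \eqref{eq:surf eq} implicitly (or differentiate \eqref{eq:book f (smooth)} directly) to obtain $\grad f$ at a point $p=(r,\phi,z)$ in the local cylindrical frame $(\rcor,\phicor,\zcor)$. Writing $a=f(p)$, the level surface is the graph $z=z(r,\phi)$ from \eqref{eq:surf eq}, so $\grad f$ is parallel to $\big(-\partial_r z,\, -\tfrac1r\partial_\phi z,\, 1\big)$ up to a positive factor depending on $\|\grad f\|$; the overall positive scaling is exactly what the $1/\alpha(p)$ in the statement is meant to soak up.
\item Specialize to the sample points $p=(r,z,z)$ (i.e. $\phi=z$, and radius equal to $z$ — note here the paper uses $r=z$? actually $p=(r,z,z)$ means $\phi=z$; keep $r$ free with $r<1/4$) and compute the partials of \eqref{eq:surf eq}. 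Using $\tanh$ and the large-argument approximation $\ln\cosh(x)\approx |x|$, $\tfrac{d}{dx}\ln\cosh(x)=\tanh(x)\to\pm1$, extract the leading behavior as $a=f(p)\to\infty$. The $\sin(a-\phi)$ term, evaluated at $\phi=z$ with $a\approx -z$ large, is what produces the rotation; one checks it contributes the $r\phicor$ component in the limit, the $r^2/(1+a)$ term contributes the $\frac{-2r}{1-z}\rcor$ radial component (since $\partial_r[r^2/(1+a)]=2r/(1+a)$ and $a\approx -z$ gives $1+a\approx 1-z$), and the $-a$ term normalizes the $\zcor$ component.
\item Show $f(p)\to\infty$ as $z\to-\infty$ on these points (immediate from Lemma~\ref{lem:properties of f}(a) and \eqref{eq:surf eq}, since the right side of \eqref{eq:book f (smooth)} forces $f\gtrsim -z$), so that for $z$ below some threshold $z_\epsilon$ the above approximations hold within the desired tolerance.
\item Control the error: the difference between $\frac{1}{\|\grad f(p)\|}(-\grad f(p))$ (properly normalized) and $v(p)$ is $o(1)$ as $z\to-\infty$; moreover because $\tilde F_\epsilon(p)=\conv\{\xi : \xi\in F(y),\ y\in\ball_\epsilon(p)\}$, the field at $p$ in fact contains a whole $\epsilon$-indexed family of directions (drifts at nearby points), whose convex hull, after rescaling by the appropriate $\alpha(p)$, contains $v(p)$ once the pointwise error is smaller than the "width" contributed by the $\epsilon$-ball. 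Define $\alpha(p)$ as this common positive rescaling factor (smooth in $p$ because $\Phi$ is $C^2$ and $\grad\Phi\ne0$ away from the minimizer — one should note $\grad f\ne 0$ on $\Omega$, which follows from strict quasi-convexity plus the graph form of the level sets).
\end{enumerate}

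\textbf{Main obstacle.} The delicate point is Step 4: making precise that the \emph{convex hull over the $\epsilon$-ball} actually captures $v(p)$, rather than just "$v(p)$ is approximately a drift direction." One must quantify how the drift direction of $F$ varies over $\ball_\epsilon(p)$ — i.e. show the set $\{-\grad\Phi(y)/\|\cdot\| : y\in\ball_\epsilon(p)\}$ spans a neighborhood (a small "cap") of directions around $-\grad\Phi(p)/\|\cdot\|$ of angular radius bounded below uniformly for $z<z_\epsilon$ — and then show the approximation error from Step 2 is, for $z$ sufficiently negative, smaller than this angular radius times the relevant magnitudes. This requires a lower bound on the "spread" of $\grad\Phi$ over an $\epsilon$-ball (essentially a non-degeneracy of the Hessian of $\Phi$, or more robustly, a direct geometric argument that the rotating-book level sets turn fast enough that nearby points have genuinely different normals), combined with the large-$z$ asymptotics being uniform in $r<1/4$. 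I would handle it by first proving the clean asymptotic identity $\lim_{z\to-\infty}\big(\text{normalized }-\grad f(r,z,z)\big)=v(r,z,z)/\|v\|$ with an explicit $O(1/|z|)$ rate, then invoking a short lemma that $F$'s direction field is, say, bi-Lipschitz-transverse over $\epsilon$-balls on $\Omega_\zeta$, so that the $O(1/|z|)$ error is swallowed for $z<z_\epsilon$; the rest is bookkeeping on the scaling constant $\alpha(p)$ and checking smoothness.
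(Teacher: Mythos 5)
Your overall skeleton (use the level-set formula \eqref{eq:surf eq}, read off $\grad f$ as a normal to the graph, exploit the $\epsilon$-ball in the definition of $\tilde F_\epsilon$, soak up the positive scaling into $\alpha(p)$) matches the paper's, but the central quantitative claim in Steps 2 and 4 is false, and the proof as outlined fails there. The points $p=(r,z,z)$ sit (up to an $O(1/|z|)$ offset) on the \emph{crease} of their level set, i.e.\ where $\sin(f-\phi)\approx 0$, so the $\phicor$-component of the normal, namely $\tanh\big(fr\sin(f-\phi)\big)\cos(f-\phi)$, is evaluated right in the transition zone of the $\tanh$. Working this out (write $f(p)=-z+\delta$ with $\delta=O(1/|z|)$), the $\phicor$-component tends to something like $\tanh\big(r^3/(1-r)\big)$, not to $r$; the discrepancy with the target vector $\frac{-2r}{1-z}\rcor+r\phicor+\zcor$ is therefore of order $r$ uniformly in $z$, not $o(1)$ as $z\to-\infty$. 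So the "clean asymptotic identity" you propose to prove first does not hold, and there is no small error for the convex hull to absorb.

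The paper's actual mechanism is different in kind, and is worth internalizing: instead of approximating $\grad f$ \emph{at} $p$, one finds a point $q\in\ball_\epsilon(p)$ at which $\grad f(q)$ is \emph{exactly} parallel to the target vector. Working in Cartesian coordinates on the level set $f=2\pi n$ (equation \eqref{eq:eq of z in x y}), the map $y_0\mapsto -\tanh(2\pi n\,y_0)-\tfrac{2y_0}{1+2\pi n}$ is continuous, strictly decreasing, and sweeps (for $n$ large) through all of $[-1/4,1/4]$ while $|y_0|<\epsilon/3$; by the intermediate value theorem one picks $y_0(x_0)$ so that the $\ycor$-component of the normal equals $x_0=r$ exactly, and checks the resulting point lies within $\epsilon$ of $p$. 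No convex hull, no angular-cap estimate, and no Hessian non-degeneracy is needed — which is fortunate, because the "cap" of normal directions over $\ball_\epsilon(p)$ here has angular radius of order one (the normal swings across the crease as $\tanh$ goes from $-1$ to $+1$), not the small radius your Step 4 envisions. Your "main obstacle" paragraph thus identifies the right place where the work happens, but proposes to close it with an estimate (small error versus small but nonzero spread) whose two sides both have the wrong order of magnitude; the correct closure is an exact intermediate-value selection inside the $\epsilon$-ball.
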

\begin{proof}
Fix a sufficiently large $n\in\N$, to be determined later. 
From Lemma~\ref{lem:properties of f}~(b), consider the level-set $f(r,\phi,z)=2\pi n$: 
\begin{equation}
z\big(r,\phi\big) \,=\, -2\pi n   \,+\, \frac1{2\pi n} \ln\Big(\cosh \big(2\pi n r\sin(\phi)\big)\Big)  \,+\,\frac{r^2}{1+2\pi n}.
\end{equation}
This level-set has the following representation in the Cartesian coordinates:
\begin{equation}\label{eq:eq of z in x y}
z\big(x,y\big) \,=\, -2\pi n   \,+\, \frac1{2\pi n} \ln\Big(\cosh \big(2\pi n \,y\big)\Big)  \,+\,\frac{x^2+y^2}{1+2\pi n}.
\end{equation}
Fix some $x_0,y_0\le 1/4$, and consider the vector
\begin{equation}\label{eq:scaled normal vec}
v\, =\, -\left(\left.\frac{dz}{dx} \right\rvert_{{(x_0,y_0)}}\right)\xcor \, - \, \left(\left.\frac{dz}{dy} \right\rvert_{{(x_0,y_0)}} \right) \ycor \,+\, \zcor.
\end{equation}
Then\footnote{Note that for any surface  $z(x,y)$, \eqref{eq:scaled normal vec} gives an orthogonal vector to that surface.}, 
$v$ is orthogonal to the surface 
$z(x,y)$ at point $\big(x_0,y_0,z(x_0,y_0)\big)$. 
Equivalently, $v$ is a scaling of the normal vector for level-set  $f(r,\phi,z)=2\pi n$ 
 at $\big(x_0,y_0,z(x_0,y_0)\big)$.   
Then, there exists a constant $\alphah(x,y,z)>0$ such that 
\begin{equation} \label{eq:grad x0y0z =av}
\alphah \big(x_0,y_0,z(x_0,y_0)\big)\, v \,=\, \grad f\big(x_0,y_0,z(x_0,y_0)\big).
\end{equation}
Note that the $z$ coordinate entry of $v$ is unity and, form Lemma~\ref{lem:properties of f}~(c), $\grad f$ is smooth. Then, the function $\alphah$ is smooth, as well.

We proceed by elaborating on the partial derivatives in \eqref{eq:scaled normal vec}. 
We have
\begin{equation}\label{eq:dzdx small}
-\left.\frac{dz}{dx} \right\rvert_{{(x_0,y_0)}}  = -\frac{2x_0}{1+2\pi n},
\end{equation}
and 
\begin{equation} \label{eq:partial derivative of z wrt y}
\begin{split}
-\left.\frac{dz}{dy} \right\rvert_{{(x_0,y_0)}} \,
=\, -\tanh\big(2\pi n\,y_0\big) \,-\, \frac{2 y_0}{1+2\pi n}.
\end{split}
\end{equation}
Let $g(y_0) = -\left.\frac{dz}{dy} \right\rvert_{{(x_0,y_0)}}$ for some $x_0$ (note that based on \eqref{eq:partial derivative of z wrt y}, $-\left.\frac{dz}{dy} \right\rvert_{{(x_0,y_0)}}$ is independent depend of $x_0$). Then, it follows from \eqref{eq:partial derivative of z wrt y} that $g(\cdot)$ is a strictly decreasing and smooth function, and
\begin{equation} \label{eq:g y smooth}
\begin{split}
g(y_0) \,\xrightarrow{\,n\to\infty\,}\,\begin{cases} -1 & y_0>0, \\1 & y_0<0.  \end{cases}
\end{split}
\end{equation}
Hence, for any $\epsilon>0$, there is a sufficiently large $\tilde{n}_\epsilon$, such that for any $n\ge \tilde{n}_\epsilon$ and any $x_0\in[-1/4,\,1/4]$, there exists a $y_0<\epsilon/3$, such that $g(y_0)=x_0$.
Let $y_0(\cdot)$ be the inverse of $g(\cdot)$, that is for any $x_0\in[-1/4,\,1/4]$,  $g\big(y_0(x_0)\big) = x_0$; equivalently,
\begin{equation}\label{eq:dzdy =x0}
-\left.\frac{dz}{dy} \right\rvert_{{(x_0,y_0(x_0))}} = x_0.
\end{equation}
It follows from the smoothness of $g(\cdot)$ that $y_0(\cdot)$ is smooth as well.

Let $n_\epsilon = \max\big(\tilde{n}_\epsilon,\, 1/\epsilon\big)$.
Then, for $\big(x_0,y_0(x_0)\big)$, plugging \eqref{eq:dzdx small}   and \eqref{eq:dzdy =x0} into \eqref{eq:scaled normal vec}, we obtain
\begin{equation} \label{eq:second def  for v in cartesian}
v = -\frac{2x_0}{1+2\pi n} \xcor + x_0\ycor + \zcor.
\end{equation}
Let $p = \big( x_0 ,\, 0,\, -2\pi n   \big)$. 
Then, 
\begin{equation} \label{eq:x0y0z in eps dist from p}
\begin{split}
\big\|\big( x_0 ,\, y_0(x_0),\,& z(x_0,y_0(x_0))\big) -p\big\|
\,\le\, \big|y_0(x_0)\big| \,+\, \big| z(x_0,y_0(x_0)) + 2\pi n\big|\\
&=\, \big|y_0(x_0)\big| \,+\, \frac1{2\pi n} \ln\Big(\cosh \big(2\pi n \,y_0(x_0)\big)\Big)   \,+\,\frac{x_0^2+y_0(x_0)^2}{1+2\pi n}\\
&\le\, \big|y_0(x_0)\big| \,+\, \big|y_0(x_0)\big|   \,+\,\frac{x_0^2+y_0(x_0)^2}{1+2\pi n}\\
&<\, \frac{\epsilon}{3} \,+\, \frac{\epsilon}{3} \,+\,  \frac{(1/4)^2+ (1/4)^2}{1+2\pi/\epsilon}\\
&<\, \epsilon,
\end{split}
\end{equation}
where the first inequality is a triangle inequality, the equality is from the definition of $z(x_0,y_0)$ in \eqref{eq:eq of z in x y}, 
the second inequality is because $\ln\big(\cosh(x)\big)\le |x|$, for all $x\in \R$, 
and the third inequality is due to the assumptions $y_0(x_0)<\epsilon/3$, $x_0,y_0(x_0)<1/4$, and $n\ge n_\epsilon\ge 1/\epsilon$.
Combining  \eqref{eq:grad x0y0z =av} and \eqref{eq:x0y0z in eps dist from p}, and letting $\alpha(p)=\alphah\big( x_0 ,\, y_0(x_0),\, z(x_0,y_0(x_0))\big)$ we obtain 
\begin{equation} \label{eq:vinfteps}
\alpha(p)\, v\in \tilde{F}_\epsilon(p).
\end{equation}
Since $\alphah(\cdot)$, $y_0(\cdot)$, and $z(\cdot)$ are smooth, so is $\alpha(\cdot)$.

Back to the cylindrical coordinates, letting $r_0=x_0$, we have $p = \big(r_0, -2\pi n, -2\pi n\big)$, and \eqref{eq:second def  for v in cartesian} turns into $v \,=\, -{2r_0}/{(1+2\pi n)} \rcor + r_0\phicor + \zcor$. Then, \eqref{eq:vinfteps} implies that
\begin{equation} \label{eq:v 3rd def}
-\frac{2r_0}{1+2\pi n} \rcor +r_0\phicor + \zcor\, \in\, \frac{1}{\alpha(p)} \tilde{F}_\epsilon(p).
\end{equation}
 
By rotation of the coordinates around the $z$ axis, we can make a similar argument for every $z\le -2\pi n_\epsilon$, which is not necessarily an integer multiple of $2\pi$. 
Then, letting $z_\epsilon = - 2\pi n_\epsilon$, the lemma follows from \eqref{eq:v 3rd def}.
\end{proof}


For any point $x_0$ on the $z$ axis, there is an unperturbed trajectory $x(\cdot)$ of $F$, which is also an unperturbed trajectory of $\tilde{F}_\epsilon$, initialized at $x(0)=x_0$, that always stays on the $z$-axis.
In what follows, we will use Lemma~\ref{claim:spread books} to show that there is a perturbed trajectory $\xt(\cdot)$ of $\tilde{F}_\epsilon$ corresponding to a perturbation of size $\epsilon$, that is initialized on the $z$ axis and whose distance from the $z$ axis grows larger than $1/6$ at some positive time.

Consider an auxiliary dynamical system $\dot{x}=G(x)$, with
\begin{equation} \label{eq:def aux dyn sys G}
 G\big(r,\phi,z\big)  = \frac{-2r}{1 - z}\rcor +  r\phicor + \zcor ,
\end{equation}
over the half-cylinder $\Omega$.
Let 
\begin{equation} \label{eq:def of z0 in lemma 2}
z_0 = z_\epsilon +\zeta - \frac{2}{\epsilon},
\end{equation}
where $z_\epsilon$ is the  constant in the statement of Lemma~\ref{claim:spread books}, and $\zeta$ is the constant in the statement of the theorem. 
Let $r(0)=0$, and let  $r:\R_+\to\R$ be the solution of the differential equation
\begin{equation} \label{eq:def of rt including epsilon}
\dot{r}(t) =\epsilon - \frac{2r}{1-z_0-t}.
\end{equation}
Then, for any $t\in[0,1/(2\epsilon)]$, we have $\dot{r}(t)<\epsilon$, and  as a result,
\begin{equation} \label{eq:rt le 1over2}
r(t)\,\le\, \epsilon t\,\le\, 1/2.
\end{equation}
Moreover, for any $t\in[0,1/(2\epsilon)]$,
\begin{equation} \label{eq:r dot le e/3}
\begin{split}
\dot{r}(t) \, &=\, \epsilon - \frac{2r}{1-z_0-t}\\
& \ge\, \epsilon - \frac{2\times (1/2)}{1-z_\epsilon -\zeta +2/\epsilon- 1/(2\epsilon)}\\
& >\, \epsilon - \frac{1}{3/(2\epsilon)}\\
& =\, \frac{\epsilon}{3},
\end{split}
\end{equation}
where the first inequality is due to \eqref{eq:rt le 1over2} and the definition of $z_0$ in \eqref{eq:def of z0 in lemma 2}.
Therefore, $r\big(1/(2\epsilon)\big) > 1/6$, and the second inequality is because $z_\epsilon,\zeta<0$ (as mentioned in the statements of Lemma~\ref{claim:spread books} and Theorem~\ref{th:3d F has unbounded sensitivity}).
It follows from \eqref{eq:r dot le e/3} that there exists a $t_0\in \big(0,1/(2\epsilon)\big)$ at which
\begin{equation}\label{eq:rt0 = 1 over 6}
\begin{split}
&r(t_0) \,=\, 1/6,\\
0\le \, & r(t)\,\le\, 1/6,\qquad \forall t\in [0,t_0].
\end{split}
\end{equation}
We fix this $t_0$ for the rest of the proof.

In cylindrical coordinates, for any  $t\in[0,t_0]$ let 
\begin{equation} \label{eq:def pt in cyl coordinates}
p(t) = \big(r(t), z_0+t, z_0+t \big).
\end{equation}
Also let $u(t)=\epsilon \rcor$ in the local cylindrical coordinates at $p(t)$, and
$\tilde{U}(t) = \int_{0}^t u(\tau)\,d\tau$.
Then, $u(t)=\epsilon  \cos(z_0+\tau) \xcor +  \epsilon  \sin(z_0+\tau) \ycor$ in the Cartesian coordinates, and
\begin{equation}\label{U tilde is bounded by 2 eps}
\begin{split}
\Ltwo{\tilde{U}(t)} \, &=\, \Ltwo{ \int_{0}^t u(\tau)\,d\tau}\\
&\le\, \Ltwo{ \int_{0}^t \epsilon \cos(z_0+\tau) \,d\tau} \,+\, \Ltwo{ \int_{0}^t \epsilon \sin(z_0+\tau)\,d\tau}\\
&\le\, 2\epsilon.
\end{split}
\end{equation}
Moreover, for any $t\in[0,t_0]$, in the local cylindrical coordinates at $p(t)$ we have
\begin{equation}\label{eq:dot p is u plus g}
\begin{split}
\dot{p}(t) \,&=\, \dot{r}(t) \rcor \,+\, r(t)\phicor \,+\, \zcor\\
&=\, \epsilon \rcor \,-\,  \frac{2r(t)}{1-z_0-t} \rcor \,+\, r(t)\phicor \,+\, \zcor\\
&=\, u(t) \,+\, G\big(p(t)\big).
\end{split}
\end{equation}
where the equalities are due to \eqref{eq:def pt in cyl coordinates}, \eqref{eq:def of rt including epsilon}, and \eqref{eq:def aux dyn sys G}, respectively. 
Then, $p(\cdot)$ is a perturbed trajectory of $G(\cdot)$ corresponding to perturbation function $\tilde{U}(\cdot)$.

Consider the function $\alpha(\cdot):\Omega \to \R_+$ defined in Lemma~\ref{claim:spread books}, and let $\beta(\cdot)$ be a solution of  the differential equation
\begin{equation} \label{eq:ode of beta}
\dot{\beta}(t) = \alpha\Big( p\big( \beta(t) \big) \Big).
\end{equation}
Since $\alpha(\cdot)$ and $p(\cdot)$ are smooth and locally Lipschitz functions,  $\alpha \circ p (\cdot)$ is also locally Lipschitz, and \eqref{eq:ode of beta} has a solution \cite{Butc16}. 

Let $t_1=\beta^{-1}(t_0)$, and let $\xt(t)=p\big(\beta(t)\big)$, for all $t\in[0,t_1]$.
Then,
\begin{equation}\label{eq:xt t1 p t0}
\xt(t_1)=p(t_0).
\end{equation}
For  $t\in[0,t_1]$, let $\xt_z(t)$ and $p_z\big(\beta(t)\big)$ be the $z$ coordinates of $\xt(t)$ and $p\big(\beta(t)\big)$, respectively.
Then, for any $t\in[0,t_1]$, 
\begin{equation} \label{eq:xtz less than zeps}
\begin{split}
\xt_z(t) \, &=\, p_z\big(\beta(t)\big)\\
&=\, z_0+\beta(t)\\ 
&\le\, z_0+\beta(t_1)\\ 
&=\, z_0+t_0\\
&=\, z_\epsilon +\zeta -\frac{2}{\epsilon} \,+\, t_0\\
&\le\, z_\epsilon +\zeta -\frac{2}{\epsilon} +\frac{1}{2\epsilon}\\
&\le\, z_\epsilon+\zeta .
\end{split}
\end{equation}
where 
the first inequality is because $\dot{\beta}>0$, 
fourth equality is by the definition of  $z_0$ in \eqref{eq:def of z0 in lemma 2}, 
and the second inequality is because $t_0\le 1/(2\epsilon)$.

For any $t\in [0,t_1]$, let $U(t) = \tilde{U}\big(\beta(t)\big)$. 
We now show that $\xt(\cdot)$ is a perturbed trajectory of $\tilde{F}_\epsilon$, corresponding to the perturbation function $U(\cdot)$. 
We have for any $t\in[0,t_1]$,
\begin{equation}\label{eq:xt dot as a sum of alpha g and U tilde}
\begin{split}
\frac{d}{dt}\xt(t) \,&=\, \frac{d}{dt} p\big( \beta(t) \big)\\
\,&=\, \dot{\beta}(t)\, \frac{d}{d\beta(t)} p\big( \beta(t) \big)\\
\,&=\, \dot{\beta}(t)\, G\Big( p\big( \beta(t) \big)\Big)  \, +\, \dot{\beta}(t)\, u\big(\beta(t)\big)\\
\,&=\, \alpha\Big(p\big(\beta(t)\big)\Big)\, G\big( \xt(t) \big)  \, +\, \frac{d}{dt}\tilde{U}\big(\beta(t)\big)\\
\,&=\, \alpha\big(\xt(t)\big)\, G\big( \xt(t) \big)  \, +\, \frac{d}{dt}U(t),
\end{split}
\end{equation}
where the last three equalities are due to \eqref{eq:dot p is u plus g}, \eqref{eq:ode of beta}, and the definition of $\xt(\cdot)$, respectively.
Moreover, it follows from Lemma~\ref{claim:spread books}, definition of $G(\cdot)$ in \eqref{eq:def aux dyn sys G}, and \eqref{eq:xtz less than zeps}, that $\alpha\big(\xt(t)\big)\, G\big( \xt(t) \big)\ \in \tilde{F}_\epsilon\big(\xt(t)\big)$, for all $t\in \big[0,t_1 \big]$.
Then, \eqref{eq:xt dot as a sum of alpha g and U tilde} implies that for any $t\in [0,t_1]$,
\begin{equation}
\frac{d}{dt}\xt(t) \,\in\,  \tilde{F}_\epsilon\big(\xt(t)\big)   \, +\, \frac{d}{dt}U(t).
\end{equation}
Therefore, $\xt(\cdot)$ is a perturbed trajectory of $\tilde{F}_\epsilon$, corresponding to the perturbation function $U(t)$. 

Finally, let $x(\cdot)$ be an unperturbed trajectory of $\tilde{F}$, initialized at $x(0) = \xt(0)=\big(0,z_0,z_0\big)$, that always stays on the $z$ axis.
Then, \eqref{eq:xt t1 p t0} and \eqref{eq:rt0 = 1 over 6} imply that 
\begin{equation} \label{eq:xt -x t =1/6}
\Ltwo{\xt(t_1) -x(t_1)} \,=\, \Ltwo{p(t_0) -x(t_1)}
\,\ge\, r(t_0)
\, = {1}/{6}.
\end{equation}
Moreover, it follows from \eqref{U tilde is bounded by 2 eps} that 
\begin{equation}\label{eq:ltwo U le 2eps}
\sup_{\tau\le t_1}\Ltwo{U(t)} \,=\, \sup_{\tau\le t_1}\Ltwo{\tilde{U}\big(\beta(t)\big)} \,\le\, 2\epsilon.
\end{equation} 
Since for any $\epsilon>0$, there exists such pair $x(\cdot)$ and $\ptraj(\cdot)$ of trajectories and perturbation $U(\cdot)$ for which \eqref{eq:xt -x t =1/6} and \eqref{eq:ltwo U le 2eps} hold, no constant $C$ can satisfy \eqref{eq:bis of spread systems}.
Then, Theorem~\ref{th:spread sys} implies that  $F$ has unbounded sensitivity. This completes the proof of Theorem~\ref{th:3d F has unbounded sensitivity}.

\medskip


\medskip 
\section{\bf Proof of Theorem~\ref{th:pwc example}} \label{app:pwc example}
To prove Theorem~\ref{th:pwc example}, we follow the high level idea discussed in Section~\ref{subsec:pwc with infinite pieces}.
Consider the convex potential function $\Phi$ of Section~\ref{subsec:subdiff dont work} over the half-cylinder $\Omega$ (defined in \eqref{eq:def Omega}), whose gradient field $F$ has unbounded sensitivity.
It was shown in Theorem~\ref{th:3d F has unbounded sensitivity} that for any $\zeta\le -1$, $F$ has unbounded sensitivity over $\Omega_\zeta$ (defined in \eqref{eq:def Omega zeta}).
We now construct a sequence $\zeta_i$, $i=1,2,\ldots$, of negative numbers as follows.
Let $\zeta_0 =-1$. 
For any $i\in\Z_+$, Theorem~\ref{th:3d F has unbounded sensitivity} implies that there exist an $\epsilon_i>0$, a time $t_i>0$, 
an unperturbed trajectory $x_i(\cdot)$ of $F$, and a perturbed trajectory $\xt_i(\cdot)$ of $F$ corresponding to a perturbation function $U_i(\cdot)$ 
and initialized at $\xt_i(0)=x_i(0)$, such that 
\begin{equation}\label{eq:x xt in omega}
x_i(t), \xt_i(t)\in \Omega_{\zeta_i},\quad \forall t\in[0,t_i],
\end{equation}
\begin{equation}
\big\| U_i(t) \big\|  \le \epsilon_i,  \quad \forall t\in[0,t_i],
\end{equation}
and
\begin{equation}
\big\| \xt_i(t_i)-x_i(t_i) \big\|  \ge  i \, \epsilon_i.
\end{equation}
We then let for any $i\in\Z_+$,
\begin{equation}
\zeta_{i+1} = -\left( 1+ \sup_{t\in[0,t_i]}  \big\| \xt_i(t) \big\| \, + \, \sup_{t\in[0,t_i]}  \big\| x_i(t) \big\| \right) .
\end{equation}
Then, it follows from \eqref{eq:x xt in omega} that $\xt_i(0)\le\zeta_i$, and therefore $\zeta_{i+1}<\zeta_i$. Hence, $\zeta_i$, $ i=0,1,\ldots$, is a decreasing sequence. 
Therefore, $\Omega_{\zeta_{i+1}}\subset \Omega_{\zeta_i}$, for all $i\in\Z_+$.
For any $i\in\Z_+$, let 
\begin{equation}
\Gamma_i  \, = \, \Omega_{\zeta_i} \backslash \Omega_{\zeta_{i+1}} \, = \, \Big\{(r,\phi,z) \,\Big|\, r\le 1/6, \, z  \in \big[\zeta_{i+1}, \zeta_i\big) \Big\}.
\end{equation}
Then, each $\Gamma_i$ is bounded. Moreover, for any $t\in[0,t_i]$, we have $x_i(t), \xt_i(t)\in \Gamma_i$.

We now construct a fine grid within the half-cylinder $\Omega$, defined in \eqref{eq:def Omega}, with increasing resolution as $z\to-\infty$, and consider a corresponding triangulation of $\Omega$ with simplexes whose vertices lie at these grid points. 
We then consider a piecewise linear approximation $\Psi$ of $\Phi$, by letting $\Psi(x)=\Phi(x)$ on the grid points, and let  $\Psi$ be a linear interpolation inside each simplex. 
Since $\Phi$ is continuously twice differentiable and  $\Gamma_i$ is compact, for all $i\in\Z_+$, we can  choose the grid points so that $\Psi$ gives an arbitrarily accurate approximation of $\Phi$ inside $\Gamma_i$, in the following sense: 
for any $i\in\Z_+$, any point $p\in\Gamma_i$, and any $\xi\in\partial \Psi(p)$,
\begin{equation} \label{eq:approx of Phi with Psi is good and grads are close}
\big\|  \xi - \grad\Phi(p)  \big\|  \,\le\, \frac{\epsilon_i}{t_i}.
\end{equation}
Capitalizing on \eqref{eq:approx of Phi with Psi is good and grads are close}, we now show that $\xt_i(\cdot)$ and $x_i(\cdot)$ are perturbed trajectories of the gradient field of $\Psi$, corresponding to perturbations of size no larger than $2\epsilon_i$.

Let $H$ be the (sub)gradient field of $\Psi$, and fix an $i\in\Z_+$. 
For any $t\in [0,t_i]$,
\begin{equation}\label{eq:x i decomposition on xi and grad phi}
\begin{split}
x_i(t) \,&=\, x_i(0) \, +\, \int_{0}^t \grad\Phi\big(x_i(\tau)\big) \,d\tau \\
&=\, x_i(0) \, +\, \int_{0}^t \xi\big(x_i(\tau)\big) \,d\tau \,+\,  \int_{0}^t \Big(\grad\Phi\big(\xt(\tau)\big) - \xi\big(x_i(\tau)\big)\Big) \,d\tau,
\end{split}
\end{equation}
where $\xi\big(x_i(\tau)\big)$ is an arbitrary subgradient of $\Psi $ at $x_i(\tau)$.
In the same vein, for any $t\in [0,t_i]$,
\begin{equation}\label{eq:xt i decomposition on xi and grad phi}
\begin{split}
\xt_i(t) \,&=\, \xt_i(0) \, +\, \int_{0}^t \grad\Phi\big(\xt_i(\tau)\big) \,d\tau \,+\, U(t)\\
&=\, \xt_i(0) \, +\, \int_{0}^t \tilde{\xi}\big(\xt_i(\tau)\big) \,d\tau \,+\, \bigg[ U(t) \, +\, \int_{0}^t \Big(\grad\Phi\big(\xt_i(\tau)\big) - \tilde{\xi}\big(\xt_i(\tau)\big)\Big) \,d\tau\bigg],
\end{split}
\end{equation}
where $\tilde{\xi}\big(\xt_i(\tau)\big)$ is an arbitrary subgradient of $\Psi $ at $\xt_i(\tau)$.
For $t\in[0,t_i]$, let
\begin{equation}
V_i(t) = \int_{0}^t \Big(\grad\Phi\big(x_i(\tau)\big) - \xi\big(x_i(\tau)\big)\Big) \,d\tau,
\end{equation}
and 
\begin{equation}
\tilde{U}_i(t) =  U_i(t) \, +\, \int_{0}^t \Big(\grad\Phi\big(\xt_i(\tau)\big) - \tilde{\xi}\big(\xt_i(\tau)\big)\Big) \,d\tau.
\end{equation}
It follows from \eqref{eq:x i decomposition on xi and grad phi} and \eqref{eq:xt i decomposition on xi and grad phi} that $x_i(\cdot)$ and $\xt_i(\cdot)$ are perturbed trajectories of $H$ corresponding to perturbations $V_i(\cdot)$ and $\tilde{U}_i(\cdot)$, respectively.
Moreover, for any $t\in[0,t_i]$,
\begin{equation*}
\begin{split}
\big\|  V_i(t) \big\| &\,=\, \Ltwo{ \int_{0}^t \Big(\grad\Phi\big(x_i(\tau)\big) - \xi\big(x_i(\tau)\big)\Big) \,d\tau}   \\
&\,\le\, \int_{0}^t \Ltwo{\grad\Phi\big(x_i(\tau)\big) - \xi\big(x_i(\tau)\big)} \,d\tau\\
&\,\le\,  \int_{0}^t \frac{\epsilon_i}{t_i} \,d\tau\\
&\,=\, \frac{t\,\epsilon_i}{t_i}\\
&\, < \, 2\epsilon_i,
\end{split}
\end{equation*}
where the second inequality is due to  \eqref{eq:approx of Phi with Psi is good and grads are close}. In the same vein,
\begin{equation*}
\begin{split}
\big\|  \tilde{U}_i(t) \big\| &\,=\, \Ltwo{ U_i(t) \,+\, \int_{0}^t \Big(\grad\Phi\big(\xt_i(\tau)\big) - \xi\big(\xt_i(\tau)\big)\Big) \,d\tau}   \\
&\,\le\, \Ltwo{U_i(t)} \,+\,\int_{0}^t \Ltwo{\grad\Phi\big(\xt_i(\tau)\big) - \xi\big(\xt_i(\tau)\big)} \,d\tau\\
&\,\le\,  \epsilon_i \,+ \,\int_{0}^t \frac{\epsilon_i}{t_i} \,d\tau\\
&\,=\, \epsilon_i \,+ \,\frac{t\,\epsilon_i}{t_i}\\
&\, \le \, 2\epsilon_i.
\end{split}
\end{equation*}

Let $y_i(\cdot)$ be an unperturbed trajectory of $H$ initialized at $y_i(0)=x_i(0)$.
\oli{The fact that $y_i(t)$ does not leave $\Omega$ before time $t_i$ can be proved, given that the grid is fine enough. We can choose the grid so fine for this to happen.}
Then,
\begin{equation*}
\begin{split}
\Ltwo{ \xt_i(t_i) -y_i(t_i)} \, +\, \Ltwo{ x_i(t_i) -y_i(t_i)} &\, \ge\, \Ltwo{ \xt_i(t_i) -x_i(t_i)}
\, \ge\, i\,\epsilon_i.
\end{split}
\end{equation*}
Therefore, either $\Ltwo{ \xt_i(t_i) -y_i(t_i)}\ge i\epsilon_i/2$ or  $\Ltwo{ x_i(t_i) -y_i(t_i)}\ge i\epsilon_i/2$.
Hence, for any $i\in\Z_+$, there is a perturbation function of size no larger than $2\epsilon_i$, and a corresponding pair of perturbed and unperturbed trajectories of $H$, with the same initial conditions, whose distance grows larger than $i\epsilon_i/2$ in time $t_i$.
This implies that $H$ has unbounded sensitivity and completes the proof of the theorem.


\medskip
\section{\bf  Proof of Theorem~\ref{th:main dis}} \label{sec:proof disc}
The high level idea is to simulate a discrete time perturbed trajectory with a continuous time perturbed trajectory, 
and then take advantage of the bounded sensitivity property of the  continuous time system.


\begin{lemma}[Simulation of Discrete Time Trajectories with Continuous Time Trajectories] \label{lem:simul}
Consider a dynamical system  $F$, a discrete time perturbation $\dpert(\cdot)$, a corresponding discrete time trajectory  $\dtraj(\cdot)$, and a $\ddrift(\cdot)$ that satisfies \eqref{eq:def of discrete traj}.
Let $\pert(\cdot)$ be a continuous time perturbation,
\begin{equation} \label{eq:simul bound pert 1}
\pert(t) \,= \, \dpert\big(\tf\big) \, - \, \big(t-\tf\big) \ddrift\big(\tf\big),\qquad \forall t\in\R^n.
\end{equation}
Then, there exists a  corresponding continuous time perturbed trajectory $\ptraj(\cdot)$  such that
\begin{equation}\label{eq:simul x z equal 1}
\ptraj(k)=\dtraj(k), \qquad  k=0,1,\ldots\,.
\end{equation}
\end{lemma}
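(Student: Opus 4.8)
The plan is to let the continuous-time trajectory \emph{hold} at each successive discrete state for one unit of time, with the perturbation cancelling the drift that would otherwise displace it. Concretely, I would take the right-continuous step functions
\begin{equation*}
\ptraj(t)\,=\,\dtraj\big(\tf\big)\qquad\text{and}\qquad \ff(t)\,=\,\ddrift\big(\tf\big),\qquad \forall\,t\ge 0,
\end{equation*}
so that $\ptraj$ equals the constant $\dtraj(k)$ on each slot $[k,k+1)$. In particular $\ptraj(k)=\dtraj(k)$ for every $k\in\Z_+$, which is exactly \eqref{eq:simul x z equal 1}. Both functions are measurable and integrable on every bounded interval, and since $\ddrift(k)\in F\big(\dtraj(k)\big)$ by \eqref{eq:def of discrete traj}, we get $\ff(t)\in F\big(\ptraj(t)\big)$ for \emph{every} $t\ge 0$ --- not merely almost everywhere --- precisely because $\ptraj(t)$ sits at the discrete state $\dtraj(\tf)$.

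It then remains to check the integral identity of Definition~\ref{def:integral pert traj}, i.e. $\ptraj(t)=\int_0^t\ff(\tau)\,d\tau+\pert(t)$ with $\pert$ as in \eqref{eq:simul bound pert 1} (extended to $t<0$ by any constant, as only $t\ge 0$ is used). Fixing $k\in\Z_+$ and $t\in[k,k+1)$, the step form of $\ff$ gives $\int_0^t\ff(\tau)\,d\tau=\sum_{j=0}^{k-1}\ddrift(j)+(t-k)\,\ddrift(k)$; adding $\pert(t)=\dpert(k)-(t-k)\,\ddrift(k)$, the $(t-k)\ddrift(k)$ terms cancel and the identity collapses to $\dtraj(k)=\sum_{j=0}^{k-1}\ddrift(j)+\dpert(k)$, which is precisely the discrete recursion \eqref{eq:def of discrete traj}. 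Since $k$ and $t\in[k,k+1)$ were arbitrary, the identity holds on all of $\R_+$, so $\ptraj$ is a perturbed trajectory of $F$ corresponding to $\pert$; it is also worth noting that $\pert$ is right-continuous, being continuous on each open slot with $\pert(k)=\dpert(k)$ equal to its right-hand limit.

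I do not expect a substantive obstacle: this is a bookkeeping lemma whose purpose is only to set up the reduction used in the proof of Theorem~\ref{th:main dis}. The one point that must be handled carefully is the choice of a \emph{piecewise-constant} $\ptraj$ rather than, say, the piecewise-linear interpolant through the $\dtraj(k)$'s: only the step interpolant keeps $\ptraj$ pinned to points where a drift of $F$ is actually prescribed, so that $\ff(t)\in F(\ptraj(t))$ holds pointwise; the cost is that the perturbation must absorb the entire drift increment $\ddrift(k)$ accrued during slot $k$ --- this is exactly the $-\tr\,\ddrift(\tf)$ term in \eqref{eq:simul bound pert 1} --- together with a discontinuity of $\pert$ at integer times accounting for the jump of the discrete trajectory.
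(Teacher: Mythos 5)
Your construction is exactly the paper's: both take the piecewise-constant interpolant $\ptraj(t)=\dtraj(\lfloor t\rfloor)$, $\ff(t)=\ddrift(\lfloor t\rfloor)$, and verify the integral identity by noting that the $(t-\lfloor t\rfloor)\,\ddrift(\lfloor t\rfloor)$ term in $\pert$ cancels the partial-slot drift. The argument is correct (modulo the same harmless off-by-one indexing conventions present in the paper's own Definition~\ref{def:disc traj}), so there is nothing further to add.
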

\begin{proof}
For any $t\in\R_+$, let
\begin{equation}
\ptraj(t)=\dtraj\big(\tf\big)
\end{equation}
We show that $\ptraj$ is a perturbed trajectory corresponding to perturbation $\pert$.
For any $t\in\R_+$, let $\drift(t)= \ddrift\big(\tf\big)$. 
Then, $\drift(t)\in F\big( \dtraj(\tf)  \big) = F\big( \ptraj(t)  \big)$. 
Moreover, for any $t\in\R_+$,
\begin{equation}
\begin{split}
\ptraj(t) \, &=\, \dtraj\big(\tf\big)\\
&=\, \sum_{k\le t} \ddrift(k) \, +\, \dpert\big(\tf\big)\\
&=\, \sum_{k\le t} \int_{k-1}^k \drift(\tau)\,d\tau \, +\, \Big[ \pert(t) + \big(t-\tf\big) \ddrift\big(\tf\big)\Big]\\
&=\, \int_{0}^{\tf} \drift(\tau)\,d\tau \, +\, \pert(t) + \int_t^{\tf} \drift(t)\\
&=\, \int_{0}^{t} \drift(\tau)\,d\tau \, +\, \pert(t).
\end{split}
\end{equation}
Therefore,  $\ptraj$ is a perturbed trajectory corresponding to perturbation $\pert$, and the lemma follows.
\end{proof}

\medskip
\begin{proof}[\bf Proof of Theorem \ref{th:main dis}]
For Part (a), 
 it follows from Lemma \ref{lem:simul} that there exists a perturbation function $\pert(\cdot)$ with corresponding continuous time perturbed trajectory $\ptraj(\cdot)$ that satisfy  (\ref{eq:simul bound pert 1}) and (\ref{eq:simul x z equal 1}). 
 Then, for any $k\in\Z_+$,
\begin{equation}\label{eq:proof of th disc first case bound}
\begin{split}
\Ltwo{x(k)-\dtraj(k)} \,&= \, \Ltwo{x(k)-\ptraj(k)}\\
&\le \, \divconst\, \sup_{t\le k} \Ltwo{ \pert(\tau)}\\
&\le \,  \divconst\, \left(\max_{j< k} \Ltwo{ \ddrift(j)} \,+ \,\max_{j< k} \Ltwo{\dpert(j)}\right),
\end{split}
\end{equation}
where the relations are due to (\ref{eq:simul x z equal 1}),  \eqref{eq:bis}, and (\ref{eq:simul bound pert 1}), respectively. 
This completes the proof of Part~(a).

For Part~(b), consider, from Lemma~\ref{lem:simul}, a pair $\pert_1(\cdot)$ and $\pert_2(\cdot)$ of perturbations and a corresponding pair $\ptraj_1(\cdot)$ and $\ptraj_2(\cdot)$  of perturbed trajectories
such that for $i=1,2$,
\begin{equation}\label{eq:simul x z equal 2}
\ptraj_i(k)=\dtraj_i(k), \qquad \forall k\in\Z_+,
\end{equation}
\begin{equation} \label{eq:simul bound pert 2}
\pert_i(t) \,= \, \dpert_i\big(\tf\big) \, - \, \big(t-\tf\big) \ddrift_i\big(\tf\big),\qquad \forall t\in\R^n.
\end{equation}
Then, for any $k\in\Z_+$,
\begin{equation}\label{eq:proof of th disc first case bound 2}
\begin{split}
\Ltwo{\dtraj_1(k)-\dtraj_2(k)} \,&= \, \Ltwo{\ptraj_1(k)-\ptraj_2(k)}\\
&\le \, \divconst\, \sup_{t\le k} \Ltwo{ \pert_1(t) - \pert_2(t)}\\
&\le \,  \divconst\, \left(\max_{j< k} \Ltwo{ \ddrift_1(j) - \ddrift_2(j)} \,+ \,\max_{j< k} \Ltwo{\dpert_1(j)-\dpert_2(j)}\right),
\end{split}
\end{equation}
where the relations are due to (\ref{eq:simul x z equal 2}), \eqref{eq:bpis}, and (\ref{eq:simul bound pert 2}), respectively. 
This completes the proof of the theorem.
\end{proof}

\medskip
\section{\bf Discussion}\label{sec:discussion}
We studied boundedness of sensitivity to cumulative perturbations for some classes of dynamical systems of interest,
a property that, when holds, provides strong tools to analyze systems driven by stochastic noise.
We derived a necessary and sufficient condition for bounded sensitivity  of a linear dynamical system, in terms of its spectrum.  
More specifically, we showed that a linear system has bounded sensitivity  if and only if it is stable and has no periodic orbits.

Another class we studied was the class of dynamical systems driven by the gradients of a convex potential function. We showed that there exist subgradient fields of strictly convex as well as piecewise linear convex functions that have unbounded sensitivity.
This result is particularly important because it certifies the necessity of ``finiteness'' assumption in a former result (Theorem~\ref{th:p1 sensitivity}), 
according to which the subgradient field of a piecewise linear convex function with finitely many pieces has bounded sensitivity.

We also studied transformations of a dynamical system that preserve the bounded sensitivity property.
In particular, we showed for a dynamical system with bounded sensitivity that a similar property holds for its induced discrete time systems, spread systems, and the systems obtained via convolution with a kernel.

In the rest of this section, we discuss open problems and directions for future research.	

\begin{itemize}[label=$\bullet$]
\item 
Study of  sensitivity bounds to cumulative perturbations for other classes of dynamical systems: 
Since the sensitivity  bounds of types \eqref{eq:bis} and \eqref{eq:bpis} involve pretty strong inequalities, we do not expect that every class of systems can meet these bounds. 
However, when a system meets these bounds, it enables to make fine-grained analyses its perturbed dynamics. 
Therefore, it would be useful to investigate the bounded sensitivity property for classes of systems of practical importance.
For example, it is interesting to establish whether or not the class of nonexpansive piecewise linear dynamical systems (with all pieces being SOF) has bounded sensitivity.
This class generalizes the class of FPCS systems as well as the class of SOF linear systems.

\item 
Combination of two systems:
Besides fining more transformations of a system that preserve bounded sensitivity, an interesting research direction concerns bounded sensitivity of combination of two or several systems.
For example, for two systems with bounded sensitivity, does their pointwise summation  still have bounded sensitivity?
The answer is already known to be negative \oli{I have counterexamples}. 
However, the pointwise sum of systems with specific structures might yield bounded sensitivity. A prominent example is the sum of an FPCS system and a linear system. 
These combinations include the systems that underlie the gradient flows of the LASSO cost function, i.e., $\ltwo{x}_1 +\Ltwo{Ax-b}_2$.

Another direction involves preservation of bounded sensitivity under other means of combination of two dynamical systems, more general than a pointwise sum.
An interesting combination involves dynamical systems  over disjoint domains, that are glued together in a nonexpansive manner.

\item
Strong vs weak bounded sensitivity: 
As mentioned before, a sensitivity bound of type \eqref{eq:bpis} implies the sensitivity bound of type \eqref{eq:bis}.
The reverse however is not known.
It remains open to obtain the conditions  under which a dynamical system with a bound of type \eqref{eq:bis} has bounded sensitivity also in the sense of \eqref{eq:bpis}.
As a concrete example, 
Theorem~\ref{th:p1 sensitivity} shows that FPCS systems have bounded sensitivity in the sense of  \eqref{eq:bis}. 
It would be interesting if one could prove or disprove analogues results in the strong sensitivity sense of \eqref{eq:bpis}.

\item
Bounded domain:
We considered dynamical systems defined over the entire $\R^n$.
This however is not the case in many applications such as queueing  networks where the state space (i.e., queue lengths) is restricted to the positive orthant. 
Boundedness of the domain gives rise to boundary conditions like projecting the ``escaping trajectories''  back onto the domain, which typically further complicate the dynamics. 
In particular, one can investigate what types of boundary conditions will preserve bounded sensitivity, once the domain is restricted.

\hide{
This is however not the case in many applications such as communication networks where the state space (i.e., queue lengths) is a subset of $\R^n$ with non-negative coordinates.  
In this case, existence of boundary conditions further complicates the problem. 
Let $\Omega$ be the intersection of a finite number of half-spaces. 
For any point $x\in \Omega$ and any vector $v\in\R^n$, denote by $\pi^{\partial\Omega(x)}(v)$ the projection of $v$ on the conical hull of $\Omega-x$. 
Define the projected unperturbed and perturbed trajectories as:
\begin{equation}
\dot{x}(t) \in \pi^{\partial\Omega(x)}\big(F(x)\big),
\end{equation}
and \comment{see 'projected dynamical systems' in the literature}
\begin{equation}
\frac{d}{dt}\ptraj(t) \in \pi^{\partial\Omega(\ptraj)}\Big(F\big(\ptraj(t)\big) + \pert(t)\Big),
\end{equation}
respectively. We can ask the same question of  sensitivity for these restricted dynamical systems, e.g., whether or not Theorem \ref{th:main cont}  holds for FPC subgradient dynamical systems  restricted to $\Omega$. 
}

\item
Convolution by a kernel:
We showed in Corollary~\ref{th:kernel} that a system with bounded sensitivity, when convolved with a kernel, still satisfies a weaker notion of bounded sensitivity that incorporates additive penalties. 
However, it remains open that under what conditions on the kernel, the latter system would have bounded sensitivity in the sense of \eqref{eq:bis}, with no additive penalties.

\oli{ One more problem (I think this one should be easy.)
\\
Subgradient systems:
It was shown in \cite{AlTG19sensitivity} that if a subgradient system is piecewise constant with finitely many pieces, then the system has bounded sensitivity.
It would be interesting if one could establish the bounded sensitivity of subgradient systems under other practical conditions.
For example, the systems in Examples~\ref{ex:subdiff dont work} and \ref{ex:pwc with infinite pieces} both suffer from unbounded gradients as $z$ goes to infinity.
However, assuming a uniform bound on the gradients, it remains open to prove/disprove  bounded sensitivity of subgradient fields.
}

\end{itemize}


\medskip
\section*{Acknowledgment}
The authors thank John N. Tsitsiklis for fruitful discussions and insightful comments throughout  the course of development of this work. 

\section*{Conflict of Interest}
The authors declare that they have no conflict of interest.

\bibliography{schbib}
\bibliographystyle{ieeetr}

\newpage

\appendix

\section{\bf Proof of Lemma \ref{lem:properties of f}}  \label{app:proof wellformedness of f}
For Part (a), consider function $h:\R^4 \to \R$, with 
\begin{equation} 
h\big(f,r,\phi,z\big) = f\,+\, z \,-\, \frac1f\, \ln\Big(\cosh \big(fr\sin(f-\phi)\big)\Big)\, -\,  \frac{r^2}{1+f}, 
\end{equation}
for all $f>0$, and all $r,\phi,z\in\R$.
Then, $f\big(r,\phi,z\big)$ in \eqref{eq:book f (smooth)} is a solution of the implicit equation $h\big(f,r,\phi,z\big)=0$.
For any $\big(r,\phi,z\big)\in\Omega$,
\begin{equation} \label{eq:h1<0}
\begin{split}
h\big(1,r,\phi,z\big) \,&=\, 1\,+\, z  \,-\, \ln\Big(\cosh \big(r\sin(1-\phi)\big)\Big) \,-\, r^2  \,\le\, 1\,+\, z  \,\le\, 0
\end{split}
\end{equation} 
and 
\begin{equation}\label{eq:h inf=inf} 
\begin{split}
\lim_{f\to\infty} h\big(f,r,\phi,z\big) \,&=\, z \,+\, \lim_{f\to\infty}\left(f\,-\, \frac1f\, \ln\Big(\cosh \big(fr\sin(f-\phi)\big)\Big) \right) \\
& \ge \, z \,+\, \lim_{f\to\infty}\left(f\,-\, \frac{fr\sin(f-\phi)}f \right) \\
&=\, \infty,
\end{split}
\end{equation}
where the last inequality is because $\ln\big(\cosh(x)\big) \le |x|$, for all $x\in \R$.
Then, for any  $\big(r,\phi,z\big)\in\Omega$, there is an $f\ge1$ for which $h\big(f,r,\phi,z\big)=0$. 
We now prove the uniqueness of this $f$, by showing that for any  fixed  $\big(r,\phi,z\big)\in\Omega$, $h\big(f,r,\phi,z\big)$ is a strictly increasing function in $f$.

We have
\begin{equation} \label{eq:derivative of h wrt f is posetive}
\begin{split}
\frac{\partial}{\partial f}\,h\big(f,r,\phi,z\big) \, &=\, 1\,+\, \frac1{f^2}\, \ln\Big(\cosh \big(fr\sin(f-\phi)\big)\Big)\\
&\qquad -\,  \frac{r\sin(f-\phi) + fr\cos(f-\phi)}{f}\,\tanh\big(fr\sin(f-\phi)\big)     \,+\, \frac{r^2}{\big(1+f\big)^2}  \\
&\ge\, 1\,-\, \frac{r + fr}{f}\,\tanh\big(fr\big) \\
&=\, 1\,-\, \frac{r \,\tanh\big(fr\big)}f  \,-\,r\, \tanh\big(fr\big) \\
&\ge\, 1-r^2-r \\
&>\,0.
\end{split}
\end{equation}
where the the first inequality is by removing the positive terms and the trigonometric functions, the second inequality is  because $\tanh(x) \le x$ and $\tanh(x) \le 1$, for $x>0$, and the last inequality is from $\big(r,\phi,z\big)\in\Omega$.
Then, $h$ is a strictly increasing function in its first argument. Together with \eqref{eq:h1<0} and \eqref{eq:h inf=inf}  this implies that  for any  fixed  $\big(r,\phi,z\big)\in\Omega$, there is a unique $f$ that satisfies  $h\big(f,r,\phi,z\big)=0$.
This completes the proof of Part~(a).

\hide{
\begin{clm}\label{claim:inequalities of 1/4}
\begin{itemize}
\item[a)] For any $x\ge 0$,
\begin{equation}
\frac{\big(1/4\big)^{1/(1+x)}}{(1+x)^2} \,\le\, \frac{1}{2}
\end{equation}

\item[b)] For any $x\in [0,1]$,
\begin{equation}
(1+x)\, \big(1/4\big)^{1+x} \,\le\, \frac{3}{8}
\end{equation}
\end{itemize}
\end{clm}
\begin{proof}[Proof of Claim]
For Part (a), for $x\ge 1$,
$$ \frac{\big(1/4\big)^{1/(1+x)}}{(1+x)^2}  \,\le\,   \frac{\big(1/4\big)^{1/(1+\infty)}}{(1+1)^2}    \,=\, \frac{1}{4},  $$
and for $x\in [0,1]$,
$$ \frac{\big(1/4\big)^{1/(1+x)}}{(1+x)^2}  \,\le\,   \frac{\big(1/4\big)^{1/(1+1)}}{(1+0)^2}    \,=\, \frac{1}{2} .  $$
For Part (b), for $x\in \big[1/2 ,1\big]$,
$$ (1+x)\, \big(1/4\big)^{1+x} \,\le\,   (1+1)\, \big(1/4\big)^{1+1/2}   \,=\,   \frac{1}{4},  $$
and  for $x\in \big[ 0, 1/2 \big]$,
$$ (1+x)\, \big(1/4\big)^{1+x} \,\le\,   \big(1+1/2\big)\, \big(1/4\big)^{1+0}   \,=\,   \frac{3}{8}.  $$
\end{proof}

We now compute the derivative of $h$ and show that it is positive:
\begin{equation*}
\begin{split}
\frac{dh}{df}  \, &= \, 1 \,- \, \Bigg[ \mathrm{sgn}\big(\sin(f-\phi)\big)\, \left(1+ \frac{1}{1+f} \right) \, r\,\cos\big(f-\phi\big)\, \big\lvert r\sin\big(f-\phi\big) \big\rvert^{1/(1+f)}  \\
& \qquad\qquad - \,\frac{\big\lvert r\sin\big(f-\phi\big) \big\rvert^{1+1/(1+f)} \,\log \big\lvert r\sin\big(f-\phi\big) \big\rvert}{\big(1+f\big)^2} 
 \Bigg]
\, + \frac{r^2}{(1+f)^2}\\
& \ge\, 1 \,-\, \left(1+ \frac{1}{1+f} \right) \, r^{1+1/(1+f)}  \,-\, \frac{r^{1/(1+f)}}{\big( 1+f \big)^2} \, + 0  \\
& \ge\, 1 \,-\, \frac12 \, -\frac{3}{8} \\
& >0.
\end{split}
\end{equation*}
where the first inequality is because  $\big\lvert r\sin\big(f-\phi\big)\,\log \big\lvert r\sin\big(f-\phi\big)\ge -1$,  and the second  inequality is due to Claim~\ref{claim:inequalities of 1/4}.
Then, $h$ is an strictly increasing function of $f$ and its zero-crossing is unique.
This completes the proof of Part~(a).
}

Part (b) is immediate from the definition of $f$ in  \eqref{eq:book f (smooth)}. 
For Part (c), 
note that $\frac{d^2}{dx^2}\ln\cos (x) = 1/\cosh(x)^2\ge 0$, and $\ln\cosh (\cdot) $ is thereby a convex function.
Then, the surface of each level-set of $f$, given in \eqref{eq:surf eq}, is the sum of a convex function, $z\big(r,\phi\big) = -f\,+\, \, \ln\Big(\cosh \big(fr\sin(f-\phi)\big)\Big)\,/f$, and a strictly convex function, $z\big(r,\phi\big) =r^2/(1+f)$. 
Hence, the surface, $z$, of each level-set is a strictly convex function. 
Then, for any pair of points $p_1,p_2\in \Omega$  with $f(p_1)=f(p_2)=a$, the line segment connecting $p_1$ and $p_2$ lies above the level set $f = a$. 
Equivalently, for any $\alpha\in(0,1)$, $f\big(\alpha p_1+(1-\alpha)p_2\big) > a $.
Thus, $f$ is strictly quasi-convex.
Moreover,  each level-set of $f$ is the intersection of $\Omega$ with  a  surface of type \eqref{eq:surf eq}, and is thereby  compact. 
For smoothness, note that $f$ is a solution of the implicit equation $h\big(f,r,\phi,z\big)=0$, where $h$ is smooth and, from \eqref{eq:derivative of h wrt f is posetive}, $\partial h/\partial f >0$.
Then, it follows from the ``implicit function theorem'' for smooth functions \cite{Shahshah16} (Theorem 12 of Appendix B) that $f(\cdot)$ is smooth.

\medskip


\section{\bf Proof of Lemma~\ref{lem:simul spread}}\label{app:proof spread}
By the definition of a perturbed trajectory,
\begin{equation}\label{eq:the drift and int eq}
\begin{split}
\ptraj(t) \,&=\, \int_{0}^t \drift(\tau)\,d\tau \, + \, \pert(t), \quad \forall t\ge0,\\
\drift(t) \,&\in\, \tilde{F}_\epsilon\big(\ptraj(t)\big), \quad \forall t\ge0.
\end{split}
\end{equation}
The term $\int_{0}^t \drift(\tau)\,d\tau$ in the equality, is a continuous function of $t$ and $\pert(\cdot)$ is a right-continuous function. 
Then, $\ptraj(\cdot)$ is right continuous.

In the proof that follows, we use a \emph{transfinite recursion} \cite{HrbaJ99} to partition $\R$ into a number of time intervals $[t_i, t_{i+1})$.
Let $\ord$ be the collection of all \emph{ordinal numbers}  \cite{HrbaJ99}.
Consider the sequence $t_i$ defined by the following transfinite recursion:
\begin{itemize}[label=$\bullet$]
\item Base case: 
$t_0=0$.

\item Successor case:
For any successor ordinal $\alpha$, let 
\begin{equation}\label{eq:tf t suc}
\begin{split}
t_\alpha\, = \, &\min\bigg( t_{\alpha-1}+ \frac{\delta}{\lip\,\big(\ltwo{\ptraj(t_{\alpha-1})} +\epsilon+\delta\big)+1}\, ,\\ 
&\qquad\qquad \sup\left\{t\in\R\,\Big|\, \ptraj(\tau) \in \ball_\delta\big(\ptraj({t_{\alpha-1}})\big)  ,\,\forall \tau\in[t_{\alpha-1},t) \right\}    \bigg),
\end{split}
\end{equation}
where $\lip$ is the constant in \eqref{eq:constant not to blow up}. 

\item Limit case: For any limit ordinal $\alpha$, let
\begin{equation}\label{eq:tf t lim}
t_\alpha\, = \,\sup\big\{ t_\beta \, \big|\, \beta \in\ord,\, \beta<\alpha \big\}.
\end{equation}

\item Termination: If $t_\alpha=\infty$, halt and let $\alphamax=\alpha$.
\end{itemize}

\begin{clm} \label{claim:alpha max exists}
The ordinal $\alphamax$ exists and is a limit ordinal. 
Moreover, the intervals $[t_\alpha, t_{\alpha+1})$, $\alpha<\alphamax$, cover  $\R_+$.
\end{clm}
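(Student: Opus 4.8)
The plan is to argue by transfinite induction that the recursion is well-defined, strictly increasing, and must terminate, and then to read off the covering property.

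First I would check that the recursion produces a well-defined, non-decreasing, and (as long as it has not terminated) strictly increasing transfinite sequence. At a successor step, the second argument of the $\min$ in \eqref{eq:tf t suc} is strictly larger than $t_{\alpha-1}$: since $\ptraj(\cdot)$ is right-continuous (as established at the start of the proof) and $\ptraj(t_{\alpha-1})\in\ball_\delta\big(\ptraj(t_{\alpha-1})\big)$ with $\delta>0$, the trajectory stays in the open-in-time set $\ball_\delta(\ptraj(t_{\alpha-1}))$ for a positive duration. The first argument of the $\min$ is $t_{\alpha-1}$ plus a strictly positive quantity (the denominator is at least $1$ and the numerator is $\delta>0$). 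Hence $t_\alpha>t_{\alpha-1}$ whenever $t_{\alpha-1}<\infty$. At a limit step, $t_\alpha$ is defined as a supremum of the previous values, so monotonicity is preserved; and $t_\alpha\ge t_\beta$ for all $\beta<\alpha$ by construction.

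Next I would show the recursion terminates, i.e.\ that $\alphamax$ exists. The standard set-theoretic argument: the map $\alpha\mapsto t_\alpha$ is, as long as no $t_\alpha=\infty$, a strictly increasing function from an initial segment of $\ord$ into $\R$. A strictly increasing function from the ordinals into $\R$ cannot have domain all of $\ord$, since $\R$ is a set while $\ord$ is a proper class (Hartogs / Burali-Forti); more concretely, a strictly increasing real-valued sequence indexed by ordinals can have at most countably many terms, because each $t_\alpha$ with $\alpha$ a successor can be separated from its predecessor by a rational, injecting the successor ordinals below into $\Q$. Either way, the recursion cannot run through all of $\ord$, so it must hit the termination clause $t_\alpha=\infty$ for some $\alpha$; the least such $\alpha$ is $\alphamax$. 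Moreover $\alphamax$ must be a limit ordinal: if it were a successor $\alpha=\beta+1$, then $t_\beta<\infty$ (as $\beta<\alphamax$), and \eqref{eq:tf t suc} gives $t_\alpha\le t_\beta+\frac{\delta}{\lip(\ltwo{\ptraj(t_\beta)}+\epsilon+\delta)+1}<\infty$, contradicting $t_{\alphamax}=\infty$. Hence $\alphamax$ is a limit ordinal, and by \eqref{eq:tf t lim}, $t_{\alphamax}=\sup\{t_\beta\mid\beta<\alphamax\}=\infty$.

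Finally, the covering property. For any $t\in\R_+$, since $\sup_{\beta<\alphamax}t_\beta=\infty$, there is a least ordinal $\alpha<\alphamax$ with $t<t_{\alpha}$ (such an $\alpha$ exists and cannot be $0$ since $t_0=0\le t$, and it cannot be a limit ordinal, because at a limit ordinal $t_\alpha$ is a supremum of earlier values so $t<t_\alpha$ would already force $t<t_\beta$ for some $\beta<\alpha$, contradicting minimality); writing $\alpha=\gamma+1$, minimality of $\alpha$ gives $t_\gamma\le t<t_{\gamma+1}$, so $t\in[t_\gamma,t_{\gamma+1})$ with $\gamma<\alphamax$. Thus the intervals $[t_\alpha,t_{\alpha+1})$ for $\alpha<\alphamax$ cover $\R_+$, and by monotonicity they are pairwise disjoint, so they partition $\R_+$.

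The main obstacle is making the termination argument fully rigorous within the paper's framework: one must be careful that the two-clause $\min$ in \eqref{eq:tf t suc} genuinely advances by a uniformly positive amount over any bounded sub-interval (so that only countably many steps fit before reaching any finite time), which is exactly what the denominator $\lip(\ltwo{\ptraj(t_{\alpha-1})}+\epsilon+\delta)+1$ — bounded on bounded sets via the growth bound \eqref{eq:constant not to blow up} — is designed to guarantee; the limit-case behaviour then needs the right-continuity of $\ptraj$ to ensure the supremum clause does not stall.
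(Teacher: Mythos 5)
Your proof is correct and follows essentially the same route as the paper's: strict increase at successor steps via right-continuity of $\ptraj$, a cardinality obstruction to a strictly increasing ordinal-indexed sequence of reals to force termination, the successor-case contradiction to show $\alphamax$ is a limit ordinal, and the covering from $t_{\alphamax}=\sup_{\beta<\alphamax}t_\beta=\infty$. Your termination argument (injecting successor steps into $\Q$, or Hartogs/Burali-Forti) is in fact a cleaner variant of the paper's appeal to the ordinal $2^{2^{\aleph_0}}$, and your least-ordinal argument for the covering is more explicit than the paper's one-line chain of equalities; the "uniformly positive step size" concern in your closing paragraph is not actually needed, since strict monotonicity alone suffices for the cardinality argument.
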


\oli{why did we use transfinite recursion instead of a simple recursion/induction on the natural numbers?
Answer: In that case, $t_i$'s might have ended up converging to some $T<\infty$, and the intervals  not covering $\R$.}

\begin{proof}[Proof of  Claim \ref{claim:alpha max exists}]
It follows  from \eqref{eq:tf t suc} and the right-continuity of $\ptraj(\cdot)$ that for any successor ordinal $\alpha$, $t_\alpha>t_{\alpha-1}$.
Together with \eqref{eq:tf t lim}, this implies that for any  ordinal $\alpha<\alphamax$, and any ordinal $\beta<\alpha$, $t_\beta<t_{\alpha}$.
Then, all values of $t_\alpha$ are distinct, and the length of the sequence $t_\alpha$, $\alpha\le \alphamax$, can be no larger than the cardinality of $\R$, i.e., $2^{\aleph_0}$.

Assuming the ``axiom of choice'', the ``Von-Neumann's cardinal assignment'' \cite{Mosc06} implies that $2^{2^{\aleph_0}}$ equals some ordinal number $\beta$.
Then, since $2^{\aleph_0} < 2^{2^{\aleph_0}}= \beta$, the transfinite recursion defining $t_\alpha$ must terminate for some value of $\alpha<\beta$.
Hence, $\alphamax$ exists and is less than $\beta$.
Moreover, $\alphamax$ cannot be a successor ordinal, because in that case, $t_{\alphamax-1}<\infty$ and \eqref{eq:tf t suc} would have implied that $t_\alphamax<\infty$.

For the second part of the claim, note that
\begin{equation}
\bigcup_{\alpha<\alphamax} \big[t_{\alpha},t_{\alpha+1}\big) 
\, =\, \sup_{\alpha<\alphamax} \big[t_{0},t_{\alpha}\big) 
\, =\,  \big[t_{0},t_{\alphamax}\big) 
\, =\,  [0,\infty) 
\, =\,  \R_+,
\end{equation}
and the claim follows.
\end{proof}

We continue by defining the perturbation $\pert'(\cdot)$ and its corresponding perturbed  trajectory $\yt(\cdot)$.
Fix some $\alpha<\alphamax$.
It follows from \eqref{eq:tf t suc} that for any $t\in [t_\alpha,t_{\alpha+1})$, $\Ltwo{\ptraj(t)-\ptraj(t_\alpha)}\le\delta$.
Then, \eqref{eq:the drift and int eq} implies that for any $t\in [t_\alpha,t_{\alpha+1})$,
\begin{equation}
\drift(t)\, \in\, \tilde{F}_\epsilon\big(\ptraj(t)\big)
\, \subseteq\, \tilde{F}_{\epsilon+\delta}\big(\ptraj(t_\alpha)\big).
\end{equation}
Therefore, 
\begin{equation}
\begin{split}
\frac{1}{t_{\alpha+1}-t_\alpha} \,\int_{t_\alpha}^{t_{\alpha+1}} \drift(t)\,dt 
&\,\in\, \conv\Big( \tilde{F}_{\epsilon+\delta}\big(\ptraj(t_\alpha)\big)  \Big) \\
&\, =\, \tilde{F}_{\epsilon+\delta}\big(\ptraj(t_\alpha)\big)\\
&\, =\, \conv\big\{ \drift \, \big| \, \drift\in F(y), \, y\in \ball_{\epsilon+\delta} (x) \big\}.
\end{split}
\end{equation}
Then, from the ``Caratheodory's theorem'' \cite{Rock96}, there exist $n+1$ number, $\drift_1^{\alpha},\ldots, \drift_{n+1}^{\alpha}$, of vectors in $\tilde{F}_{\epsilon+\delta}\big(\ptraj(t_\alpha)\big)$, and non-negative constants $\theta_1^\alpha,\ldots  ,\theta_{n+1}^\alpha$ with $\theta_1^\alpha+ \cdots +\theta_{n+1}^\alpha=1$ such that 
\begin{equation}\label{eq:int xi equals convex comb of z}
\frac{1}{t_{\alpha+1}-t_\alpha} \,\int_{t_\alpha}^{t_{\alpha+1}} \drift(t)\,dt 
\, = \, \sum_{i=1}^{n+1} \theta_i^\alpha\, \drift_i^\alpha.
\end{equation}
For any $i\le n+1$, let $z_i^\alpha$ be a point in $\ball_{\epsilon+\delta}\big(\ptraj(t_\alpha)\big)$ such that $\drift_i^\alpha  \in F(z_i^\alpha)$.
Also let $T_0^\alpha=t_\alpha$, and for any $i\le n+1$, 
\begin{equation}\label{eq:def Ti}
t_i^\alpha \,= \, (t_{\alpha+1}-t_\alpha)\sum_{j=1}^{i} \theta_i^\alpha.
\end{equation}

We now define the functions $\pert',\yt',\drift':\R_+\to\R^n$ as follows.
For any $\alpha\le \alphamax$, let
\begin{equation}\label{eq:def of xi , u yt for talpha}
\begin{split}
\pert'(t_\alpha)&=\pert(t_\alpha),\\ 
\yt(t_\alpha)&=\ptraj(t_\alpha),\\
\drift'(t_\alpha) & = \drift(t_\alpha).
\end{split}
\end{equation}
For any $\alpha\le \alphamax$, any $i\le n$, and any $t\in[T_{i-1}^\alpha,T_i^\alpha)$ excluding $t=t_\alpha$, let
\begin{equation}\label{eq:def yt, u, xi for non talpha}   
\begin{split}
\yt(t)  &\,=\, z_i^\alpha,\\
\drift'(t) &\, =\, \drift_i^\alpha,\\
\pert'(t)&\,=\,\pert(t_\alpha) \,+\, \big(z_i^\alpha -\ptraj(t_\alpha)\big)\, -  \, \int_{t_\alpha}^t \drift'(\tau)\,d\tau.
\end{split}
\end{equation}
Then, for any $t\ge 0$,
\begin{equation}\label{eq:drift' is ni F y tilde}
\drift'(t)\in F\big(\yt(t)\big).
\end{equation}

In the reset of the proof, we will show that \eqref{eq:yt near ptraj} and \eqref{eq:pp close to pert} hold, and that $\yt(\cdot)$ is a perturbed trajectory of $F(\cdot)$  corresponding to perturbation $\pert'(\cdot)$.

Since $z_i^\alpha\in\ball_{\epsilon+\delta}\big(\ptraj(t_\alpha)\big)$, for all $\alpha<\alphamax$ and all $i\le n+1$, it follows that
for any $t\in [T_{i-1}^\alpha,T_i^\alpha)$, 
\begin{equation}
\Ltwo{\yt(t)-\ptraj(t)} \, = \, \Ltwo{z_i^\alpha-\ptraj(t)}  \, \le\,    \Ltwo{z_i^\alpha-\ptraj(t_\alpha)}   +   \Ltwo{\ptraj(t_\alpha)-\ptraj(t)} \,\le\, (\epsilon+\delta) + \delta,
\end{equation} 
and \eqref{eq:yt near ptraj} is satisfies by a proper choice of $\delta$.
Moreover, it follows from \eqref{eq:tf t suc} and \eqref{eq:constant not to blow up} that for any $\alpha<\alphamax$ and any $i\le n+1$,
\begin{equation}
\begin{split}
\ltwo{(t_{\alpha+1}-t_\alpha)\,\drift_i^\alpha} &\, \le\, \frac{\delta}{\lip\,\big(\ltwo{\ptraj(t_{\alpha})}+\epsilon+\delta\big)+1}\, \ltwo{\drift_i^\alpha} \\
&\, \le\, \frac{\delta}{\lip\,\big(\ltwo{\ptraj(t_{\alpha})}+\epsilon+\delta\big)+1}\, \lip\,\ltwo{z_i^\alpha}\\
&\, \le\, \frac{\delta}{\lip\,\big(\ltwo{\ptraj(t_{\alpha})}+\epsilon+\delta\big)+1}\, \lip\,\big(\ltwo{\ptraj(t_{\alpha})}+\epsilon+\delta \big)\\
&\,<\delta.
\end{split}
\end{equation}
Then, for any $\alpha<\alphamax$, and $i\le n+1$, and  any $t\in [t_\alpha,t_{\alpha+1})$,
\begin{equation} \label{eq:inequality for size of u'}
\begin{split}
\Ltwo{\pert'(t)}\,&\le\, \Ltwo{\pert(t_\alpha)} \,+\, \Ltwo{z_i^\alpha -\ptraj(t_\alpha)}\, + \, \Ltwo{ \int_{t_\alpha}^t\drift'(\tau)\,d\tau},\\
&\le \Ltwo{\pert(t_\alpha)} + (\epsilon+\delta) + \delta.
\end{split}
\end{equation}
Therefore, 
\begin{equation} \label{eq:bound on the size of U' by the size of U}
\sup_{\tau \le t} \Ltwo{ \pp(\tau)} \,\le \, \sup_{\tau \le t}\Ltwo{ \pert(\tau)} \,+\, \epsilon \,+\, 2\delta,
\end{equation}
and \eqref{eq:pp close to pert 2} follows by a proper choice of $\delta$.

It only remains to show that $\yt(\cdot)$ is a perturbed trajectory of $F(\cdot)$  corresponding to perturbation $\pert'(\cdot)$.
\begin{clm}\label{claim:yt is a pert traj of u'}
\begin{itemize}
\item[a)]
For any $\alpha< \alphamax$,
\begin{equation}\label{eq:yt is traj transinite ind 1}
\yt(t_\alpha) \,=\, \yt(0) \,+\, \int_{0}^{t_\alpha} \drift'(\tau) \,d\tau\, + \pert'(t_\alpha).
\end{equation}

\item[b)]
For any $\alpha< \alphamax$, and any $t\in [t_\alpha, t_{\alpha+1})$,
\begin{equation}\label{eq:yt is traj transinite ind 2}
\yt(t) \,=\, \yt(t_\alpha) \,+\, \int_{t_\alpha}^t \drift'(\tau) \,d\tau\, + \pert'(t) - \pert'(t_\alpha).
\end{equation}
\end{itemize}
\end{clm}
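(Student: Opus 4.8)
The plan is to establish both identities by unwinding the definitions \eqref{eq:def of xi , u yt for talpha}--\eqref{eq:def yt, u, xi for non talpha}; part~(b) is pure bookkeeping, and part~(a) needs in addition a single integral identity combined with the covering property from Claim~\ref{claim:alpha max exists}. No transfinite induction is actually needed for the identities themselves: the transfinite partition enters only through additivity of the Lebesgue integral.

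For part~(b), fix $\alpha<\alphamax$ and $t\in[t_\alpha,t_{\alpha+1})$. If $t=t_\alpha$, both sides of \eqref{eq:yt is traj transinite ind 2} collapse to $\yt(t_\alpha)$. Otherwise $t$ lies in a unique block $[T_{i-1}^\alpha,T_i^\alpha)$, and I would substitute $\yt(t)=z_i^\alpha$, $\drift'(t)=\drift_i^\alpha$, and $\pert'(t)=\pert(t_\alpha)+\big(z_i^\alpha-\ptraj(t_\alpha)\big)-\int_{t_\alpha}^t\drift'(\tau)\,d\tau$ from \eqref{eq:def yt, u, xi for non talpha}, together with $\yt(t_\alpha)=\ptraj(t_\alpha)$ and $\pert'(t_\alpha)=\pert(t_\alpha)$ from \eqref{eq:def of xi , u yt for talpha}. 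The right-hand side of \eqref{eq:yt is traj transinite ind 2} then telescopes to $z_i^\alpha=\yt(t)$.

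For part~(a), the key point is that $\drift'$ was built so that its integral over each block $[t_\alpha,t_{\alpha+1})$ coincides with that of $\drift$: since $\drift'\equiv\drift_i^\alpha$ on the $i$-th block, whose length is $(t_{\alpha+1}-t_\alpha)\theta_i^\alpha$ by \eqref{eq:def Ti}, and by \eqref{eq:int xi equals convex comb of z},
\begin{equation*}
\int_{t_\alpha}^{t_{\alpha+1}}\drift'(\tau)\,d\tau\,=\,(t_{\alpha+1}-t_\alpha)\sum_{i=1}^{n+1}\theta_i^\alpha\,\drift_i^\alpha\,=\,\int_{t_\alpha}^{t_{\alpha+1}}\drift(\tau)\,d\tau .
\end{equation*}
By Claim~\ref{claim:alpha max exists}, for a fixed $\alpha<\alphamax$ the intervals $[t_\beta,t_{\beta+1})$ with $\beta<\alpha$ are pairwise disjoint and partition $[0,t_\alpha)$; summing the displayed identity over these blocks (all but countably many of them are null, so this is ordinary countable additivity of the integral) gives $\int_0^{t_\alpha}\drift'(\tau)\,d\tau=\int_0^{t_\alpha}\drift(\tau)\,d\tau$. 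Then, using $\yt(t_\alpha)=\ptraj(t_\alpha)$, $\pert'(t_\alpha)=\pert(t_\alpha)$, and $\yt(0)=\ptraj(0)$ from \eqref{eq:def of xi , u yt for talpha}, the defining relation \eqref{eq:the drift and int eq} for $\ptraj$, and $\ptraj(0)=\pert(0)=0$, I would conclude
\begin{equation*}
\yt(t_\alpha)\,=\,\ptraj(t_\alpha)\,=\,\int_0^{t_\alpha}\drift(\tau)\,d\tau+\pert(t_\alpha)\,=\,\yt(0)+\int_0^{t_\alpha}\drift'(\tau)\,d\tau+\pert'(t_\alpha),
\end{equation*}
which is \eqref{eq:yt is traj transinite ind 1}.

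The only delicate step I anticipate is this passage from the block-wise integral identity to the global one over $[0,t_\alpha)$: the blocks are indexed by an a priori uncountable set of ordinals, so one must first invoke that a bounded interval contains at most countably many pairwise disjoint subintervals of positive length (the rest contributing nothing to either integral) before applying countable additivity, and one must use the covering statement of Claim~\ref{claim:alpha max exists} to know these blocks exhaust $[0,t_\alpha)$. Everything else is substitution. With Claim~\ref{claim:yt is a pert traj of u'} in hand, combining \eqref{eq:yt is traj transinite ind 1} and \eqref{eq:yt is traj transinite ind 2} over all $\alpha<\alphamax$ together with \eqref{eq:drift' is ni F y tilde} shows that $\yt(\cdot)$ is a perturbed trajectory of $F$ with perturbation $\pert'(\cdot)$; this, together with the bounds \eqref{eq:yt near ptraj} and \eqref{eq:bound on the size of U' by the size of U} already established, completes the proof of Lemma~\ref{lem:simul spread}.
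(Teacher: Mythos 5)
Your proposal is correct, and part (b) is the same direct substitution as the paper's. For part (a), however, you take a genuinely different route. The paper proves \eqref{eq:yt is traj transinite ind 1} by transfinite induction on $\alpha$, with a base case, a successor case (which uses the block identity \eqref{eq:integral of xi is equal to xi'}), and a separate limit case in which one shows that $\int_{t_\beta}^{t_\alpha}\big(\drift'(\tau)-\drift(\tau)\big)\,d\tau$ is independent of $\beta<\alpha$ and, since the integrand is bounded by \eqref{eq:sup xi bounded} and $t_\beta\uparrow t_\alpha$, must vanish. You instead observe that the blocks $[t_\beta,t_{\beta+1})$, $\beta<\alpha$, partition $[0,t_\alpha)$, that each has strictly positive length (this is exactly what the proof of Claim~\ref{claim:alpha max exists} establishes from right-continuity of $\ptraj$), and hence that only countably many of them fit inside the bounded interval $[0,t_\alpha)$; ordinary countable additivity then turns the block-wise identity $\int_{t_\beta}^{t_{\beta+1}}\drift'=\int_{t_\beta}^{t_{\beta+1}}\drift$ into the global one, and the conclusion follows by substitution. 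This is cleaner than the paper's argument: it eliminates the limit-case analysis entirely and, as a by-product, shows each $\alpha<\alphamax$ is a countable ordinal. Two small points to make explicit if you write this up: (i) countable additivity needs $\drift'$ to be integrable on $[0,t_\alpha]$, which holds because $\drift'$ is a step function with countably many pieces, bounded via \eqref{eq:constant not to blow up} as in \eqref{eq:sup xi bounded} (the paper checks this explicitly); (ii) your final display uses $\ptraj(0)=\pert(0)=0$, a convention the paper also relies on implicitly in its base case $\yt(0)=\yt(0)+\pert'(0)$, so you are on equal footing there.
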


\begin{proof}[Proof of Claim~\ref{claim:yt is a pert traj of u'}]
We first show that $\drift'(\cdot)$ is Lebesgue integrable. 
Since $\drift(\cdot)$ is a right-continuous  function of time, it is measurable.
Moreover, \eqref{eq:constant not to blow up} implies that for any $t\ge 0$, 
\begin{equation}\label{eq:sup xi bounded}
\sup_{\tau\in [0,T]} \Ltwo{\drift'(\tau)} \, \le\, \sup_{\tau\in [0,T]}   \lip\,\Ltwo{\yt(\tau)}
\,\le\, \sup_{\tau\in [0,T]}   \lip\,\big(\Ltwo{\ptraj(\tau)} + \epsilon+2\delta\big)
\,\le\,  \lip\,\big(e^{ct}  \Ltwo{0}  + \epsilon+2\delta\big)
\, <\,\infty.
\end{equation}
Therefore, $\drift'$ has finite integral over every bounded interval.
On the other hand, for any $\alpha<\alphamax$,
\begin{equation} \label{eq:integral of xi is equal to xi'}
\int_{t_\alpha}^{t_{\alpha+1}} \drift'(\tau)\, d\tau 
\, =\, (t_{\alpha+1}-t_{\alpha})\, \sum_{i=1}^{n+1} \theta_i^\alpha z_i^\alpha
\, =\, \int_{t_\alpha}^{t_{\alpha+1}} \drift(\tau)\, d\tau ,
\end{equation}
where the equalities are due to the definition of $\drift'$ and \eqref{eq:int xi equals convex comb of z}, respectively.

We now prove \eqref{eq:yt is traj transinite ind 1} via a transfinite induction \cite{HrbaJ99} on $\alpha\in\ord$.
\begin{itemize}[label=$\bullet$]
\item Base case:
$\yt(0) =\yt(0)+\pert'(0)$.

\item Induction step for the successor case: 
Consider a successor ordinal number $\alpha<\alphamax$, and suppose that \eqref{eq:yt is traj transinite ind 1}
holds for $\alpha-1$.
Then,
\begin{equation}
\begin{split}
\yt(0) \,+\, \int_{0}^{t_\alpha} \drift'(\tau) \,d\tau\, + \pert'(t_\alpha) \,
& =\, \left[\yt(0) \,+\, \int_{0}^{t_{\alpha-1}} \drift'(\tau) \,d\tau\, + \pert'(t_{\alpha-1}) \, \right]  \\
&\quad \,+\,
\int_{t_{\alpha-1}}^{t_\alpha} \drift'(\tau) \,d\tau \, +\, \big[\pert'(t_\alpha) - \pert'(t_{\alpha-1})\big] \\
& =\, \yt(t_{\alpha-1}) \,+\, \int_{t_{\alpha-1}}^{t_\alpha} \drift'(\tau) \,d\tau \, +\, \big[\pert'(t_\alpha) - \pert'(t_{\alpha-1})\big] \\
& =\, \ptraj(t_{\alpha-1})  \,+\, \int_{t_{\alpha-1}}^{t_\alpha} \drift(\tau) \,d\tau \, +\, \big[\pert(t_\alpha) - \pert(t_{\alpha-1})\big] \\
& =\,  \ptraj(t_{\alpha})\\
& =\,  \yt(t_{\alpha}).
\end{split}
\end{equation}
where the second equality is due to the induction hypothesis, the third equality is from \eqref{eq:def of xi , u yt for talpha} and \eqref{eq:integral of xi is equal to xi'}, the fourth equality is because $\ptraj(\cdot)$ is a perturbed trajectory corresponding to $\pert(\cdot)$, and the last equality is again from \eqref{eq:def of xi , u yt for talpha}.

\item Induction step for the limit case: 
Consider a limit ordinal number $\alpha<\alphamax$, and suppose that \eqref{eq:yt is traj transinite ind 1}
holds for  all ordinals $\beta< \alpha$. 
Then, for any $\beta< \alpha$,
\begin{equation} \label{eq:limit case for being traj}
\begin{split}
\yt(0) \,+\, \int_{0}^{t_\alpha} \drift'(\tau) \,d\tau\, + \pert'(t_\alpha) \,
& =\, \left[\yt(0) \,+\, \int_{0}^{t_{\beta}} \drift'(\tau) \,d\tau\, + \pert'(t_{\beta}) \, \right]   \\
&\quad \,+\,
\int_{t_{\beta}}^{t_\alpha} \drift'(\tau) \,d\tau \, +\, \big[\pert'(t_\alpha) - \pert'(t_{\beta})\big] \\
& =\, \yt(t_{\beta}) \,+\, \int_{t_{\beta}}^{t_\alpha} \drift'(\tau) \,d\tau \, +\, \big[\pert'(t_\alpha) - \pert'(t_{\beta})\big] \\
& =\, \ptraj(t_{\beta})  \,+\, \int_{t_{\beta}}^{t_\alpha} \drift(\tau) \,d\tau \, +\, \big[\pert(t_\alpha) - \pert(t_{\beta})\big] \\
&\quad \, +\,  \int_{t_{\beta}}^{t_\alpha} \big(\drift'(\tau)-\drift(\tau)\big) \,d\tau\\
& =\,  \ptraj(t_{\alpha}) \, +\,  \int_{t_{\beta}}^{t_\alpha} \big(\drift'(\tau)-\drift(\tau)\big) \,d\tau\\
& =\,  \yt(t_{\alpha}) \, +\,  \int_{t_{\beta}}^{t_\alpha} \big(\drift'(\tau)-\drift(\tau)\big) \,d\tau.
\end{split}
\end{equation}
where the second equality is due to the induction hypothesis, the third equality is from  \eqref{eq:def of xi , u yt for talpha} and \eqref{eq:integral of xi is equal to xi'}, the fourth equality is because $\ptraj(\cdot)$ is a perturbed trajectory corresponding to $\pert(\cdot)$, and the last equality is again from \eqref{eq:def of xi , u yt for talpha}.
As a result, $\int_{t_{\beta}}^{t_\alpha} \big(\drift'(\tau)-\drift(\tau)\big) \,d\tau$ is independent of choice of $\beta$.

It follows from \eqref{eq:sup xi bounded} that $\Ltwo{\drift'(t)-\drift(t)}$ is bounded for $t\le t_\alpha$.
Moreover, since $\alpha$ is a limit ordinal, from the definition \eqref{eq:tf t lim}, the sequence  $t_\beta$, for $\beta<\alpha$,  converges to $t_\alpha$ from below.
Then, $\Ltwo{\int_{t_{\beta}}^{t_\alpha} \big(\drift'(\tau)-\drift(\tau)\big) \,d\tau}$ becomes arbitrarily small for proper vaues of $\beta<\alpha$. Therefore, $\int_{t_{\beta}}^{t_\alpha} \big(\drift'(\tau)-\drift(\tau)\big) \,d\tau=0$, and \eqref{eq:yt is traj transinite ind 1} follows from \eqref{eq:limit case for being traj}.
\end{itemize}
This completes the proof of Part (a).

\medskip

For Part (b), recall the constants $t_i^\alpha$, $i=1,\ldots,n+1$, defined in \eqref{eq:def Ti}.
Then, for any  $i\le n+1$ and any $t\in [t_{i-1}^\alpha,t_i^\alpha)$,
\begin{equation}
\begin{split}
\yt(t_\alpha) \,+\, \int_{t_\alpha}^t\drift'(\tau)\,d\tau \,+\, \pert'(t) - \pert'(t_\alpha) 
&\, =\,  \yt(t_\alpha) \,+\, \int_{t_\alpha}^t\drift'(\tau)\,d\tau \,+\, \big(z_i^\alpha -\ptraj(t_\alpha)\big) \,-\, \int_{t_\alpha}^t\drift'(\tau)\,d\tau\\
& \, =\,  z_i^\alpha \\
& \, =\,  \yt(t) ,
\end{split}
\end{equation}
where the first and the last equalities are due to the definition \eqref{eq:def yt, u, xi for non talpha}. 
This completes the proof of the claim.
\end{proof}

Then, it follows from Parts (a) and (b) of claim \ref{claim:yt is a pert traj of u'}  that for any $t\ge 0$,
\begin{equation}\label{eq:yt is traj transinite ind 3}
\yt(t) \,=\, \yt(0) \,+\, \int_{0}^{t} \drift'(\tau) \,d\tau\, + \pert'(t).
\end{equation}
Together with \eqref{eq:drift' is ni F y tilde}, this implies that $\yt(\cdot)$ is a perturbed trajectory of $F$  corresponding to perturbation $\pert'$.

We finally note that $U'$ is right continuous everywhere, except for the times $t_\alpha$.
To satisfy right continuity also at times $t_\alpha$, we modify the definitions of $U'$, $\tilde{y}$, and $\xi'$, by eliminating \eqref{eq:def of xi , u yt for talpha} and considering \eqref{eq:def yt, u, xi for non talpha} also at $t=t_\alpha$. 
It is straightforward to see that this modification does not impact any of the integrals, and \eqref{eq:bound on the size of U' by the size of U} and  \eqref{eq:yt is traj transinite ind 3} would still be valid.
This completes the proof of Part (a) of the Lemma.

\medskip

The proof of  Part (b) is similar to the proof of Part (a). The only difference is  the choice of $t_\alpha$ in for successor ordinals $\alpha$ in the transfinite recursion. 
Here, we replace \eqref{eq:tf t suc} with
\begin{equation}
\begin{split}
t_\alpha\, = \, &\min\bigg( t_{\alpha-1}+ \frac{\delta}{\lip\,\big(\ltwo{\ptraj_1(t_{\alpha-1})} +\ltwo{\ptraj_2(t_{\alpha-1})}+\epsilon+\delta\big)+1}\, ,\\ 
&\qquad\qquad \sup\left\{t\in\R\,\Big|\, \ptraj_i(\tau) \in \ball_\delta\big(\ptraj_i({t_{\alpha-1}})\big)  ,\,\forall \tau\in[t_{\alpha-1},t),\, i=1,2 \right\}    \bigg).
\end{split}
\end{equation}
The only point here is that we use the same sequence $t_\alpha$ for both trajectories.
Then, instead of \eqref{eq:inequality for size of u'} we can write
\begin{equation} 
\begin{split}
\Ltwo{\pert'_2(t) -\pert'_1(t) }\,&\le\, \Ltwo{\pert_1(t_\alpha) - \pert_2(t_\alpha)} \,+\, \Ltwo{z_{i,1}^\alpha -\ptraj(t_\alpha)}\, + \, \Ltwo{z_{i,2}^\alpha -\ptraj(t_\alpha)} \\
&\quad\, + \, \Ltwo{ \int_{t_\alpha}^t\drift'_1(\tau)\,d\tau}\,+\, \Ltwo{ \int_{t_\alpha}^t\drift'_2(\tau)\,d\tau},\\
&\le\, \Ltwo{\pert_1(t_\alpha) - \pert_2(t_\alpha)} + 2(\epsilon+\delta) + 2\delta\\
& = \, \Ltwo{\pert_1(t_\alpha) - \pert_2(t_\alpha)} + 2\epsilon + 4\delta.
\end{split}
\end{equation}
This implies \eqref{eq:pp close to pert 2} for a proper choice of $\delta$, and completes the proof of Lemma~\ref{lem:simul spread}.

\medskip
\section{\bf Proof of Lemma~\ref{lem:completeness of linear} }  \label{app:proof sol linear}
First note that, by definition, if $y(\cdot)$ is  a perturbed trajectory, then for any $t\ge0$
\begin{equation} \label{eq:pert sol def for  linear proof}
y(t) \,-\, \int_0^t Ay(\tau)\,d\tau \,=\, y(0) \,+\, U(t) .
\end{equation}
By multiplying both sides with $e^{-At}$ and integration, we get  
\begin{equation}\label{eq:integral formula for y linear}
\int_0^t e^{-A\tau} y(\tau)\,d\tau \,-\, \int_0^t A e^{-A\tau}\int_0^\tau y(s)\,ds\,d\tau \,=\, y(0) \, \int_0^t e^{-A\tau}\,d\tau \, +\,\int_0^t e^{-A\tau} U(\tau) d\tau.
\end{equation}
From integration by parts (e.g., Theorem 12.5 in \cite{Gord94}),
\begin{equation*}
\int_0^t e^{-A\tau} y(\tau)\,d\tau \,-\, \int_0^t A e^{-A\tau}\int_0^\tau y(s)\,ds\,d\tau \,=\, e^{-At}\int_0^t y(\tau)\,d\tau.
\end{equation*}
Plugging this into the left hand side of \eqref{eq:integral formula for y linear}, we obtain 
\begin{equation}\label{integral formula for x linear}
\begin{split}
\int_0^t y(\tau)\,d\tau &\,=\, y(0) \, e^{At} \int_0^t e^{-A\tau}\,d\tau \, +\,e^{At} \int_0^t e^{-A\tau} U(\tau) d\tau\\
&\,=\, y(0) \, \int_0^t e^{A\tau}\,d\tau \, +\,e^{At} \int_0^t e^{-A\tau} U(\tau) d\tau.
\end{split}
\end{equation}

We now show that $x(\cdot)$ defined in \eqref{eq:closed form formul for pert traj of linear sys} is a solution of \eqref{integral formula for x linear}. We have
\begin{equation}
\begin{split}
\int_0^t x(\tau)\,d\tau \,&=\, x(0) \int_0^t  e^{A\tau}\,d\tau  \,+\, \int_0^t U(\tau)\,d\tau \,+\, \int_0^t A e^{A\tau} \int_{0}^\tau e^{-As}U(s)\,ds\,d\tau\\
&=\, x(0) \int_0^t  e^{A\tau}\,d\tau  \,+\, \int_0^t U(\tau)\,d\tau \,+\, \int_0^t \left(\int_{s}^t A e^{A\tau} \,d\tau\right)\, e^{-As}U(s)\,ds\\
&=\, x(0) \int_0^t  e^{A\tau}\,d\tau  \,+\, \int_0^t U(\tau)\,d\tau \,+\, \int_0^t \big( e^{At} -e^{As}\big)\, e^{-As}U(s)\,ds\\
&=\, x(0)  \int_0^t e^{A\tau}\,d\tau \, +\,e^{At} \int_0^t e^{-A\tau} U(\tau) d\tau.
\end{split}
\end{equation}
Then, $x(\cdot)$ satisfies \eqref{integral formula for x linear}. 
Moreover, for any other solution $x'(\cdot)$ of \eqref{integral formula for x linear}, $\int_0^t \big(x'(\tau)-x(\tau)\big)=0$, for all $t\ge0$. Then, $x'(\cdot)$ equals $x(\cdot)$, almost everywhere.
Therefore, for any solution $y(\cdot)$ of \eqref{eq:pert sol def for  linear proof},
\begin{equation} 
\begin{split}
y(t) &\,=\, \int_0^t Ay(\tau)\,d\tau \,+\, y(0) \,+\, U(t) \\
&\,= \, \int_0^t Ax(\tau)\,d\tau \,+\, y(0) \,+\, U(t) \\
&\,= \, x(t).
\end{split}
\end{equation}
Therefore, $x(\cdot)$ is the unique perturbed trajectory corresponding to the perturbation function $U(\cdot)$, and the lemma follows.

\end{document}